\newtheorem{theorem}{Theorem}[section]
\newtheorem*{theorem*}{Theorem}
\newtheorem*{proposition*}{Proposition}
\newtheorem{lemma}[theorem]{Lemma}
\newtheorem*{lemma*}{Lemma}
\newtheorem{corollary}[theorem]{Corollary}
\newtheorem*{conjecture*}{Conjecture}
\newtheorem*{fact*}{Fact}
\newtheorem*{hypothesis*}{Hypothesis}
\theoremstyle{definition}
\newtheorem{definition}[theorem]{Definition}
\newtheorem*{definition*}{Definition}
\theoremstyle{remark}
\newtheorem{claim}[theorem]{Claim}
\newtheorem*{claim*}{Claim}
\newtheorem{remark}[theorem]{Remark}
\newtheorem*{remark*}{Remark}
\newtheorem*{observation*}{Observation}
\let\mathbb\varmathbb
\crefname{lemma}{Lemma}{Lemmas}
\crefname{fact}{Fact}{Facts}
\crefname{theorem}{Theorem}{Theorems}
\crefname{corollary}{Corollary}{Corollaries}
\crefname{claim}{Claim}{Claims}
\crefname{example}{Example}{Examples}
\crefname{algorithm}{Algorithm}{Algorithms}
\crefname{problem}{Problem}{Problems}
\crefname{definition}{Definition}{Definitions}
\newcommand{\Authornotecolored}[3]{}
\newcommand{\Authorcomment}[2]{}
\newcommand{\Authorfnote}[2]{}
\definecolor{forestgreen(traditional)}{rgb}{0.0, 0.27, 0.13}
\newcommand{\set}[1]{\{#1\}}
\newcommand{\norm}[1]{\lVert#1\rVert}
\newcommand{\Psymb}{\mathbb{P}}
\DeclareMathOperator*{\ProbOp}{\Psymb}
\renewcommand{\Pr}{\ProbOp}
\newcommand\bdot\bullet
\newcommand{\N}{\mathbb N}
\newcommand{\R}{\mathbb R}
\newcommand{\cG}{\mathcal G}
\renewcommand{\leq}{\leqslant}
\renewcommand{\le}{\leqslant}
\renewcommand{\geq}{\geqslant}
\renewcommand{\ge}{\geqslant}
\let\epsilon=\varepsilon
\numberwithin{equation}{section}
\newcommand\MYcurrentlabel{xxx}
\newcommand{\MYstore}[2]{%
  \global\expandafter \def \csname MYMEMORY #1 \endcsname{#2}%
}
\newcommand{\MYload}[1]{%
  \csname MYMEMORY #1 \endcsname%
}
\newcommand{\MYnewlabel}[1]{%
  \renewcommand\MYcurrentlabel{#1}%
  \MYoldlabel{#1}%
}
\newcommand{\MYdummylabel}[1]{}
\newcommand{\torestate}[1]{%
  % overwrite label command
  \let\MYoldlabel\label%
  \let\label\MYnewlabel%
  #1%
  \MYstore{\MYcurrentlabel}{#1}%
  % restore old label command
  \let\label\MYoldlabel%
}
\newcommand{\restatetheorem}[1]{%
  % overwrite label command with dummy
  \let\MYoldlabel\label
  \let\label\MYdummylabel
  \begin{theorem*}[Restatement of \cref{#1}]
    \MYload{#1}
  \end{theorem*}
  \let\label\MYoldlabel
}
\newcommand{\restatelemma}[1]{%
  % overwrite label command with dummy
  \let\MYoldlabel\label
  \let\label\MYdummylabel
  \begin{lemma*}[Restatement of \cref{#1}]
    \MYload{#1}
  \end{lemma*}
  \let\label\MYoldlabel
}
\newcommand{\restateprop}[1]{%
  % overwrite label command with dummy
  \let\MYoldlabel\label
  \let\label\MYdummylabel
  \begin{proposition*}[Restatement of \cref{#1}]
    \MYload{#1}
  \end{proposition*}
  \let\label\MYoldlabel
}
\newcommand{\restatefact}[1]{%
  % overwrite label command with dummy
  \let\MYoldlabel\label
  \let\label\MYdummylabel
  \begin{fact*}[Restatement of \prettyref{#1}]
    \MYload{#1}
  \end{fact*}
  \let\label\MYoldlabel
}
\newcommand{\restate}[1]{%
  % overwrite label command with dummy
  \let\MYoldlabel\label
  \let\label\MYdummylabel
  \MYload{#1}
  \let\label\MYoldlabel
}
\newcommand{\eps}{\epsilon}
\newcommand*{\zo}{\set{0,1}}
\providecommand{\algorithmname}{Algorithm}
\newtheorem{thmx}{Theorem}
\newtheorem{lemmax}{Lemma}
\newtheorem{definitionx}{Definition}
\def\XX{ X }
\def\AA{ A }
\def\N{{\mathbb {N}}}
\def\Reals{{\mathbb {R}}}
\newcommand{\Ex}{\mathop{\bf E\/}}
\def\R{{\mathcal {R}}}
\def\B{{\mathcal {B}}}
\def\L{{\mathcal {L}}}
\def\M{{\mathcal {M}}}
\def\b{\textbf{b}}
\def\P{{\mathbb {P}}}
\newcommand{\inner}[1]{\langle#1\rangle}
\def\Sk{[n]^{(k)}}
\newcommand{\mm}{\tilde{m}}
\def\one{{\mathbbm {1}}}
\def\a{\bar{a}}
\def\pari{Xor}
\newcommand{\F}{\mathbb{F}}
\title{Time-Space Tradeoffs for Distinguishing Distributions and Applications to Security of Goldreich's PRG}
\author{Sumegha Garg\thanks{sumeghag@cs.princeton.edu. Department of Computer Science, Princeton University.}
\and
Pravesh K. Kothari%
\thanks{kotpravesh@gmail.com. Computer Science Department, Carnegie Mellon University. Part of this work was completed when Pravesh was at Princeton University
and Institute for Advanced Study, Princeton.
%Research supported by the Simons Collaboration on Algorithms and Geometry and by the National Science Foundation grant No. CCF-1412958.
}
\and
Ran Raz%
\thanks{ran.raz.mail@gmail.com. Department of Computer Science, Princeton University. Research supported by the Simons Collaboration on Algorithms and Geometry, by a Simons Investigator Award and by the National Science Foundation grant No. CCF-1714779.
%Research supported by the Simons Collaboration on Algorithms and Geometry and by the National Science Foundation grant No. CCF-1412958.
}
%\and
%Avishay Tal\thanks{
%Department of Computer Science, Stanford University, Stanford, CA.
%Institute for Advanced Study, Princeton, NJ.
    %This work was done when the author was a member at the Institute for Advanced Study, Princeton, NJ. Research supported by the Simons Collaboration on Algorithms and Geometry and by the National Science Foundation grant No. CCF-1412958.
   % }
}
\date{}
\begin{document}
\maketitle
\begin{abstract}
In this work, we establish lower-bounds against memory bounded algorithms for distinguishing between natural pairs of related distributions from samples that arrive in a streaming setting. 

Our first result applies to the problem of distinguishing the uniform distribution on $\zo^n$ from uniform distribution on some unknown linear subspace of $\zo^n$. As a specific corollary, we show that any algorithm that distinguishes between uniform distribution on $\zo^n$ and uniform distribution on an $n/2$-dimensional linear subspace of $\{0,1\}^n$ with non-negligible advantage needs $2^{\Omega(n)}$ samples or $\Omega(n^2)$ memory (tight up to constants in the exponent). 

Our second result applies to distinguishing outputs of Goldreich's \emph{local} pseudorandom generator from the uniform distribution on the output domain. Specifically, Goldreich's pseudorandom generator $\cG$ fixes a predicate $P:\zo^k \rightarrow \zo$ and a collection of subsets $S_1, S_2, \ldots, S_m \subseteq [n]$ of size $k$. For any seed $x \in \zo^n$, it outputs $P(x_{S_1}), P(x_{S_2}), \ldots, P(x_{S_m})$ where $x_{S_i}$ is the projection of $x$ to the coordinates in $S_i$. We prove that whenever $P$ is $t$-resilient (all non-zero Fourier coefficients of $(-1)^P$ are of degree $t$ or higher), then no algorithm, with $<n^\epsilon$ memory, can distinguish the output of $\cG$ from the uniform distribution on $\zo^m$ with a large inverse polynomial advantage, for stretch $m \le \left(\frac{n}{t}\right)^{\frac{(1-\epsilon)}{36}\cdot t}$ (barring some restrictions on $k$). The lower bound holds in the streaming model where at each time step $i$, $S_i\subseteq [n]$ is a randomly chosen (ordered) subset of size $k$ and the distinguisher sees either $P(x_{S_i})$ or a uniformly random bit along with $S_i$. 

An important implication of our second result is the security of Goldreich's generator with super linear stretch (\emph{in the streaming model}), against memory-bounded adversaries, whenever the predicate $P$ satisfies the \emph{necessary} condition of t-resiliency identified in various prior works. 

Our proof builds on the recently developed machinery for proving time-space trade-offs (Raz 2016 and follow-ups). Our key technical contribution is to adapt this machinery to work for \emph{distinguishing} problems in contrast to prior works on similar results for \emph{search}/learning problems.

\end{abstract}
\clearpage

\section{Introduction}
\label{sec:intro}

This work is motivated by the following basic question: suppose an algorithm is provided with a stream of $m$ i.i.d. samples from a random source. What's the minimum memory required to decide whether the source is ``truly random''  or ``pseudorandom''?

Algorithmically distinguishing perfect randomness from pseudorandomness naturally arises in the context of learning theory (and can even be equivalent to learning in certain models~\cite{DBLP:conf/stoc/Daniely16,DBLP:conf/colt/DanielyS16,DBLP:conf/colt/Vadhan17,DBLP:conf/innovations/KothariL18}), pseudorandomness and cryptography.

There has been a surge of progress in proving lower bounds for memory-bounded streaming algorithms beginning with Shamir~\cite{DBLP:conf/nips/Shamir14} and Steinhardt-Valiant-Wager~\cite{DBLP:conf/colt/SteinhardtVW16} who conjectured a $\Omega(n^2)$ memory lower bound for learning parity functions with $2^{o(n)}$ samples. This conjecture was proven in~\cite{Raz16}. In a follow up work, this was generalized to learning sparse parities in \cite{KRT17} and more general learning problems in~\cite{Raz17,GRT18,DBLP:conf/colt/MoshkovitzM17,BGY18,DS,MT,MM2,SSV19,GRT19}. 

All of these lower bounds hold for learning (more generally, \emph{search}) problems that ask to identify an unknown member of a target function class from samples. In this work, we build on the progress above and develop techniques to show lower bounds for apparently easier task of simply distinguishing uniformly distributed samples from pseudorandom ones. \cite{dgkr19} studies the related problem of distribution testing under communication and memory constraints. \cite{dgkr19} gave a one-pass streaming algorithm (and a matching lower bound for a broad range of parameters) for uniformity testing on $[N]$ that uses $m$ memory and $O(N\log(N)/(m\eps^4))$ samples for distinguishing between uniform distribution on $[N]$ and any distribution that is $\eps$-far from uniform. 

As we next discuss, our results have consequences of interest in cryptography (ruling out  memory-bounded attacks on Goldreich's pseudorandom generator~\cite{Goldreich00} in the streaming model) and average-case complexity (unconditional lower bounds on the number of samples needed, for memory-bounded algorithms, to refute random constraint satisfaction problems, in the streaming model).

\subsection{Our Results}
We now describe our results in more detail. Our main results show memory-sample trade-offs for distinguishing between truly random and pseudorandom sources for the following two settings: 

\begin{enumerate}
\item \textbf{Uniform vs $k$-Subspace Source}: The pseudorandom subspace source of dimension $k$ chooses some arbitrary $k$-dimensional linear subspace $S \subseteq \zo^n$ and draws points uniformly from $S$. The truly random source draws points uniformly from $\zo^n$. 

\item \textbf{Uniform vs Local Pseudorandom Source}: The pseudorandom source fixes a $k$-ary Boolean predicate $P:\zo^k \rightarrow \zo$. It chooses a uniformly random $x \in \zo^n$ and generates samples $(\alpha,b) \in [n]^{(k)} \times \zo$ where $[n]^{(k)}$ represents the set of all ordered $k$-tuples with exactly $k$ elements from $[n]$ and $\alpha$ is chosen uniformly at random from $[n]^{(k)}$ and $b$ is the evaluation of $P$ at $x^{
\alpha}$ - the $k$-bit string obtained by projecting $x$ onto the coordinates indicated by $\alpha$. The truly random source generates samples $(\alpha,b)$ where $\alpha \in [n]^{(k)}$ and $b \in \zo$ are chosen uniformly and independently. 
\end{enumerate} 

We model our algorithm by a \emph{read-once branching program} (ROBP) of width $2^b$ (or memory $b$) and length $m$. Such a model captures any algorithm that takes as input a stream of $m$ samples and has a memory of at most $b$ bits. Observe that there's no  restriction on the computation done at any node of an ROBP.

Roughly speaking, this model gives the algorithm unbounded computational power and bounds only its memory-size and the number of samples used.

Our first main result shows a lower bound on memory-bounded ROBPs for distinguishing between uniform and $k$-subspace sources.
\begin{theorem}[Uniform vs Subspace Sources]\label{thm:introsubspace}
Any algorithm that distinguishes between uniform and subspace source of dimension $k$ (assuming $k>c\log n$ for some large enough constant $c$) with probability at least $1/2 +2^{-o(k)}$ requires either a memory of $\Omega(k^2)$ or at least $2^{\Omega(k)}$ samples. In particular, distinguishing between the uniform distribution on $\zo^n$ and the uniform distribution on an unkown linear subspace of dimension $n/2$ in $\zo^n$ requires $\Omega(n^2)$ memory or $2^{\Omega(n)}$ samples. 
\end{theorem}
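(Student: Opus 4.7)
The plan is to extend the read-once branching program (ROBP) time-space tradeoff framework of Raz (2016), originally developed for search problems, to handle distinguishing problems. First, I would model any putative distinguisher as an ROBP of length $m$ (the number of samples) and width $2^b$ (the memory), processing samples $x_1,\ldots,x_m \in \{0,1\}^n$. Under the alternative, samples are drawn uniformly from $\ker(H)$ for a uniformly random full-rank parity-check matrix $H \in \{0,1\}^{(n-k)\times n}$; under the null, samples are uniform on $\{0,1\}^n$. The central object is, for each vertex $v$ of the ROBP, the posterior distribution $\mu_v$ over $H$ conditioned on the ROBP reaching $v$ under the alternative.

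The next step is to identify the pseudorandomness matrix driving the argument. Let $M$ have rows indexed by parity-check matrices $H$ and columns by samples $x$, with $M[H,x] = \mathbf{1}[Hx=0] - 2^{k-n}$. A Fourier-analytic computation shows that $M$ has small operator norm: the events $\{x \in \ker(H)\}$ are weakly correlated across different $H$. This matrix plays the role of the sign matrix $(-1)^{a \cdot x + b}$ in Raz's parity lower bound. Using $M$ and the standard progress-lemma argument, I would show that for any ROBP of memory $o(k^2)$ and length $2^{o(k)}$, the posterior $\mu_v$ stays close to uniform (in a suitable $L_2$ sense over matrices) at all but a negligible fraction of the reached vertices.

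The main step and central technical challenge is converting this posterior non-concentration into an indistinguishability statement about acceptance probabilities. In search problems, a diffuse posterior trivially implies inability to output the correct secret; in distinguishing, the implication is substantially subtler. A diffuse $\mu_v$ may still induce a mixture next-sample distribution $\mathbb{E}_{H \sim \mu_v}[\mathrm{Unif}(\ker H)]$ that differs noticeably from $\mathrm{Unif}(\{0,1\}^n)$ — for example, if every $H$ in the support shares a common nontrivial parity. To handle this, I would introduce a strengthened potential that simultaneously tracks the concentration of $\mu_v$ and the $L_2$ distance between the induced next-sample distribution at $v$ and uniform, and show that the same spectral bound on $M$ controls the evolution of this combined potential across an ROBP step. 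A telescoping hybrid across the $m$ layers would then bound the total variation distance between the ROBP's output under the null and under the alternative by the sum of per-step statistical gaps, yielding the claimed $\Omega(k^2)$-memory or $2^{\Omega(k)}$-samples tradeoff. The hard part is designing the combined potential so that one ROBP step acts on it as a small-norm linear operator that $M$'s spectral bound can control — this is precisely the adaptation from search to distinguishing that the introduction highlights as the main technical contribution.
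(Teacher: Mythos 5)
Your proposal takes a fundamentally different route from the paper, and, as written, it has a genuine gap at its center.

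The paper does \emph{not} adapt the Raz potential argument directly to the distinguishing problem for this theorem. Instead, it gives a black-box reduction from subspace-distinguishing to parity \emph{learning} and then invokes the known $\Omega(n^2)$-memory-or-$2^{\Omega(n)}$-samples bound for parity learning (Theorem~A). Concretely: a telescoping argument over subspace dimensions shows that a distinguisher with advantage $s$ between uniform and $k$-dimensional subspaces must, for some $k < k' \le n$, distinguish random $(k'-1)$-dimensional subspaces from random $k'$-dimensional ones with advantage $\ge 2s/n$. Then a learner for parity over $\{0,1\}^{k'}$ is built as follows: to learn bit $x^i$, guess $g$, apply a uniformly random full-rank map $M$ to the transformed sample $(a^{-i},\, b + g a^i,\, 0, \ldots, 0)$, and feed the result to the distinguisher. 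When $g = x^i$ these transformed samples are uniform on a uniformly random $(k'-1)$-dimensional subspace; when $g \ne x^i$ they are uniform on a uniformly random $k'$-dimensional subspace. Majority vote over $\Theta(n^2\log n / s^2)$ trials recovers $x^i$ with high probability, and fixing the learner's coins gives a branching program whose width is only a $\mathrm{poly}(n)\cdot 2^{k'}$ blowup. The reduction entirely avoids having to re-run the progress/extractor argument in the distinguishing setting.

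The gap in your proposal is precisely the step you flag: you posit a ``strengthened potential'' that simultaneously tracks posterior concentration and the $L_2$ distance of the induced next-sample mixture from uniform, and that evolves under a branching-program step as a small-norm linear operator controlled by $M$'s spectrum. You do not construct this potential, and it is far from clear one exists with the required properties. Your own example — a diffuse posterior over parity-check matrices all sharing a common nontrivial parity — shows that a single spectral bound on $M$ cannot by itself rule out detectable per-step bias; something more structural is needed, and the proposal does not supply it. Two further issues: (i) in the paper's distinguishing problems the ``secret'' is a single vector $x \in \{0,1\}^n$, not an unknown subspace $S$ or parity-check matrix $H$, and the paper's Raz-style analyses are carried out over this small hidden variable, not over the exponentially larger space of subspaces your $\mu_v$ ranges over; conditioning on the vertex to control a posterior over matrices $H$ is a much weaker handle. (ii) The paper's \emph{direct} distinguishing argument (Section 5, for the local-PRG theorems) avoids any iterative potential over layers: it uses a one-shot hybrid step at a single critical layer $j'$, bounds the min-entropy of the posterior at that one layer by a width count, and invokes an extractor-type bound for a single next sample. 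That simpler structure is what makes the distinguishing variant tractable — but it only yields \emph{sublinear} ($n^\epsilon$) memory bounds, which is exactly why the paper uses the reduction, not the direct method, to get the $\Omega(k^2)$ bound in this theorem. So even if your combined potential could be built, you would also have to argue it recovers quadratic memory, which the direct distinguishing technique in this paper does not achieve.
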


Crouch et. al.~\cite{DBLP:conf/esa/CrouchMVW16} recently proved that any algorithm that uses at most $n/16$ bits of space requires $\Omega(2^{n/16})$ samples to distinguish between uniform source and a subspace source of dimension $k = n/2$. 
They suggest the question of improving the space bound to $\Omega(n^2)$ while noting that their techniques do not suffice. 
For $k = \Theta(n)$, our lower bound shows that any algorithm with memory at most $cn^2$ for some absolute constant $c$ requires $2^{\Omega(n)}$ samples. 
This resolves their question. 

\textbf{Upper bound}: In Section~\ref{sec:reduction}, we exhibit a simple explicit branching program that uses $2^{O(k)}$ samples and $O(1)$ memory to succeed in solving the distinguishing problem with probability $3/4$. We also show a simple algorithm that uses $O(k^2)$ memory and $O(k)$ samples, and succeeds in solving the distinguishing problem with probability $3/4$. \emph{Thus, in the branching program model, the lower bound is tight up to constants in the exponent}. \\

Our second main result gives a memory-sample trade-off for the uniform vs local pseudorandom source problem for all predicates that have a certain well-studied pseudorandom property studied in cryptography under the name of \emph{resilience}.

A $k$-ary Boolean function $P$ is said to be $t$\emph{-resilient} if $t$ is the maximum integer such that $(-1)^P$ (taking the range of the boolean function to be \{-1,1\}) has zero correlation with every parity function of at most $t-1$ out of $k$ bits. In particular, the parity function on $k$ bits is $k$-resilient.

\begin{theorem}[Uniform vs Local Pseudorandom Sources] \label{thm:local-arbit-predicate}
Let $0<\epsilon<1-3\frac{\log 24}{\log n}$ and $P$ be a $t$-resilient $k$-ary predicate for $k < n^{(1-\epsilon)/6}/3, n/c$\footnote{$c$ is a large enough constant}. Then, any ROBP that succeeds with probability at least $1/2 +\Omega(\left(\frac{t}{n}\right)^{\Omega(t \cdot (1-\epsilon))})$ at distinguishing between uniform and local pseudorandom source for predicate $P$, requires $\left(\frac{n}{t}\right)^{\Omega(t \cdot (1-\epsilon))}$ samples or $n^\eps$ memory. 
\end{theorem}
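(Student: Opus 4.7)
The plan is to follow the branching-program lower bound framework of Raz 2016 and its follow-ups, adapting it (as in the proof of \cref{thm:introsubspace}) from the search/learning setting to the distinguishing setting. A memory-$b$ ROBP of length $m$ is modeled as a layered DAG of width $2^b$; the goal is to show that if $b \le n^\epsilon$ and $m \le (n/t)^{O(t(1-\epsilon))}$, then the distributions induced on the final layer by the local pseudorandom source and by the uniform source are statistically close within $(t/n)^{\Omega(t(1-\epsilon))}$, which bounds the distinguishing advantage. As in the learning-theoretic predecessors, the argument will ultimately reduce the analysis of the ROBP to a combinatorial/Fourier estimate on products of parity characters associated with the samples; the distinguishing variant (inherited from the subspace argument) replaces the ``matrix with bounded $\ell_\infty$ singular values'' setup with an $L_2$-potential tracking likelihood ratios.

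Starting with Step 1, fix a seed $x \in \{0,1\}^n$. For a uniformly random $\alpha \in [n]^{(k)}$, the conditional character of the output bit $b$ under the PRG source is $\mathbb{E}[(-1)^b \mid x, \alpha] = (-1)^{P(x_\alpha)}$. Using $t$-resilience, $(-1)^{P(y)} = \sum_{|T| \ge t} \hat{P}(T)\,\chi_T(y)$, and pulling back to $x$ writes $(-1)^{P(x_\alpha)} = \sum_{|T|\ge t} \hat P(T) \chi_{\alpha(T)}(x)$ with $\alpha(T) = \{\alpha_i : i \in T\}$; so each sample contributes a random parity character of $x$ supported on a scattered subset of size at least $t$. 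In Step 2, the heart of the argument is the combinatorial bound: for any collection of $r \le (n/t)^{O(t(1-\epsilon))}$ random samples $\alpha_1,\ldots,\alpha_r$, any nonempty $J \subseteq [r]$, and any adversarial selection of $T_j$ from the Fourier support of $(-1)^P$ for each $j \in J$, the symmetric difference $\bigtriangleup_{j\in J}\alpha_j(T_j)$ has size at least $n^\epsilon$ except with probability $2^{-\Omega(t|J|(1-\epsilon))}$. A ``collapse'' to a low-degree character on $x$ forces pairing up the $\ge t|J|$ drawn coordinates, an event of probability at most $(k^2 r/n)^{t|J|/2}$; the constraint $k \le n^{(1-\epsilon)/6}/3$ makes this tiny for $r$ in the claimed range.

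In Step 3, given the bound from Step 2, one runs the $L_2$-potential analysis in its distinguishing incarnation (used in \cref{thm:introsubspace}). For each ROBP node $v$ at layer $i$, write $p_v^{\mathrm{PRG}}(x)$ for the probability of reaching $v$ conditioned on seed $x$ and $p_v^{\mathrm{unif}}$ for the probability under the uniform source, and track the potential
\[
\Phi_i \;=\; \sum_{v\,\in\,\text{layer } i} p_v^{\mathrm{unif}} \cdot \mathrm{Var}_x\!\bigl(p_v^{\mathrm{PRG}}(x)/p_v^{\mathrm{unif}}\bigr).
\]
Each transition re-weights the likelihood ratio by a factor whose Fourier expansion over $x$ is controlled by Step 2, so $\Phi_i$ grows by at most a factor $1 + 2^{-\Omega(n^\epsilon)}$ per layer; summing over $m \le (n/t)^{O(t(1-\epsilon))}$ layers keeps $\Phi_m \ll ((t/n)^{\Omega(t(1-\epsilon))})^2$, and Cauchy--Schwarz then converts this into the claimed bound on total variation distance, hence on the distinguishing advantage.

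The main obstacle is Step 2: obtaining a Fourier bound that is uniform over the adversary's joint choice of (i) the subset $J \subseteq [r]$ of samples to combine, (ii) the Fourier character $T_j$ drawn from the support of $\hat P$ at each $j \in J$, and (iii) the overlap structure of the random $k$-subsets $\alpha_j$. The combinatorics of collisions between random $k$-subsets is what forces the constraint $k < n^{(1-\epsilon)/6}/3$ and dictates both the stretch $m \le (n/t)^{O(t(1-\epsilon))}$ and the advantage $\Omega((t/n)^{\Omega(t(1-\epsilon))})$ in the statement. Wiring together this combinatorial estimate with the distinguishing-flavored $L_2$ potential (rather than the learning-flavored matrix-norm framework used in \cite{Raz16,KRT17,GRT18}) is the technical crux, and is the new contribution that the paper must build on top of its proof of \cref{thm:introsubspace}.
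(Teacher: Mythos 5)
Your proposal takes a genuinely different route from the paper's actual proof, and I don't think the route as sketched closes.

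The paper does not run the Raz-style $L_2$-potential machinery on the ROBP for this theorem. Instead it first applies a \emph{hybrid argument}: writing $H_j(x)$ for the distribution where the first $j$ samples are planted and the rest are null, the telescoping sum shows some index $j'$ at which the ROBP distinguishes $H_{j'-1}$ from $H_{j'}$ with advantage $\ge 2s/m$. This reduces everything to analyzing a \emph{single} sample. Then, at the vertex $v_1$ reached after $j'-1$ planted samples, the paper splits vertices into those reached with probability $\ge d^{-1}d_t^{-1}$ (set $L$) and the rest (which contribute at most $1/d_t$ to the advantage). For $v_1 \in L$, a direct counting argument shows the posterior $\P_{x\mid v_1}$ has min-entropy at least $n - n^\epsilon - \log d_t$. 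The crux is then Lemma 5.4 / Claim 5.5: for any distribution on $x$ with that much min-entropy, $\mathbb{E}_{a\in[n]^{(k)}}\bigl(\sum_{x'}\P_{x\mid v_1}(x')(-1)^{P(x'^a)}\bigr)^2$ is small, proved by expanding $(-1)^P$ in its Fourier basis (using $t$-resilience), applying Cauchy--Schwarz and Parseval, and invoking the KRT17 extractor-type bound (Lemma~B in the preliminaries) on each Hamming slice $T_l$, $l\ge t$. No potential is ever propagated across layers, no likelihood ratios are tracked, and no bound on collisions among \emph{multiple} random $k$-subsets is ever needed --- the hybrid argument makes the analysis purely per-sample.

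Your Step~2 combinatorial claim is the main gap. To feed into your Step~3 you would need it to hold uniformly over the adversary's choices of $J\subseteq[r]$ and $T_j$ from the Fourier support of $\hat P$, but the number of such choices is roughly $\binom{r}{|J|}2^{k|J|}$, and with $r = (n/t)^{O(t(1-\epsilon))}$ (so $\log r = \Omega(t(1-\epsilon)\log(n/t))$) this swamps the claimed $2^{-\Omega(t|J|(1-\epsilon))}$ probability; a naive union bound fails by a wide margin. You also conflate two different events: your pairing estimate $(k^2 r/n)^{t|J|/2}$ bounds the probability of a full collapse (empty symmetric difference), but what you actually claim is a bound on the symmetric difference falling below $n^\epsilon$, a much more permissive event, and the $n^\epsilon$ threshold there has no counterpart in the paper's argument. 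Finally, the bridge from Step~2 to the claimed per-layer growth factor $1 + 2^{-\Omega(n^\epsilon)}$ in Step~3 is asserted but not derived; the Raz-style potential recursion involves a careful case analysis over ``significant'' vs.\ ``insignificant'' states that you would need to redo from scratch in the distinguishing setting. The paper's hybrid-plus-min-entropy argument sidesteps all of these difficulties, which is exactly why it is adopted.
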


\textbf{Upper bound}: In Subsection \ref{section:tightness3}, we give an algorithm that takes $(n^{\eps}+k)k\log n$ memory and $(n^{(1-\eps)k})(n^\eps+k)$ samples, and distinguishes between uniform and local pseudorandom source for any predicate $P$, with probability $99/100$. Thus, the lower bounds are almost tight up to $\log n$ factors and constant factors in the exponent for certain predicates ($t=\Omega(k)$). The question of whether there exists a better algorithm that runs in $O(n^{(1-\eps)t})$ samples and $O(n^\eps)$ memory, and distinguishes between uniform and local pseudorandom source with high probability, for $t$-resilient predicates $P$, remains open.\\

This result has interesting implications for well-studied algorithmic questions in average-case complexity and cryptography such as refuting random constraint satisfaction~\cite{MR2121179-Feige02,MR3473335-Allen15,DBLP:conf/stoc/RaghavendraRS17,DBLP:conf/stoc/KothariMOW17}) and existence of local pseudorandom generators~\cite{CryanM01,MR2238029-Mossel06,DBLP:conf/eurocrypt/BarakBKK18,LombardiV17,Applebaum13,Applebaum16} with super linear stretch where a significant effort has focused on proving lower bounds on various restricted models such as propositional and algebraic proof systems, spectral methods, algebraic methods and semidefinite programming hierarchies. While bounded memory attacks are well-explored in cryptography~\cite{DBLP:journals/joc/Maurer92,DBLP:conf/crypto/CachinM97,DBLP:conf/crypto/AumannR99,DBLP:journals/tit/AumannDR02,DBLP:conf/crypto/Vadhan03,DBLP:conf/eurocrypt/DziembowskiM04,Raz16,VV}, to the best of our knowledge, memory has not been studied as explicit resource in this context. We discuss these applications further in the paper.

For the special case when $P(x) =\sum_{i}^k x^i \mod 2$, the parity function on $k$ bits, we can prove stronger results for a wider range of parameters. %(without loss of generality, we will interchangeably use the range of a boolean function to be $\{0,1\}$ or $\{-1,+1\}$). 

\begin{theorem}[Uniform vs Local Pseudorandom Sources with Parity Predicate]
Let $0<\epsilon<1-3\frac{\log 24}{\log n}$ and $P$ be the parity predicate on $k$ bits for $0<k<n/c$ ($c$ is a large enough constant). Suppose there's a ROBP that distinguishes between uniform and local pseudorandom source for the parity predicate, with probability at least $1/2 +s$ and uses $<n^{\epsilon}$ memory. If $s > \Omega\left(\left(\frac{k}{n}\right)^{\Omega((1-\eps)\cdot k)}\right)$, then, the ROBP requires $\left(\frac{n}{k}\right)^{(\Omega((1-\eps)\cdot k)}$ samples. 
\label{thm:local-parity}

\end{theorem}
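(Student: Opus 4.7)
The plan is to specialize and refine the framework of \cref{thm:local-arbit-predicate} to exploit the single-character structure of the parity predicate: $(-1)^{P(x_\alpha)} = \chi_\alpha(x)$ is a single Fourier character of the seed $x \in \zo^n$, rather than a weighted sum of characters as for a general $t$-resilient predicate. Thus each pseudorandom sample $(\alpha, b)$ amounts to observing a single $\mathbb{F}_2$-linear equation in $x$, and any distinguishing ROBP must, on the pseudorandom side, accumulate such equations despite the memory constraint.

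I would track the evolution of the posterior on $x$ across the levels of the ROBP. For each vertex $v$, let $\mu_v(x)$ denote the probability that the ROBP reaches $v$ under the pseudorandom source given seed $x$, and set $q_v = \mathbb{E}_x[\mu_v(x)]$. Since a single pseudorandom sample is marginally uniform on $[n]^{(k)} \times \zo$, $q_v$ equals the reaching probability under the uniform source, so all distinguishing advantage is encoded in the $L_2$-deviation potential $\phi_v \defeq q_v \cdot \|\mu_v/q_v - 1\|_2^2$, with the norm taken over uniform $x \in \zo^n$. A Cauchy--Schwarz argument yields $s^2 \leq \sum_{v \in V^+} \phi_v$ (with $V^+$ the accepting leaf set), so the task reduces to showing $\phi_v$ cannot accumulate significant mass in the final layer unless $m \geq (n/k)^{\Omega((1-\epsilon) k)}$.

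The main steps will be: (i) Fourier-expand $\mu_v(x)/q_v = 1 + \sum_{S \neq \emptyset} c_v(S)\,\chi_S(x)$; each supported $S$ arises as a symmetric difference of the $\alpha_i$'s observed on trajectories reaching $v$. (ii) Run a truncation-path argument in the spirit of \cite{Raz16} and its follow-ups: call a transition $v \to v'$ on sample $\alpha$ \emph{bad} when $\alpha$ combines with the existing Fourier support of $\mu_v/q_v$ to cause a large multiplicative jump in $\phi$. (iii) Bound the expected number of bad transitions using the memory constraint: at most $2^{n^\epsilon}$ posteriors are active per level, each with a controlled Fourier support, and a uniformly random $\alpha \in [n]^{(k)}$ is bad with probability at most (support size)$\,/\,|[n]^{(k)}|$. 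A union bound over the $m$ samples forces either $m \geq (n/k)^{\Omega((1-\epsilon) k)}$ or a telescoping potential bound $\sum_v \phi_v < s^2$, contradicting the distinguishing hypothesis.

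The principal obstacle is controlling the Fourier support of $\mu_v/q_v$ as the computation proceeds, especially for $k$ close to $n/c$. For small $k$ all relevant $S$ have low weight and can be handled via an $L_\infty$-based potential similar to the approach in the sparse-parity learning bound of \cite{KRT17}; for the wider range of $k$ treated here, characters of weight up to $mk$ appear in the expansion, requiring a finer decomposition of $\phi$ that separates low- from high-weight characters. The single-character form of parity is the crucial ingredient: it guarantees that high-weight characters enter $\mu_v/q_v$ only through genuine multi-sample observations, not via accidental lower-order Fourier cancellations. This structural rigidity fails for generic $t$-resilient $P$, and is precisely what enables \cref{thm:local-parity} to treat the wider range $k < n/c$ as compared to the stricter $k < n^{(1-\epsilon)/6}/3$ restriction in \cref{thm:local-arbit-predicate}.
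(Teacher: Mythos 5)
Your proposal takes a genuinely different route from the paper, but it contains a fatal error at the very foundation, and the rest is left as an unresolved sketch.

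The paper does \emph{not} run a Raz-style potential/truncation-path argument here. Instead, it proves \cref{thm:sublinearparity} directly via a three-step argument: (i) a \emph{hybrid argument} locates a single step $j'$ at which the branching program extracts distinguishing advantage $\ge 2s/m$ from the $j'$-th sample alone; (ii) a counting argument shows that every vertex $v_1$ in layer $j'-1$ with non-negligible reaching probability has a posterior $\P_{x\mid v_1}$ of min-entropy at least $n - \log d - k\log(n/k)$; and (iii) the Fourier bound from \cite{KRT17} (\cref{lem:spafour}, used through \cref{cl:spafour} and \cref{cl:spafourpred}) shows that a flat posterior produces negligible bias $\bigl|\sum_{x'}\P_{x\mid v_1}(x')(-1)^{\langle \alpha, x'\rangle}\bigr|$ for all but a vanishing fraction of $\alpha$. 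These combine to bound the one-step advantage, hence $m$. The hybrid argument is what removes any need to track Fourier mass across the whole computation; you are trying to avoid it and pay heavily for that choice.

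The concrete error in your proposal: you assert that because a single pseudorandom sample is marginally uniform, $q_v = \E_x[\mu_v(x)]$ equals the reaching probability of $v$ under the uniform source. This is false at every layer past the first. Multiple pseudorandom samples are correlated through the shared seed $x$; for instance, two samples with the same underlying index set are forced to carry the same label under the pseudorandom source but are independent under the uniform one. In fact, if your claim were true for all $v$, then $\sum_{v\in V^+} q_v = \sum_{v\in V^+} r_v$ and the branching program would have zero distinguishing advantage, directly contradicting the hypothesis $s>0$. So the bridge from the potential $\phi_v$ to the distinguishing advantage ($s^2 \le \sum_{v\in V^+}\phi_v$) does not hold as stated, and without that bridge the rest of the plan has nothing to bound. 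You also acknowledge but do not resolve the ``principal obstacle'' of controlling the Fourier support of $\mu_v/q_v$ when $k$ is large; steps (ii) and (iii) of your plan are schematic and would require substantial new work to turn into a bound. Your high-level intuition that the single-character structure of parity (a lone $\chi_\alpha$ rather than a weighted sum of characters) is what buys the wider range $k<n/c$ is correct, but in the paper this manifests much more simply: for parity, \cref{cl:spafourpred} reduces to a single application of \cref{cl:spafour} at weight $k$, whereas general $t$-resilient predicates require summing over all Fourier weights $j\ge t$ with binomial factors $\binom{k}{j}$, which is what forces the stricter $k<n^{(1-\epsilon)/6}/3$ restriction.
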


The above results show lower bounds for \emph{sublinear} memory algorithms. 
For a slight variant of the above uniform vs local pseudorandom source problem, we can in fact upgrade our results to obtain the following lower bounds against \emph{super-linear} memory algorithms. See Section~\ref{sec:reduction} for details.
\begin{theorem}\label{thm:introsuperlinearparity}
For large enough $n$ and $k>c\log n$ (where $c$ is a large enough constant) and $k\le \frac{n}{4}$, any algorithm that can distinguish satisfiable sparse parities of sparsity $k$ on $n$ variables (of type 
$(a,b)=(a^1,a^2,...,a^n,b)\in\{0,1\}^{n+1}$, where $\forall i\in[n],\; a^i=1$ with probability 
$\frac{k}{n}$ and $b=\inner{a,x}$) from random ones (of similar type $(a,b)$ but $b$ is now chosen uniformly at random from $\{0,1\}$),
with success probability at least $\frac{1}{2}+2^{-o(k)}$, requires
either a memory of size $\Omega(nk)$ or $2^{\Omega(k)}$ samples. 
\end{theorem}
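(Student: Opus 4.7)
The plan is to adapt the distinguishing lower bound machinery developed for \cref{thm:local-parity} (which applies to samples of the form $(\alpha,b)$ with $\alpha$ a uniform random $k$-tuple in $[n]^{(k)}$) to the variant of \cref{thm:introsuperlinearparity} where the mask $a$ is drawn from $\zo^n$ with i.i.d.\ $\mathrm{Bern}(k/n)$ coordinates. The underlying hidden-$x$ problem is structurally the same sparse parity: in the satisfiable mode a fixed $x \in \zo^n$ is drawn uniformly (independent of the stream) and each sample is $(a,\inner{a,x})$; in the random mode $b$ is replaced by an independent uniform bit. An ROBP with advantage $1/2+2^{-o(k)}$ must therefore detect statistical non-uniformity of the posterior on $x$ at some state it reaches.

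First, I would reformulate the distinguishing task via the truth-distribution machinery used for \cref{thm:local-parity}: track, at each node of the branching program, the posterior distribution on the hidden $x$, and show that until $2^{\Omega(k)}$ samples have been consumed or $\Omega(nk)$ bits of memory have been used, the posterior at every reachable state of the ROBP remains statistically close to uniform on $\zo^n$. The key structural difference from \cref{thm:local-parity} is that the mask $a$ now ranges over all of $\zo^n$, so the per-sample positional information grows from $O(k \log n)$ to $\Theta(n)$; this is precisely what lets us push the memory bound from sub-linear $n^\eps$ up to super-linear $\Omega(nk)$.

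Second, I would apply a sharp per-sample spectral-decay estimate of the type used in~\cite{Raz16,KRT17}: the operator that updates the posterior on $x$ upon seeing a single sample $(a,b)$ with $a\sim \mathrm{Bern}(k/n)^{\otimes n}$ contracts all non-uniform components by a factor of roughly $2^{-\Omega(k)}$, because the $k$-sparse linear form $\inner{a,x}$ is $\Omega(k)$-resilient against uniform $x$. Combined with the $n$-bit target entropy of $x$, the usual Raz-style product/chaining bound then yields memory $\Omega(nk)$ and sample complexity $2^{\Omega(k)}$.

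The main obstacle will be transferring the sharp spectral-decay estimate of~\cite{Raz16,KRT17} (originally formulated in the search setting, where the posterior must concentrate on a particular $x$) to the distinguishing setting used in this paper, where it is enough to certify that the posterior is \emph{not} uniform. The distinguishing-to-guessing reduction developed for \cref{thm:local-arbit-predicate,thm:local-parity} is essentially the right bridge, but combining it with the super-linear spectral argument (rather than the coarser one-step estimates that suffice in the sub-linear $n^\eps$ memory regime) will be the delicate part of the proof.
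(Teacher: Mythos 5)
Your proposal takes a genuinely different route from the paper. The paper proves \cref{thm:introsuperlinearparity} (restated as \cref{thm:superlinearparity}) by a black-box reduction to a \emph{learning} lower bound: from a distinguisher $B$ it builds a bit-by-bit learner $P$ that guesses $x^i\in\{0,1\}$, re-randomizes the other coordinates and the label via $f_y(a,b)=(a^{-i},\, b+ga^i+\inner{a^{-i},y})$, and uses $B$ to verify the guess (planted vs.\ null). After Chernoff boosting, $P$ learns $x$ with probability $1-1/n$ using width $d\cdot 2^{n+1}\cdot\poly(n,1/s)$ and length $\poly(n,1/s)\cdot m$, so the learning bound of \cref{lem:product} (an adaptation of Theorem~9 in~\cite{GRT18} to Bernoulli masks via the $L_2$-extractor property) immediately forces $d=2^{\Omega(nk)}$ or $m=2^{\Omega(k)}$. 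You instead propose a \emph{direct} distinguishing argument (hybrid plus posterior tracking plus a spectral-decay/Fourier estimate), imported from the proof of \cref{thm:local-parity} in Section~\ref{sec:sublinear}.

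There is a genuine gap in that plan. The Section~\ref{sec:sublinear} argument is intrinsically tied to \emph{sublinear} memory: it isolates the hybrid step $j'$, and uses the fact that a ``heavy'' state $v_1$ with $P_{j'-1}(v_1)\ge 1/(d\cdot d_t)$ forces the posterior $\P_{x\mid v_1}$ to have min-entropy $\ge n-\log d-\log d_t$ (Equation~\eqref{eq:minentpred}). Claim~\ref{cl:spafour} then only delivers a nontrivial $L_2$ bound when $\log d+\log d_t$ is much smaller than $n$; this is exactly why the theorem there is stated for $d<2^{n^\eps}$. If you let $\log d=\Theta(nk)\gg n$, the min-entropy guarantee becomes vacuous (the posterior could be a point mass) and the Fourier estimate says nothing. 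Your claim that the larger mask alphabet ($\{0,1\}^n$ vs.\ $[n]^{(k)}$) ``is precisely what lets us push the memory bound from sub-linear $n^\eps$ up to super-linear $\Omega(nk)$'' does not hold: the mask size never enters the min-entropy step, which is what caps the memory. You also invoke a ``distinguishing-to-guessing reduction developed for \cref{thm:local-arbit-predicate,thm:local-parity}'', but those two theorems are not proved by a reduction at all; they are proved directly. The reduction machinery (to learning, not guessing) lives in Section~\ref{sec:reduction} and is exactly what the paper uses here.

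Concretely, to make your direct route work in the super-linear memory regime you would need to replace the one-shot min-entropy bound with a Raz-style progress measure (e.g.\ a per-edge bound on the growth of $\norm{\P_{x\mid v}}_2$ using the extractor property of the sparse-parity matrix), and then re-derive the whole branching-program induction for the \emph{distinguishing} success criterion rather than the search one. That is a substantial reworking of~\cite{Raz16,KRT17,GRT18}, and it is not in your sketch. The paper's bit-by-bit reduction avoids all of this: one only pays an extra $2^{n+1}\cdot\poly(n,1/s)$ factor in width to store the partially learned $x$ and two counters, which is negligible next to the target $2^{\Omega(nk)}$ since $k\gg\log n$. I'd recommend either switching to that reduction, or genuinely committing to rebuilding the Raz-style induction for the distinguishing objective — the latter is interesting but is a project in itself.
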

In Remark~\ref{rem:tightness-2}, we observe that the above theorem is almost tight. Specifically, we observe that there are ROBPs that use a constant memory and $O(n 2^{O(k)})$ samples or $O(nk log n)$ memory and $O(n)$ samples to distinguish uniform sources from locally pseudorandom ones. 

\subsection{Applications to Security of Goldreich's Pseudorandom Generator}

A fundamental goal in cryptography is to produce secure constructions of cryptographic primities that are highly efficient. In line with this goal, Goldreich~\cite{DBLP:journals/iacr/Goldreich00a} proposed a candidate one-way function given by the following pseudorandom mapping that takes $n$-bit input $x$ and outputs $m$ bits: fix a \emph{predicate} $P:\zo^k \rightarrow \zo$, pick $a_1, a_2, \ldots, a_m$ uniformly at random\footnote{More generally, Goldreich proposed that $a_1, a_2, \ldots, a_m$ could be chosen in a pseudorandom way so as to ensure a certain ``expansion'' property. We omit a detailed discussion here.} from $[n]^{(k)}$ and output $P(x^{a_1}), P(x^{a_2}),\ldots,P(x^{a_m})$. Here, $a_1, \ldots, a_m$ and $P$ are public and the seed $x$ is secret. Later works (starting with \cite{DBLP:conf/focs/MosselST03}) suggested using this candidate as pseudorandom generator. 

The main question of interest is the precise trade-off between the \emph{locality} $k$ and the \emph{stretch} $m$ for a suitable choice of the predicate $P$. In several applications, we need that the generator has a super-linear stretch (i.e. $m = n^{1+ \delta}$ for some $\delta >0$) with constant locality (i.e. $k = O(1)$). %We are also interested super-polynomial stretch with locality that grows as mildly as possible with $n$. 

The simplicity and efficiency of such a candidate is of obvious appeal. This simplicity has been exploited to yield a host of applications including public-key cryptography from combinatorial assumptions~\cite{MR2743266-Applebaum10}, highly efficient secure multiparty computation~\cite{DBLP:conf/eurocrypt/IshaiKOPS11} and most recently, basing \emph{indistinguishability obfuscation} on milder assumptions~\cite{DBLP:conf/eurocrypt/Lin16,cryptoeprint:2015:730,DBLP:conf/focs/LinV16,cryptoeprint:2016:1096,DBLP:conf/crypto/LinT17}. 

Evidence for the security of Goldreich's candidate has been based on analyzing natural classes of attacks based on propositional proof systems~\cite{MR2114305-Alekhnovich04}, spectral methods and semidefinite programming hierarchies~\cite{MR3280991-ODonnell14,MR3473335-Allen15,MR3388187-Barak15,DBLP:conf/stoc/KothariMOW17,LombardiV17a,DBLP:conf/eurocrypt/BarakBKK18} and algebraic methods~\cite{DBLP:journals/joc/ApplebaumBR16,DBLP:conf/stoc/ApplebaumL16}. In particular, previous works~\cite{DBLP:conf/stoc/KothariMOW17,DBLP:conf/stoc/ApplebaumL16} identified $t$-resiliency of the predicate $P$ as a necessary condition for the security of the candidate for $m= n^{\Omega(t)}$ stretch. 

The \emph{uniform vs local pseudorandom source} problem considered in this work is easily seen as the algorithmic question of distinguishing the output stream generated by Goldreich's candidate generator from a uniformly random sequence of bits. In particular, our results imply security of Goldreich's candidate against bounded memory algorithms for super-linear stretch when instantiated with any $t$-resilient predicate for large enough constant $t$ (but in the streaming model). Goldreich's candidate generator would fix the sets $a_1, a_2, \ldots, a_m$ (which are public) and output $P(x^{a_1}), P(x^{a_2}),\ldots,P(x^{a_m})$ for $n$ sized input $x$ ($m>n$). We prove the security of Goldreich's generator in the model where $a_1, a_2, \ldots, a_m$, still public, are chosen uniformly at random from $[n]^{(k)}$ and streamed with the generated bits.

We note that our lower bounds continue to hold even when the locality $k$ grows polynomially with the seed length $n$. 

\begin{corollary}[Corollary of Theorem~\ref{thm:local-arbit-predicate}]
Let $0<\epsilon<1-3\frac{\log 24}{\log n}$ and $P$ be a $t$-resilient $k$-ary predicate for $k = O(n^{(1-\epsilon)/6})$. Then, Goldreich's PRG, when instantiated with any $t$-resilient $k$-ary predicate $P$ such that $k\ge  t > 36$ and stretch $m =(n/t)^{O(t)(1-\epsilon)}$, is secure against all read-once branching programs with memory-size bounded from above by $n^{\epsilon}$, in the streaming model.
\end{corollary}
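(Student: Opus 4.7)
The plan is to observe that this corollary is essentially a translation of \cref{thm:local-arbit-predicate} into cryptographic language, so the proof should be a direct instantiation of that theorem with the parameters spelled out in the corollary.

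First, I would match up the two distinguishing problems. In the streaming model of Goldreich's PRG, the adversary sees a stream of pairs $(a_i, b_i)$ where the $a_i \in [n]^{(k)}$ are public random subsets and the bit $b_i$ is either $P(x^{a_i})$ for a fixed secret seed $x \in \{0,1\}^n$ (pseudorandom case) or a uniformly random bit (random case). This is exactly the \emph{uniform vs local pseudorandom source} distinguishing problem analyzed in \cref{thm:local-arbit-predicate}, and a successful distinguisher for the PRG (in the streaming model) would be a ROBP that succeeds at this distinguishing problem using at most $m$ samples, where $m$ is the stretch of the PRG.

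Next, I would verify the hypotheses of \cref{thm:local-arbit-predicate}. The condition $0 < \epsilon < 1 - 3\tfrac{\log 24}{\log n}$ is assumed; the condition $k < n^{(1-\epsilon)/6}/3$ and $k < n/c$ follows from $k = O(n^{(1-\epsilon)/6})$ by taking the hidden constant small enough; and $t \le k$ is automatic (a $k$-ary predicate can be at most $k$-resilient). Invoking the theorem then yields: for any ROBP with memory at most $n^\epsilon$ that uses at most $m = (n/t)^{O(t)(1-\epsilon)}$ samples (where we pick the hidden $O(\cdot)$ on the stretch side strictly smaller than the hidden $\Omega(\cdot)$ on the sample lower bound side), the distinguishing advantage is at most $O\bigl((t/n)^{\Omega(t(1-\epsilon))}\bigr)$.

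Finally, I would confirm that this advantage is cryptographically negligible under the hypothesis $t > 36$. Writing $(t/n)^{c\cdot t(1-\epsilon)} = \exp\bigl(-c\cdot t(1-\epsilon)\ln(n/t)\bigr)$, the exponent is $\Omega(t(1-\epsilon)\log n)$ as long as $t \le n^{1-\Omega(1)}$, which is guaranteed by $k = O(n^{(1-\epsilon)/6})$. Since $t > 36$ and $\epsilon$ is bounded away from $1$, the exponent grows as $\omega(\log n)$, making the advantage smaller than $1/n^d$ for every constant $d$, i.e., negligible. This is exactly the security statement of the corollary.

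There is essentially no hard step here: the content is entirely in \cref{thm:local-arbit-predicate}, and the only thing the corollary adds is the parameter bookkeeping to convert the explicit advantage $\Omega\bigl((t/n)^{\Omega(t(1-\epsilon))}\bigr)$ into the qualitative notion of negligibility and the observation that the streaming setup of Goldreich's construction fits the theorem's data model verbatim. If anything, the mildly delicate point is ensuring the two hidden constants in the expressions ``$m = (n/t)^{O(t)(1-\epsilon)}$'' and ``sample lower bound $(n/t)^{\Omega(t(1-\epsilon))}$'' are consistent, which is just a matter of picking the constant on the PRG side below the constant guaranteed by the theorem.
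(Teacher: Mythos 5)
Your overall approach is the right one: the paper does not provide a dedicated proof block for this corollary, and the intended argument is exactly what you describe, namely a direct instantiation of \cref{thm:local-arbit-predicate} (equivalently \cref{thm:pred}) after observing that the streaming PRG-distinguishing game is verbatim the ``uniform vs local pseudorandom source'' problem. Matching hypotheses, noting that $t\le k$ is automatic, and reading off the advantage bound are all correct.

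There is, however, one genuine overstatement in your final paragraph. You conclude that the advantage $(t/n)^{c\,t(1-\epsilon)}$ is negligible, ``smaller than $1/n^d$ for every constant $d$,'' on the grounds that the exponent $c\,t(1-\epsilon)\ln(n/t)$ is $\omega(\log n)$. This fails when $t$ is a constant, and the corollary explicitly permits constant $t$ (it only requires $t>36$, which is precisely the regime Goldreich's PRG with constant locality lives in). For constant $t$ and constant $\epsilon$, the exponent is $\Theta(\log n)$, so the advantage is $n^{-\Theta(1)}$: inverse polynomially small, but not negligible in the standard cryptographic sense. This is consistent with the paper's own careful phrasing in the abstract, which claims security only against ``large inverse polynomial advantage'' distinguishers rather than against all non-negligible ones. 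So ``secure'' in the corollary should be read as ``the distinguishing advantage is at most $O((t/n)^{\Omega(t(1-\epsilon))})$,'' which is super-polynomially small only when $t=\omega(1)$. You might also note, though it is a secondary point, that the role of the hypothesis $t>36$ is to make the stretch $m=(n/t)^{O(t)(1-\epsilon)}$ actually exceed $n$ (so the construction is a genuine PRG) while staying below the sample lower bound $(n/t)^{(1-\epsilon)t/36}$; this is the constant-matching step you allude to, and $36$ is the constant appearing in the exponent of \cref{thm:pred}.
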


\subsection{Applications to Refuting Random CSPs }
Theorems~\ref{thm:local-arbit-predicate} and ~\ref{thm:local-parity} can also be intepreted as lower bounds for the problem of refuting random constraint satisfaction problems. 

A random CSP with predicate $P:\zo^k \rightarrow \zo$ is a constraint satisfaction problem on $n$ variables $x \in \zo^n$. 
More relevant to us is the variant where the constraints are randomly generated as follows: choose an ordered $k$-tuple of variables $a$ from $[n]$ at random, a bit $b \in \zo$ at random and impose a constraint $P(x^a) = b$. When the number of constraints $m \gg n$, the resulting instance is unsatisfiable with high probability for any non-constant predicate $P$. The natural algorithmic problem in this regime is that of \emph{refutation} - efficiently finding a short witness that certifies that the given instance is far from satisfiable. It is then easy to note that the uniform vs local pseudorandom source problem is the task of distinguishing between constraints in a random CSP (with predicate $P$) and one with a satisfying assignment. Note that refutation is formally harder than the task of distinguishing between a random CSP and one that has a satisfying assignment.

Starting with investigating in proof complexity, random CSPs have been intensively studied in the past three decades. When $P$ is $t$-resilient for $t \geq 3$, all known efficient algorithms~\cite{MR3473335-Allen15} require $m \gg n^{1.5}$ samples for the refutation problem. This issue was brought to the forefront in ~\cite{MR2121179-Feige02} where Feige made the famous ``Feige's Hypothesis'' conjecturing the impossibility of refuting random 3SAT in polynomial time with $\Theta(n)$ samples. Variants of Feige's hypothesis for other predicates have been used to derive hardness results in both supervised~\cite{DBLP:conf/nips/DanielyLS13,DBLP:conf/colt/DanielyLS14,DBLP:conf/stoc/DanielyLS14} and unsupervised machine learning~\cite{DBLP:conf/colt/BarakM16}.

In ~\cite{MR3280991-ODonnell14}, $t$-resilience was noted as a necessary condition for the refutation problem to be hard. Our Theorems~\ref{thm:local-arbit-predicate} and ~\ref{thm:local-parity} confirm this as a sufficient condition for showing lower-bounds for the refutation (in fact, even for the easier ``distinguishing'' variant) of random CSPs, with $t$-resilient predicates, in the streaming model with bounded memory.

\section{Preliminaries}
\label{sec:prelim}
Denote by log the logarithm to base 2. We use $Ber(p)$ to denote the Bernoulli distribution with parameter $p$ (probability of being 1). We use
$[n]$ to denote the set $\{1,2,...,n\}$.

For a random variable $Z$ and an event $E$,
we denote by $\P_Z$ the distribution of the random variables $Z$, and
we denote by $\P_{Z|E}$ the distribution of the random variable $Z$ conditioned on the event $E$.

Given an $n-$bit vector $y\in\{0,1\}^n$, we use $y^{i}$ to denote the $i^{th}$ coordinate of $y$, that is, $y=(y^1,y^2,...,y^n)$. We use $y^{-i}\in\{0,1\}^{n-1}$ to denote $y$ but with the $i$th coordinate deleted. Given two $n-$bit vectors $y,y'$, we use $\inner{y,y'}$ to denote the inner product of $y$ and $y'$ modulo 2, that is, $\inner{y,y'}=\sum_{i=1}^n y^{i}y'^{i} \mod 2$. We use $|y|$ to denote the number of ones in the vector $y$.

Given a set $S$, we use $y\in_R S$ to denote the random process of picking $y$ uniformly at random from the set $S$. Given a probability distribution $D$, we use $y\sim D$ to denote the random process of sampling $y$ according to the distribution $D$.

Next, we restate (for convenience) the definitions and results from previous papers \cite{Raz16,Raz17,KRT17,GRT18} that we use. 
\subsubsection*{Viewing a Learning Problem as a Matrix}
Let $\XX$, $\AA$ be two finite sets of size larger than 1.
%Let $n = \log_2|\XX|$.
% and $n' = \log_2|\AA|$.

%Let $n,n' \in \N$ be two natural numbers.
%Let $\mu: \XX \rightarrow \Reals^+$ and
%$\psi: \AA \rightarrow \Reals^+$ be two distributions.

Let $M: \AA \times \XX \rightarrow \{-1,1\}$ be a matrix.
The matrix $M$ corresponds to the following learning problem:
There is an unknown element $x \in \XX$ that was chosen uniformly at random. A learner tries to learn $x$ from samples
$(a, b)$, where $a \in \AA$ is chosen uniformly at random and $b = M(a,x)$.
That is, the learning algorithm is given a stream of samples,
$(a_1, b_1), (a_2, b_2) \ldots$, where each~$a_t$ is uniformly distributed and for every $t$, $b_t = M(a_t,x)$.

These papers model the learner for the learning problem corresponding to the matrix $M$ using a branching program:
\begin{definitionx} {\bf Branching Program for a Learning Problem:}
A branching program of length $m$ and width $d$, for learning, is a directed (multi) graph with vertices arranged in $m+1$ layers containing at most $d$ vertices each. In the first layer, that we think of as layer~0, there is only one vertex, called the start vertex.
A vertex of outdegree~0 is called a  leaf.
All vertices in the last layer are leaves
(but there may be additional leaves).
Every non-leaf vertex in the program has $2|\AA|$ outgoing edges, labeled by elements
$(a,b) \in \AA \times \{-1,1\}$, with exactly one edge labeled by each such $(a,b)$, and all these edges going
into vertices in the next layer.
Each leaf $v$ in the program is labeled by an element $\tilde{x}(v) \in \XX$, that
we think of as the output of the program on that leaf.

{\bf Computation-Path:} The samples
$(a_1, b_1), \ldots, (a_m, b_m) \in \AA \times \{-1,1\}$
that are given as input,
define a
computation-path in the branching
program, by starting from the start vertex
and following at
step~$t$ the edge labeled by~$(a_t, b_t)$, until reaching a leaf.
The program outputs the label $\tilde{x}(v)$ of the leaf $v$ reached by the computation-path.

{\bf Success Probability:}
The success probability of the program is the probability that $\tilde{x}=x$,
where $\tilde{x}$ is the element that the program outputs, and the probability is over $x,a_1,\ldots,a_m$ (where $x$ is uniformly distributed over $\XX$ and $a_1,\ldots,a_m$ are uniformly distributed over $\AA$, and for every $t$, $b_t = M(a_t,x)$).

\end{definitionx}

\begin{thmx}\label{thm:parity} \cite{Raz16,Raz17,GRT18} 
Any branching program that learns $x \in \{0,1\}^n$, from random linear equations over $\F_2$ 
with success probability $2^{-cn}$ requires either a width
of $2^{\Omega(n^2)}$ or a length of $2^{\Omega(n)}$ (where $c$ is a small enough constant).

\end{thmx}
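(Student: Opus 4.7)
The plan is to invoke the ``bipartite extractor'' framework developed in \cite{Raz16,Raz17,GRT18}. Associate to the problem the Hadamard-type matrix $M:\zo^n\times\zo^n\to\{-1,1\}$ with $M(a,x)=(-1)^{\iprod{a,x}}$. A branching program solving the learning task from samples $(a_t,b_t)$ with $b_t=M(a_t,x)$ is modeled as a layered DAG of width $d$ and length $m$. For each internal vertex $v$, let $\P_{X\mid v}$ denote the posterior on $x$ conditioned on the computation path reaching $v$. The first step is to encode the ``progress'' of the program toward identifying $x$ by the maximum value of $\P_{X\mid v}(x_0)$ over $x_0\in\zo^n$, or, more conveniently, by a potential function that measures how far $\P_{X\mid v}$ has drifted from uniform.

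The second and central step is to establish the appropriate \emph{$L_2$-extractor} property of $M$: for every subset $X\sse \zo^n$ of density $\ge 2^{-k}$ and ``most'' $a\in\zo^n$, the sum $\sum_{x\in X} M(a,x)$ is at most $2^{-\Omega(\ell)}\cdot|X|$ except for an exponentially small fraction of $a$'s, with a quantitative trade-off between $k,\ell,n$. This follows from Parseval applied to characters of $\F_2^n$ and is exactly the input required by the abstract theorem of \cite{GRT18}: any matrix that is a sufficiently strong $L_2$-extractor admits a memory-sample lower bound of the stated form.

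The third step is to run the potential-function analysis of \cite{Raz16,GRT18}. One classifies vertices as ``significant'' (where $\P_{X\mid v}$ concentrates substantial mass on some small set of $x$'s) versus ``typical,'' and shows inductively over layers that, thanks to the extractor property, the total probability mass reaching significant vertices by layer $t$ grows by at most a factor of $d\cdot 2^{-\Omega(n)}$ per edge. Consequently, if $d\le 2^{cn^2}$ and $m\le 2^{cn}$ for small enough $c$, the total probability of reaching a significant vertex is $\le 2^{-\Omega(n)}$, so from a typical vertex the program's best guess of $x$ is correct with probability at most $2^{-n+o(n)}\ll 2^{-cn}$ (choosing $c$ sufficiently small). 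The contrapositive yields the desired lower bound on either $d$ or $m$.

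The main obstacle is step three: setting up the right potential and tracking its evolution through the layered program while handling both the ``bad edges'' (where the sample $a_t$ witnesses a large character sum for some conditional distribution) and the layer-by-layer accumulation of conditioning. The quantitative improvement of the success probability from the trivial $O(2^{-n})$ down to $2^{-cn}$ (as in \cite{GRT18}) is achieved by refining the partition of vertices and carefully amortizing the potential across layers, rather than by changing the high-level extractor-based strategy.
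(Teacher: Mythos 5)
The paper does not supply its own proof of Theorem A; it is imported verbatim as a cited result (Raz 2016, Raz 2017, Garg--Raz--Tal 2018) and then used only as a black box in the reduction establishing Theorem 4.1. Your sketch tracks the Garg--Raz--Tal route---model the parity matrix $M(a,x)=(-1)^{\langle a,x\rangle}$, prove an $L_2$-extractor property via Fourier/Parseval, then run a significant/typical vertex decomposition with a layer-by-layer potential argument---which is one of the three cited sources, so there is nothing in-paper to contrast against. Two imprecisions are worth flagging. First, the $L_2$-extractor definition used here (Definition \ref{definition:l2-extractor}) quantifies over arbitrary nonnegative $f$ with bounded $\norm{f}_2/\norm{f}_1$, not merely indicator functions of dense sets; that generality is needed because the posterior $\P_{X\mid v}$ at a branching-program vertex is not flat on its support, so replacing it by a set would lose the quantitative control on the potential. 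Second, your clause ``the sum is at most $2^{-\Omega(\ell)}\cdot|X|$ except for an exponentially small fraction of $a$'s'' conflates the two independent thresholds $2^{-k'}$ (fraction of bad rows) and $2^{-r'}$ (bias on good rows) that the GRT framework keeps separate and tunes independently; it is precisely this separate tuning that pushes the width exponent to $n^2$ rather than $n$. Beyond those points your outline is consistent with the published argument, but, as you acknowledge, step three carries essentially all of the technical weight (the definition of significant values, the truncation of the posterior, and the amortized progress bound) and is only gestured at rather than carried out.
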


\begin{thmx}\label{thm:sparse} \cite{GRT18} 
Any branching program that learns $x\in \{0,1\}^n$, from random sparse linear equations, 
of sparsity exactly $\ell$, over $\F_2$ with success probability  $2^{-cl}$ (where $c$ is a small enough constant) requires:
\begin{enumerate}
	\item Assuming $\ell \le n/2$: either a width
of size $2^{\Omega(n \cdot \ell)}$ or length of $2^{\Omega(\ell)}$.
	\item Assuming $\ell \le n^{0.9}$:  either a width
of size  $\Omega(n \cdot \ell^{0.99})$ or length of $\ell^{\Omega(\ell)}$.
\end{enumerate}
\end{thmx}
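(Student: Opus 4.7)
The plan is to derive this as a direct instantiation of the general branching-program lower bound machinery of Raz 2017 and its sparse-parity refinement in GRT18. The central abstraction in that framework is the notion of a two-source $L_2$-extractor associated with a matrix $M \colon \AA \times \XX \to \{-1,1\}$: one shows that for every ``not-too-heavy'' distribution $p$ over rows and every ``not-too-heavy'' vector $q$ over columns, the bilinear form $\sum_{a,x} p(a)q(x) M(a,x)$ is small. Once this purely analytic property is established for the matrix, a progress-measure argument on the branching program (tracking the conditional probability on $x$ along a typical path) yields the stated width/length tradeoff by a black-box reduction; the two sparsity regimes in the statement correspond to two different quantitative choices of the extractor parameters. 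The weak success probability $2^{-c\ell}$ is handled automatically by how the progress measure is calibrated.

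First I would set up the matrix: $\AA \subseteq \{0,1\}^n$ is the set of vectors of Hamming weight exactly $\ell$, $\XX = \{0,1\}^n$, and $M(a,x) = (-1)^{\langle a,x\rangle}$. A sample is a pair $(a,b)$ with $a$ uniform over $\AA$ and $b = \langle a,x\rangle$, exactly the sparse-parity learning problem. The goal is to show that $M$ is a $(k',\ell')$-$L_2$-extractor in the sense of GRT18 with parameters matched to the two claimed conclusions.

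The main technical step, and the main obstacle, is the character-sum estimate: for a non-trivial $y \in \{0,1\}^n$ of Hamming weight $w$, the sparse Fourier coefficient
\[
\widehat{\chi}_{\AA}(y) \;=\; \mathbb{E}_{a \in_R \AA}\bigl[(-1)^{\langle a,y\rangle}\bigr]
\]
must be bounded nontrivially in terms of $w$. This is governed by Krawtchouk polynomial estimates $K_\ell(w)/\binom{n}{\ell}$ and is substantially more delicate than the dense parity setting of Raz 2016, because $\AA$ is no longer a subgroup and the Fourier support is spread in a structured way. Combining this bound with a standard KL/variance decomposition (as in GRT18), one shows that for any $p,q$ with $\|p\|_2^2 \le |\AA|^{-1} \cdot 2^{-k'}$-type bounds and analogous control on $q$, the bilinear form is at most $2^{-\ell'}$ for the appropriate choice of $(k',\ell')$.

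Third, I would plug the resulting extractor parameters into the main branching-program theorem of GRT18 (the one that upgrades an $L_2$-extractor into a width/length tradeoff). For $\ell \le n/2$ the stronger Krawtchouk estimates give extractor parameters $(k',\ell') \sim (n\ell, \ell)$, yielding width $2^{\Omega(n\ell)}$ or length $2^{\Omega(\ell)}$. For the wider range $\ell \le n^{0.9}$, the estimates only yield weaker parameters of the form $(k',\ell') \sim (\log(n\ell^{0.99}), \ell\log \ell)$, which translate into the quantitatively weaker width $\Omega(n\ell^{0.99})$ or length $\ell^{\Omega(\ell)}$. The two regimes in the theorem are simply the two natural parameter settings the sparse character-sum bound supports.
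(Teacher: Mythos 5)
Note first that this statement is not proven in the paper at all: it is a verbatim restatement of a result from \cite{GRT18}, placed in the preliminaries and invoked as a black box. The closest the paper comes to the underlying machinery is the proof of \cref{lem:product}, which cites GRT18's Lemmas~5.8 and~5.10 (that the exact-weight-$\ell$ parity matrix is an $L_2$-extractor) and GRT18's Theorem~9 (the generic branching-program lower bound for $L_2$-extractors). Your high-level plan --- take $M(a,x) = (-1)^{\langle a,x\rangle}$ with rows ranging over weight-exactly-$\ell$ vectors, prove $M$ is an $L_2$-extractor via Krawtchouk-type bounds on $\mathbb{E}_{a\in T_\ell}[(-1)^{\langle a,y\rangle}]$, and plug into the GRT18 black-box theorem --- is exactly that route.

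The flaw is in how you assign the extractor parameters; under the definition the paper actually uses (Definition~\ref{definition:l2-extractor}) your assignment does not assemble into the claimed bound. The paper records that $M_\ell$ is a $(c_1\ell,\, c_1 n)$-$L_2$-extractor with error $2^{-c_1\ell}$ for $\ell \le n/2$: here $k' \approx \ell$ controls the \emph{fraction of heavy rows} (at most $2^{-k'}\cdot|A|$ of them), $\ell' \approx n$ is the allowed $\|f\|_2/\|f\|_1$ exponent for the test function, and $r' \approx \ell$ is the error exponent. GRT18's progress-measure theorem then gives width $2^{\Omega(k'\cdot\ell')} = 2^{\Omega(n\ell)}$ or length $2^{\Omega(\min(k',\ell',r'))} = 2^{\Omega(\ell)}$. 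You instead claim $(k',\ell') \sim (n\ell,\ \ell)$, i.e., you treat $k'$ itself as the width exponent and $\ell'$ as the sparsity. Taken literally in Definition~\ref{definition:l2-extractor}, $k' = n\ell$ is vacuous since $|T_\ell| = \binom{n}{\ell} \ll 2^{n\ell}$ (no row can qualify as heavy), and $\ell' = \ell$ restricts $f$ to nearly uniform, far too weak a hypothesis to drive the progress argument at small memory. The same confusion carries into your regime-2 parameters $(\log(n\ell^{0.99}),\ \ell\log\ell)$, which under the actual black-box theorem would produce a width exponent of the form $\ell\log\ell\cdot\log(n\ell^{0.99})$ rather than what is stated. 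So the framework, the Krawtchouk character-sum lemma, and the two-regime structure are all correctly identified, but the roles of $k'$, $\ell'$, and $r'$ need to be straightened out before the arithmetic reproduces the theorem.
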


\subsubsection*{Norms and Inner Products}
Let $p \geq 1$.
For a function
$f: \XX \rightarrow \Reals$,
denote by $\norm{f}_{p}$ the $\ell_p$ norm of $f$, with respect to the  uniform distribution over $\XX$, that is:
$$\norm{f}_{p} =
\left( \Ex_{x \in_R \XX} \left[ |f(x)|^{p} \right] \right)^{1/p}.$$

For two functions
$f,g: \XX \rightarrow \Reals$, define their inner product with respect to the uniform distribution over $X$ as
$$\langle f,g \rangle =
\Ex_{x \in_R \XX} [ f(x) \cdot g(x) ].$$

For a matrix $M: \AA \times \XX \to \Reals$ and a row $a \in \AA$, we denote by $M_a: \XX \to \Reals$ the function corresponding to the $a$-th row of $M$. Note that for a function $f: \XX \to \Reals$, we have $\inner{M_a, f} = \frac{(M \cdot f)_a}{|X|}$.

\begin{definitionx} \cite{GRT18} {\bf 	$L_2$-Extractor:} \label{definition:l2-extractor} 
Let $\XX,\AA$ be two finite sets.
A matrix $M: \AA \times \XX \to \{-1,1\}$ is a $(k',\ell')$-$L_2$-Extractor with error $2^{-r'}$, if for every non-negative $f : \XX \to \Reals$ with $\frac{\norm{f}_2}{\norm{f}_1} \le 2^{\ell'}$ there are at most $2^{-k'} \cdot |A|$ rows $a$ in $A$ with
\[
\frac{|\inner{M_a,f}|}{\norm{f}_1}
\ge 2^{-r'}\;.
\]
\end{definitionx}

\begin{lemmax}\cite{KRT17}\label{lem:spafour}
Let $T_l$ be the set of $n$-bit vectors with sparsity exactly-$l$ for $l\in\mathbb{N}$, that is, $T_l=\{x\in\{0,1\}^n\mid \sum_{i=1}^nx^i=l\}$. Let $\delta\in (0,1]$. Let $\B_{T_l}(\delta)=\{\alpha\in\{0,1\}^n\mid \left|\Ex_{x\in T_l}(-1)^{\inner{\alpha,x}}\right|> \delta\}$. Then, for $\delta\ge (\frac{8l}{n})^{\frac{l}{2}}$,
\[|\B_{T_l}(\delta)|\le 2e^{-\delta^{2/l}\cdot n/8}\cdot 2^n\]
\end{lemmax}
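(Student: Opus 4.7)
Write $h(\alpha) := \Ex_{x \in T_l}\bigl[(-1)^{\inner{\alpha, x}}\bigr]$, so that $\B_{T_l}(\delta) = \{\alpha : |h(\alpha)| > \delta\}$. The plan has three ingredients: (i)~symmetry reduces $h(\alpha)$ to a function of $w := |\alpha|$ alone; (ii)~an exponential-decay estimate shows $|h_w|$ is small unless $w$ is far from $n/2$; (iii)~a Chernoff bound on $|\alpha|$ counts how many $\alpha$ can have such an extreme weight.

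\textbf{Step 1 (reduction to weight).} Since $T_l$ is invariant under coordinate permutations, $h(\alpha)$ depends only on $w := |\alpha|$. Treating $\alpha, x$ as subsets of $[n]$, $\inner{\alpha,x} \equiv |\alpha \cap x| \pmod 2$, so
\[
h(\alpha) \;=\; h_w \;:=\; \binom{n}{l}^{-1} \sum_{j=0}^{l} (-1)^j \binom{w}{j}\binom{n-w}{l-j} \;=\; \binom{n}{l}^{-1}\, [z^l]\,(1-z)^w(1+z)^{n-w},
\]
a scaled Krawtchouk polynomial. In particular $|h_w|=|h_{n-w}|$, so we may assume $w \le n/2$.

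\textbf{Step 2 (decay of $|h_w|$).} The core estimate I would establish is that, for $w \le n/2$,
\[
|h_w| \;\le\; C \cdot \bigl(1 - 2w/n\bigr)^{l}
\]
for an absolute constant $C$ (in fact any $C \le 2^l$ suffices). Two routes to this bound: (a)~\emph{saddle point / Cauchy's integral formula}: bound $\bigl|[z^l](1-z)^w(1+z)^{n-w}\bigr|$ by $r^{-l}\max_{|z|=r}|(1-z)^w(1+z)^{n-w}|$, optimize $r\in(0,1)$, and compare with $\binom{n}{l}$ via Stirling. The relevant extremum lies on the negative real axis for $w \le n/2 - \Theta(\sqrt{l n})$, giving the claimed $(1-2w/n)^l$ decay after dividing by $\binom{n}{l}$. (b)~\emph{Combinatorial / MGF}: write $h_w = \Ex\bigl[\prod_{i=1}^l (1 - 2Y_i)\bigr]$, where $Y_1,\ldots,Y_l$ are indicators that successive samples, drawn without replacement, fall in $\supp(\alpha)$; compare to the independent Bernoulli$(w/n)$ case where the expectation is exactly $(1-2w/n)^l$, and control the discrepancy via Hoeffding's comparison between hypergeometric and binomial moment generating functions.

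\textbf{Step 3 (counting extreme weights).} Step~2 implies $|h(\alpha)| > \delta$ forces $\bigl||\alpha| - n/2\bigr| \ge (n/2)(\delta/C)^{1/l}$. Since $|\alpha|$ under uniform $\alpha \in \{0,1\}^n$ is Binomial$(n,1/2)$, Hoeffding's inequality yields
\[
\Pr_\alpha\Bigl[\,\bigl||\alpha| - n/2\bigr| \ge (n/2)\,(\delta/C)^{1/l}\,\Bigr] \;\le\; 2\,e^{-n(\delta/C)^{2/l}/2}.
\]
Since $C \le 2^l$ gives $(\delta/C)^{2/l} \ge \delta^{2/l}/4$, multiplying by $2^n$ yields $|\B_{T_l}(\delta)| \le 2\,e^{-n\delta^{2/l}/8}\cdot 2^n$, matching the lemma.

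\textbf{Main obstacle.} The delicate point is Step~2: the clean bound $|h_w| \le C(1-2w/n)^l$ is not literally true near $w = n/2$, where the right-hand side vanishes but $h_w$ need not (for even $l$, $h_{n/2}\ne 0$). The hypothesis $\delta \ge (8l/n)^{l/2}$ is precisely what bridges this: it forces $\delta^{2/l} \ge 8l/n$, so the threshold on $\bigl||\alpha|-n/2\bigr|$ sits safely outside the ``bad'' slab around $n/2$. One therefore combines the saddle-point estimate in the outer regime with the trivial bound $|h_w| \le 1$ in the inner slab, whose $\alpha$-mass is already negligible by Chernoff alone.
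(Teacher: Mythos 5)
The lemma is cited from \cite{KRT17} and the present paper offers no proof of it, so there is nothing in this paper to compare against; I can only assess the blind attempt on its own terms.

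Your overall scaffolding is the natural one: reduce $h(\alpha)$ to a function $h_w$ of the Hamming weight $w=|\alpha|$ (a normalized Krawtchouk polynomial), argue that $|h_w|>\delta$ forces $w$ far from $n/2$, and then count weights by a Chernoff/Hoeffding bound on the binomial weight distribution. The constants even line up: with $C=2^l$ the threshold becomes $|w-n/2|\ge (n/4)\,\delta^{1/l}$ and Hoeffding gives exactly $2e^{-n\delta^{2/l}/8}\cdot 2^n$.

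The genuine gap is in the ``Main obstacle'' paragraph, where you try to dispose of the region where your Step~2 estimate fails. You correctly observe that $|h_w|\le C(1-2w/n)^l$ cannot hold near $w=n/2$ (for even $l$, $h_{n/2}\ne 0$ but the right side vanishes), but the proposed repair is wrong. You write that in the inner slab one can fall back on the trivial bound $|h_w|\le 1$ because that slab's $\alpha$-mass ``is already negligible by Chernoff alone.'' This is backwards: the inner slab, $|w-n/2|<\sqrt{ln/2}$, has half-width about $\sqrt{2l}$ standard deviations of $\mathrm{Bin}(n,1/2)$, so it contains a \emph{constant fraction} of $\{0,1\}^n$ (for $l=O(1)$, well over half). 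If you include all those $\alpha$ in the bad set, you get $|\B_{T_l}(\delta)| = \Theta(2^n)$, not the claimed exponentially small fraction. What is actually required is a \emph{non-trivial} envelope estimate in the oscillatory region: one must show $|h_w|\le (8l/n)^{l/2}\le \delta$ for all $w$ with $|w-n/2|<\sqrt{ln/2}$, so that no $\alpha$ with weight in that slab lies in $\B_{T_l}(\delta)$ at all. That bound is precisely what the hypothesis $\delta\ge(8l/n)^{l/2}$ is calibrated to, and it is the real technical content of the lemma; your proof sketches it away. Separately, your route~(b) via Hoeffding's comparison of sampling with and without replacement is stated too loosely: that comparison applies to convex test functions, and $y\mapsto (-1)^y$ (equivalently $\prod(1-2Y_i)$) is not of that form, so some extra argument is needed to transfer the i.i.d. product $(1-2w/n)^l$ to the hypergeometric case. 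Until the inner-slab estimate and the outer-regime estimate are actually proved with controlled constants, the proposal is a plausible outline but not a proof.
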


\subsubsection*{Branching Program for a Distinguishing Problem}
Let $\XX$, $\AA$ be two finite sets of size larger than 1. Let $D_0$ be a distribution over the sample space $|\AA|$. 
Let $\{D_1(x)\}_{x\in\XX}$ be a set of distributions over the sample space $|\AA|$. 
Consider the following distinguishing problem: An unknown $\b\in\{0,1\}$ is chosen uniformly at random. 
If $\b=0$, the distinguisher is given independent samples from $D_0$. If $\b=1$, an unknown $x\in\XX$ is chosen uniformly at random, and
the distinguisher is given independent samples from $D_1(x)$. The distinguisher tries to learn $\b$ from the samples drawn according 
to the respective distributions. 

Formally, we model the distinguisher by a branching program as follows.

\begin{definition} {\bf Branching Program for a Distinguishing Problem:}
A branching program of length $m$ and width $d$, for distinguishing, is a directed (multi) graph with vertices 
arranged in $m+1$ layers containing at most $d$ vertices each. In the first layer, that we think of as layer~0, 
there is only one vertex, called the start vertex.
A vertex of outdegree~0 is called a  leaf.
All vertices in the last layer are leaves
(but there may be additional leaves).
Every non-leaf vertex in the program has $|\AA|$ outgoing edges, labeled by elements
$a\in\AA$, with exactly one edge labeled by each such $a$, and all these edges going
into vertices in the next layer.
Each leaf $v$ in the program is labeled by a $\tilde{\b}(v)\in\{0,1\}$, that
we think of as the output of the program on that leaf.

{\bf Computation-Path:} The samples
$a_1, \ldots, a_m \in \AA$
that are given as input,
define a
computation-path in the branching
program, by starting from the start vertex
and following at
step~$t$ the edge labeled by~$a_t$, until reaching a leaf.
The program outputs the label $\tilde{\b}(v)$ of the leaf $v$ reached by the computation-path.

{\bf Success Probability:}
The success probability of the program is the probability that $\tilde{\b}=\b$,
where $\tilde{\b}$ is the element that the program outputs, and the probability is over $\b,x,a_1,\ldots,a_m$ (where $\b$ is uniformly distributed over $\{0,1\}$, $x$ is uniformly distributed over $\XX$ and $a_1,\ldots,a_m$ are independently drawn from $D_0$ if $\b=0$ and $D_1(x)$ if $\b=1$).

\end{definition}

\section{Overview of the Proofs}
\label{sec:overview}
We prove our theorems using two different techniques. We prove Theorems \ref{thm:introsubspace} and \ref{thm:introsuperlinearparity} through reductions to the memory-sample lower bounds for the corresponding learning problems in Section \ref{sec:reduction}. Informally, for Theorem  \ref{thm:introsubspace}, we construct a branching program that learns the unknown vector $x$ from random linear equations in $\F_2$ by guessing each bit one by one sequentially and using the distinguisher, for distinguishing subspaces from uniform, to check if it guessed correctly. Then, we are able to lift the previously-known memory-sample lower bounds for the learning problem (Theorem \ref{thm:parity}) to the distinguishing problem. Similarly, we lift the memory-sample lower bounds for a variant of the learning problem in Theorem \ref{thm:sparse} to the get Theorem \ref{thm:introsuperlinearparity}.\\

We prove Theorems \ref{thm:local-arbit-predicate} and \ref{thm:local-parity} in Section \ref{sec:sublinear}.  Recall, a pseudorandom source fixes a $k$-ary Boolean predicate $P:\zo^k \rightarrow \zo$. It chooses a uniformly random $x \in \zo^n$ and generates samples $(\alpha,b) \in [n]^{(k)} \times \{0,1\}$ where $\alpha$ is a uniformly random (ordered) $k$-tuple of indices in $[n]$ and $b$ is the evaluation of $P$ at $x^{
\alpha}$ - the $k$-bit string obtained by projecting $x$ onto the coordinates indicated by $\alpha$. The truly random source samples $(\alpha,b)$ where $\alpha \in [n]^{(k)}$ and $b \in \zo$ are chosen uniformly and independently. The problem for a distinguisher is to correctly guess whether the $m$ samples are generated by a pseudorandom or a uniform source, when the samples arrive in a stream. We first show through a hybrid argument that a distinguisher $A$ that distinguishes between the uniform and pseudorandom source, with an advantage of $s$ over $1/2$, can also distinguish (with advantage of at least $s/m$) when only the $j$th (for some $j$) sample is drawn from the ``unknown source", the first $j-1$ samples are drawn from the pseudorandom source and the last $m-j$ samples are drawn from the uniform source. 

Let $v$ be the memory state of $A$ after seeing the first $j-1$ samples, which were generated from a pseudorandom source with a seed $x$ picked uniformly at random from $\zo^n$. Let $\P_{x|v}$ be the probability distribution of the random variable $x$ conditioned on reaching $v$. 
If the $j$th sample is generated using the same pseudorandom source, then $\forall \alpha \in [n]^{(k)}$, the bit $b$ is 0 with probability $\sum_{x': P(x'^\alpha)=0} \P_{x|v}(x')$ and 1 with probability $1-\sum_{x': P(x'^\alpha)=0} \P_{x|v}(x')$. If the $j$th sample is generated using the uniform source, then $\forall \alpha \in [n]^{(k)}$, the bit $b$ is 0 with probability $1/2$ and 1 with probability $1/2$. 
Thus, for any $\alpha$, $A$ can identify the ``unknown source" with an at most $\left|\sum_{x':P(x'^\alpha)=0} \P_{x|v}(x')-1/2\right|$ advantage. 

We show that when $A$ has low memory ($<n^\eps$ for some $0<\eps<1$), then with high probability, it reaches a state $v$ such that $\P_{x|v}$ has high min-entropy (informally, it's hard to determine the seed for the pseudorandom source). We then use $t$-resiliency of $P$ to show that when $\P_{x|v}$ has high min-entropy, then with high probability over $\alpha \in [n]^{(k)}$, $b$ behaves almost like in a uniform source (Lemma \ref{cl:spafourpred}), that is, $|\sum_{x':P(x'^\alpha)=0} \P_{x|v}(x')-1/2|$ is small. Hence, with high probability, it's hard for $A$ to judge with 'good' advantage whether $b$ was generated from a pseudorandom or a uniform source. Note that the last $m-j$ samples generated by a uniform source can't better this advantage.

\section{Time-Space Tradeoff through Reduction to Learning}
\label{sec:reduction}

In this section, we will prove time-space tradeoffs for the following distinguishing problems using black-box reduction from the 
corresponding learning problems.
\paragraph{Distinguishing Subspaces from Uniform}

Informally, we study the problem of distinguishing between the cases when the samples are drawn from a uniform distribution
over $\{0,1\}^n$ and when the samples are drawn randomly from a subspace of rank $k$ over $\F_2$.

Let $L(k,n)$ be the set of all linear subspaces of dimension $k$ ($\subseteq\{0,1\}^n$), that is, $L(k,n)=$
\begin{align*}
\left\{V\mid V=\{v\in\{0,1\}^n\mid \inner{w_i,v}=0\;\;\forall i\in[n-k]\} \text{ and where } w_1,w_2,...,w_{n-k} \text{ are linearly independent}\right\}
\end{align*}
Formally, we consider distinguishers for distinguishing between the following distributions:
\begin{enumerate}
\item $D_0$: Uniform distribution over $\{0,1\}^n$.
\item $D_1(S)$, $S\in L(k,n)$: Uniform distribution over $S$.  
\end{enumerate}

Note: If the subspace $S$ is revealed, it's easy for a branching program of constant width to distinguish w.h.p. by 
checking the inner product of the samples with a vector in the orthogonal complement of $S$. 

A distinguisher can distinguish subspaces if for an unknown random linear subspace $S\in L(k,n)$, 
it can distinguish between $D_0$ and $D_1(S)$. 
Formally, a distinguisher $L$, after seeing $m$ samples, has a success probability of $p$ if 
\begin{equation}\label{eq:s1}
\frac{\Pr_{u_1,...,u_m\sim D_0}[L(u_1,...,u_m)=0]+\Pr_{S\in_R L(k,n);u_1,...,u_m\sim D_1(S)}[L(u_1,...,u_m)=1]}{2}=p
\end{equation}

\begin{theorem}\label{thmmain:sub}
For $k>c_2\log n$ (where $c_2$ is a large enough constant), any algorithm that can distinguish $k$-dimensional subspaces over $\F_2^n$ from $\F_2^n$ ($\{0,1\}^n$), when samples are 
drawn uniformly at random from the subspace or $\F_2^n$ respectively, with success probability at least $\frac{1}{2}+2^{-o(k)}$ requires
either a memory of size $\Omega(k^2)$ or $2^{\Omega(k)}$ samples. 
\end{theorem}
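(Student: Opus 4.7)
The plan is to reduce $k$-bit parity learning (Theorem~\ref{thm:parity}) to the subspace distinguishing problem. Suppose for contradiction that $L$ is a branching program of width $w$ and length $m$ that distinguishes a random $k$-dimensional subspace of $\{0,1\}^n$ from uniform with advantage $\epsilon = 2^{-o(k)}$. I will use $L$ as a subroutine to build a parity learner $B$ for an unknown $x \in \{0,1\}^k$ of width $w \cdot 2^{O(k)}$ and length $\mathrm{poly}(n) \cdot 2^{o(k)} \cdot m$ that succeeds with probability $\ge 1/2$. Applying Theorem~\ref{thm:parity} to $B$ then forces $w \ge 2^{\Omega(k^2)}$ or $m \ge 2^{\Omega(k)}$, with the hypothesis $k > c_2 \log n$ (for $c_2$ sufficiently large) absorbing the $\mathrm{poly}(n)$ overhead into the $2^{\Omega(k)}/2^{\Omega(k^2)}$ exponents.

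The learner $B$ proceeds in $k$ phases, one bit of $x$ per phase. In phase $i$, having (w.h.p.) identified $\hat{x}^1,\dots,\hat{x}^{i-1}$, the learner considers each candidate for $x^i$ and for each transforms every fresh parity sample $(a_t,\beta_t)$ into the residual $(a_t^{i+1},\dots,a_t^k,\, c_t)$ with $c_t = \beta_t \oplus \sum_{j \le i} \hat{x}^j a_t^j$. A direct calculation gives the dichotomy: if $\hat{x}^i = x^i$ the residuals are uniform on the $(k-i)$-dim subspace $V_{(x^{i+1},\dots,x^k)} \subseteq \{0,1\}^{k-i+1}$, while if $\hat{x}^i \ne x^i$ they are uniform on $\{0,1\}^{k-i+1}$ (the wrong bit XORs in an independent uniform $a_t^i$). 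So bit-checking is exactly a ``subspace vs.\ uniform'' task --- the one $L$ is built for --- but in a smaller ambient dimension.

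To actually feed a residual to $L$, I lift it from $\{0,1\}^{k-i+1}$ to $\{0,1\}^n$ by appending $i$ fresh uniform bits and then applying a random linear injection $\psi:\{0,1\}^{k+1}\to\{0,1\}^n$ together with fresh uniform randomness along a chosen $(j-k)$-dimensional complement of $\psi$'s image. For any $j \in \{k,\dots,n-1\}$ this makes the correct-guess case a random $j$-dim subspace of $\{0,1\}^n$ and the wrong-guess case a random $(j+1)$-dim subspace. Since $L$'s advantage telescopes as $\sum_{j=k}^{n-1}(p_j - p_{j+1}) \ge 2\epsilon$ (where $p_j$ is $L$'s acceptance probability on a random $j$-dim subspace), pigeonhole gives some $j^\ast$ at which $L$ separates $j^\ast$-dim from $(j^\ast+1)$-dim with advantage $\Omega(\epsilon/n)$. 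The learner sweeps over all $j \in \{k,\dots,n-1\}$, rejecting candidates $\hat{x}$ inconsistent with a fresh consistency check on held-out parity samples, and amplifies $L$ by $\mathrm{poly}(n)/\epsilon^2 = 2^{o(k)}$ independent repetitions with majority vote to drive per-phase error below $2^{-\Omega(k)}$. A union bound over the $k$ phases then gives overall learner success $\ge 1/2$.

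The main obstacle I expect is precisely the dimension mismatch in the lifting: no single natural padding simultaneously sends correct-guess residuals onto a uniformly random $k$-dim subspace of $\{0,1\}^n$ \emph{and} wrong-guess residuals onto uniform on $\{0,1\}^n$, so $L$'s off-the-shelf guarantee cannot be applied directly. The hybrid/pigeonhole trick over intermediate dimensions is the bypass, and the assumption $k > c_2 \log n$ is exactly what is needed to absorb the resulting $\mathrm{poly}(n)$ factor into the learning-lower-bound exponents. A secondary, easier point is ensuring the induced subspace looks uniform in $L(k,n)$ via a random linear map whose description fits in the $2^{O(k)}$ width overhead that the reduction already spends on storing guessed bits, the sweep index $j$, and the boosting counter.
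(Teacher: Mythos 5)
Your high-level strategy matches the paper's: reduce parity learning (Theorem~\ref{thm:parity}) to the distinguisher, test one bit at a time by guess-and-check, use a telescope over adjacent subspace dimensions to locate a pair where the distinguisher has advantage $\Omega(s/n)$, and absorb the $\poly(n)$ overhead with the hypothesis $k > c_2 \log n$. The implementation, however, is noticeably more complicated than the paper's, and the ``main obstacle'' you identify (dimension mismatch forcing the pad-inject-complement lift) is one the paper simply avoids. The key difference is that the paper does \emph{not} build a learner for $x \in \{0,1\}^k$; it builds a learner for $x \in \{0,1\}^{k'}$, where $k' \in \{k+1,\dots,n\}$ is the telescope index. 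For bit $i$ of $x$ and guess $g$, the paper maps each sample $(a,b)$ to $M\cdot(a^{-i},\, b+ga^i,\, 0,\dots,0)$ with $M$ a uniformly random full-rank $n\times n$ map. When $g=x^i$ the image vectors lie on a uniformly random $(k'-1)$-dimensional subspace of $\{0,1\}^n$; when $g\ne x^i$, on a uniformly random $k'$-dimensional one. Because the learning problem lives in dimension $k'$ to begin with, the bit test already lands in exactly the useful pair of dimensions, so no padding/lifting is needed. Moreover this test is \emph{non-adaptive}: it never references previously learned bits, so there is no error propagation to union-bound over (the paper union-bounds only over the per-bit failure of amplification). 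Your ``residual'' formulation, by contrast, is adaptive and requires the lift $\{0,1\}^{k-i+1} \to \{0,1\}^{k+1} \to \{0,1\}^n \to (\text{complement padding})$; each step must preserve uniformity of the induced subspace, which is plausible but needs to be checked carefully --- a single random full-rank map handles this in one shot. Finally, the sweep over $j$ and the held-out consistency check are unnecessary: the useful dimension pair is a property of $L$ alone, independent of $x$, so in the non-uniform branching-program model you can simply hardwire $j^\ast$ (as the paper hardwires $k'$) into the learner, saving both length and the need for the consistency subroutine. Both arguments ultimately invoke Theorem~\ref{thm:parity} (you in dimension $k$, the paper in dimension $k'$) and yield the stated $\Omega(k^2)$ / $2^{\Omega(k)}$ tradeoff, but the ``learn in dimension $k'$'' choice is what makes the paper's reduction clean.
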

We prove the theorem in Subsection \ref{sec:subspace}. Briefly, we prove that using a distinguisher for distinguishing subspaces, we can construct a branching program that learns an unknown bit vector $x$ from random linear equations over $\F_2$. Then, we are able to lift the time-space tradeoffs of Theorem \ref{thm:parity}.

\begin{remark}[\textbf{Tightness of the Lower Bound}] \label{rem:tightness-1}
We note two easy upper bounds that show that our results in Theorem \ref{thmmain:sub} are tight (up to constants in the exponent). Firstly, we observe an algorithm $B_1$ that distinguishes subspaces of dimension $k$ from uniform, using $O(k^2)$ memory and $O(k)$ samples, with probability at least $3/4$ ($0<k\le n-1$). $B_1$ stores the first $\min(8k,n)$ bits of the first $8k$ samples (in $O(k^2)$ memory); outputs 1 if the samples (projected onto the first $\min(8k,n)$ coordinates) belong to a $\le k$-dimensional subspace (of $\{0,1\}^{\min(8k,n)}$), and 0 otherwise (can be checked using gaussian elimination). When the samples are drawn from $D_1(S)$ for some $k$-dimensional subspace $S$, then $B_1$ always outputs the correct answer. When the samples are drawn from a uniform distribution on $\{0,1\}^n$, the probability that $8k$ samples form a $k$-dimensional subspace is at most 
$${8k\choose k}\cdot \frac{1}{2^{7k}}\le (8e)^k2^{-7k}<2^{-2k}\le 1/4$$ (because, if the $8k$ samples form a $k$-dimensional subspace, then at least $7k$ of them are linearly dependent on the previously stored samples and that happens with at most $1/2$ probability for each sample). Hence, $B_1$ errs with at most $1/4$ probability.

Secondly, we observe that there exists a branching program that distinguishes subspaces of dimension $k$ from uniform using constant width and $O(k\cdot 2^{k})$ length with probability at least $3/4$. Before, we show a randomized algorithm $P$ that distinguishes between $D_0$ and $D_1(S)$ for every $S\in L(k,n)$ with high probability. $P$ is described as follows:
\begin{enumerate}
\item\label{itm3:1} Repeat steps \ref{itm3:2} to \ref{itm3:3} sequentially for $t=10\cdot 2^k$ iterations.
\item\label{itm3:2} Pick a non-zero vector $v$ uniformly at random from $\{0,1\}^n$. For the next $2k$ samples (of the form $a\in \{0,1\}^n$), check if $\inner{a,v}=0$.
\item\label{itm3:3} If all the $2k$ samples are orthogonal to $v$, exit the loop and output $1$.
\item Output 0 (None of the randomly chosen vectors were orthogonal to all the samples seen in its corresponding iteration).
\end{enumerate}
The number of samples seen by $P$ is $20k\cdot 2^k$. Now, we prove that for every subspace $S$ of dimension $k$, that is, $S\in L(k,n)$, $P$ distinguishes between $D_0$ and $D_1(S)$ with probability at least $1-\frac{1}{2}(e^{-5}+\frac{10}{2^k})\ge 3/4$\footnote{$k\ge 5$}. 

When the samples are drawn from $D_0$, the probability that $P$ outputs 1 is equal to the probability that in at least one of the $t$ iterations, the randomly chosen non-zero vector $v$ was orthogonal to the $2k$ samples drawn uniformly from $\zo^{n}$. Here, the probability is over $v$ and the samples. By union bound, we can bound the probability of outputting $1$ (error) by 
$$10\cdot 2^k\cdot \left(\frac{1}{2}\right)^{2k}=\frac{10}{2^k}. $$ 
For a fixed subspace $S\in L(k,n)$, the probability that we pick a non-zero vector $v\in\zo^n$ that is orthogonal to $S$ is at least $\frac{2^{n-k}-1}{2^n-1}>2^{-(k+1)}$. Therefore, when the samples are drawn from $D_1(S)$, the probability that $P$ outputs 0 (error) is upper bounded by $\left(1-\frac{1}{2^{k+1}}\right)^{10\cdot 2^k}\le e^{-5}.$
Here, the probability is over the vectors $v$ and the samples. Now to construct a constant width but $20k\cdot 2^k$ length branching program that distinguishes with probability at least $3/4$, we consider a bunch of branching programs each indexed by $t$ vectors that are used in step \ref{itm3:2} of the algorithm $P$. It's easy to see that for a fixed set of $t$ vectors, $P$ can be implemented by a constant width branching program. As, when the $t$ vectors are uniformly distributed over $\zo^n$ (non-zero), $P$ can distinguish with probability at least $3/4$ for every subspace $S\in L(k,n)$, there exists a fixing to the $t$ vectors such that the corresponding branching program distinguishes between $D_0$ and $D_1(S)$ (when $S$ is chosen uniformly at random from $L(k,n)$) with probability at least $3/4$.
\end{remark}

\paragraph{Distinguishing Satisfiable Sparse Equations from Uniform}
Informally, we study the problem of distinguishing between the cases when the samples are drawn from 
satisfiable sparse equations over $\F_2$ and when the samples are drawn from random sparse equations.

Formally, we consider the distinguishing problem between the following two distributions:
\begin{enumerate}
\item $D_0$: Distribution on $(n+1)$-length vectors $(v^1,v^2,...,v^n,b)\in\{0,1\}^{n+1}$ where $\forall i\in[n], ~v^i$ is 1 
with probability $\frac{k}{n}$ and 0 otherwise, and $b$ is 1 with probability $\frac{1}{2}$ and 0 otherwise. 
\item $D_1(x)$, $x\in\{0,1\}^n$: Distribution on $(n+1)$-length vectors $(v^1,v^2,...,v^n,b)\in\{0,1\}^{n+1}$ where $\forall i\in[n], ~v^i$ is 1
 with probability $\frac{k}{n}$ and 0 otherwise, and $b=\inner{v,x}$ where $v=(v^1,v^2,...,v^n)$.
\end{enumerate}
Here, $k$ is the sparsity parameter. 

We say that a distinguisher can distinguish satisfiable sparse equations of sparsity $k$ from random ones if, 
when $x$ is unknown and chosen uniformly at random from $\{0,1\}^n$, it can distinguish between $D_0$ and $D_1(x)$.
Formally, a distinguisher $L$, after seeing $m$, has a success probability of $p$ if 
\begin{align}\label{eq:succprsparse}
\frac{\Pr_{u_1,...,u_m\sim D_0}[L(u_1,...,u_m)=0]+\Pr_{x\in_R\{0,1\}^n;u_1,...,u_m\sim D_1(x)}[L(u_1,...,u_m)=1]}{2}\ge p
\end{align}

\begin{theorem}\label{thm:superlinearparity}
For large enough $n$ and $k>c_5\log n$ (where $c_5$ is a large enough constant) and $k\le \frac{n}{4}$, any algorithm that can distinguish random sparse parities of sparsity $k$ on $n$ variables from satisfiable ones,
with success probability at least $\frac{1}{2}+2^{-o(k)}$, requires
either a memory of size $\Omega(nk)$ or $2^{\Omega(k)}$ samples. 
\end{theorem}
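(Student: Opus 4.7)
The plan is to prove Theorem \ref{thm:superlinearparity} by reduction to the memory-sample lower bound for learning sparse parities (Theorem \ref{thm:sparse}, part 1), in direct analogy with the proof of Theorem \ref{thmmain:sub}. Assume for contradiction that an ROBP $L$ of memory $o(nk)$ and length $m\le 2^{o(k)}$ distinguishes $D_0$ from $D_1(x)$ (for uniform $x\in\{0,1\}^n$) with success probability at least $\tfrac{1}{2}+2^{-o(k)}$. We construct a branching program $A$ that learns $x$ from random Bernoulli-$(k/n)$-sparse linear equations. A Bernoulli-sparse analog of Theorem \ref{thm:sparse} follows from the original version by conditioning Bernoulli samples on weight exactly $k$ (which preserves width and inflates length by only an $O(\sqrt{k})=2^{o(k)}$ factor); this is what we will contradict.

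The learner $A$ guesses $x$ bit by bit. After correctly fixing $\hat x^1,\ldots,\hat x^{i-1}$, $A$ samples a uniform pad $r\in\{0,1\}^{i-1}$ (as part of its randomness). For each candidate $g\in\{0,1\}$ of $\hat x^i$, and for each fresh learning sample $(v_t,b_t)$ with $b_t=\inner{v_t,x}$, $A$ draws an independent Bernoulli-$(k/n)$ bit $u_t$, forms $v'_t$ by replacing the $i$-th coordinate of $v_t$ by $u_t$, and feeds $L$ the pair $(v'_t,c_t)$ where
\[
c_t \;=\; b_t \;\oplus\; \sum_{j=1}^{i-1} v_t^j\bigl(\hat x^j\oplus r^j\bigr) \;\oplus\; (u_t\oplus v_t^i)\,g\mper
\]
A direct computation shows that if $g=x^i$ then $(v'_t,c_t)$ is distributed exactly as $D_1(y)$ with $y=(r,x^{\ge i})$, which is uniformly distributed over $\{0,1\}^n$; if $g\ne x^i$ then $(v'_t,c_t)$ coincides with $D_0$ except on the event $\{v'_t=0\}$, whose probability is at most $(1-k/n)^n\le e^{-k}$. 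Consequently the expected outputs of $L$ on the two candidate streams differ by at least $2\cdot 2^{-o(k)}-O(m\cdot e^{-k})=\Omega(2^{-o(k)})$. Running $T=O(\log n)\cdot 2^{o(k)}$ independent copies of $L$ per candidate and comparing acceptance frequencies, $A$ correctly determines $\hat x^i$ with probability at least $1-1/(2n)$; a union bound over the $n$ bits then yields $\hat x=x$ with probability at least $1/2$.

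The memory of $A$ is $O(w+n)=o(nk)$ (using $w=o(nk)$ and $n=o(nk)$ since $k\to\infty$), and the total sample count is $n\cdot T\cdot m=n\log n\cdot 2^{o(k)}$. Under the hypothesis $k>c_5\log n$ with a sufficiently large constant $c_5$, this is at most $2^{k/c_5+o(k)}$, which is strictly less than the $2^{\Omega(k)}$ threshold of Theorem \ref{thm:sparse}. Since $A$'s success probability $\ge 1/2$ vastly exceeds the $2^{-ck}$ threshold from that theorem, we reach the desired contradiction.

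The principal technical obstacle is the total-variation analysis in the wrong-guess case: a priori the Bernoulli-$(k/n)$ sparsity of $v_t$ introduces a nontrivial bias into $c_t$ when $g\ne x^i$, but the double randomization over the fresh bit $u_t$ and the pad $r$ is precisely what confines the deviation from $D_0$ to the negligible event $\{v'_t=0\}$. The other step to spell out is the Bernoulli-sparse analog of Theorem \ref{thm:sparse} via the weight-$k$-conditioning argument sketched above.
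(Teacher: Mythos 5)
Your overall plan---reducing to a Bernoulli-sparse learning lower bound by guessing $x$ bit by bit and using the distinguisher as a correctness oracle---is the same high-level approach as the paper (which reduces to its Lemma on learning from Bernoulli-$(k/n)$ sparse equations). However, the central ``direct computation'' you assert is incorrect in two places, and both failures are precisely what the paper's reduction is engineered to avoid.

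First, the wrong-guess case does not come close to producing $D_0$. Assuming $\hat x^j = x^j$ for $j<i$, one computes
\[
c_t \;=\; \Big(\sum_{j<i} v_t^j r^j\Big) \oplus v_t^i\,(x^i\oplus g) \oplus u_t\,g \oplus \Big(\sum_{j>i} v_t^j x^j\Big)\mper
\]
When $g \ne x^i$ this is $\inner{v'_t, y} \oplus v_t^i \oplus u_t$ with $y=(r,x^{\geq i})$. Since $u_t = (v'_t)^i$ is determined by $v'_t$, the only fresh randomness in $c_t\mid v'_t$ is the discarded bit $v_t^i$, which is $\mathrm{Ber}(k/n)$. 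With $k\le n/4$ this has bias $1-2k/n \ge 1/2$, so the conditional law of $c_t$ given $v'_t$ is nowhere near uniform; the per-sample total-variation distance from $D_0$ is $1/2-k/n = \Omega(1)$, not $e^{-k}$, and your bound $O(m\,e^{-k})$ on the cumulative deviation is therefore false. There is no event of small probability on which the two distributions disagree---they disagree on essentially every sample. The paper's proof handles exactly this issue by inserting a rejection-sampling step (forward a sample with $a^i=0$ only with probability $k/(n-k)$) so that, conditioned on forwarding, the excess $i$-th bit is uniform $\mathrm{Ber}(1/2)$ and independent of the rest; omitting that step breaks the reduction.

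Second, the pad $r\in\{0,1\}^{i-1}$ only re-randomizes the first $i-1$ coordinates. In the correct-guess case the distinguisher is fed samples from $D_1(y)$ with $y=(r,x^{\geq i})$, and for a fixed target $x$ only the first $i-1$ coordinates of $y$ are uniform---the remaining $n-i+1$ coordinates are fixed. The distinguisher's advantage guarantee (Equation~\eqref{eq:succprsparse}) is only over a uniformly random secret, so you cannot conclude anything about its behavior on the specific $y$ your reduction produces. The per-$x$ success claim you need for the union bound over bits does not follow. The paper sidesteps this by using a full pad $y\in\{0,1\}^n$ (and working over $n+1$ variables, dropping rather than replacing the $i$-th coordinate) so that the effective secret $x^{-i}+y$ is uniform on $\{0,1\}^n$ for every fixed $x$.

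Both gaps are real and substantive. To repair the argument you need (a) the rejection-sampling step so that the excess bit is uniform conditioned on forwarding, and (b) a pad covering all $n$ coordinates so the effective secret is uniform. Your Bernoulli-to-exact-weight sketch for the learning lemma is roughly the right idea but also needs care: the paper's Lemma~\ref{lem:product} restricts to weights in $[k/2,2k]$, argues the restricted source is a convex combination of exact-weight sources, and applies the $L_2$-extractor framework of \cite{GRT18} rather than a naive per-sample conditioning.
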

\begin{remark}[\textbf{Tightness of our Lower Bound}] \label{rem:tightness-2}
We note two easy upper bounds that show that our results in Theorem \ref{thm:superlinearparity} are almost tight. Firstly, we observe that there's an algorithm $B_1$ of memory $O(nk\log n)$ that uses $O(n)$ samples and can distinguish random sparse parities of sparsity $k$ from satisfiable ones, with probability of at least $3/4$. $B_1$ just stores $O(n)$ samples (in $O(nk\log n)$ memory); if there exists $x$ that satisfies all the samples, it outputs 1, otherwise it outputs 0. When the samples are satisfiable, that is, drawn from $D_1(x)$ (for some $x$), $B_1$ always outputs 1. When the samples are random, using the union bound, it's easy to see that the probability that there exists an $x$ that satisfies all the $O(n)$ samples is exponentially small in $n$. 

Second, there's an algorithm $B_2$ of constant memory that uses $O(n\cdot 2^{O(k)})$ samples and can distinguish random sparse parities of sparsity $k$ from satisfiable ones, with probability of at least $3/4$. The probability that a learning algorithm sees sample $(a,b)$, such that $a=(1,0,...,0)$, is at least $\frac{ke^{-2k}}{n}$ for $k<n/2$; thus, one can just wait for say $5$ such samples and see if the values of $b$ are drawn randomly or are fixed, giving a constant memory and $O(ne^{2k})$ samples algorithm that distinguishes with high probability. 
\end{remark}

The complete proof of Theorem \ref{thm:superlinearparity} is given in Section \ref{sec:supersparse}. Briefly, we prove that using such a distinguisher, we can construct a branching program that learns an unknown bit vector $x$ from sparse linear equations of sparsity $k$ over $\F_2$. Unlike before, when we were able to lift, we are not able to directly lift the time-space tradeoffs of Theorem \ref{thm:sparse}, because these lower bounds hold when the equations are of sparsity \emph{exactly}-$k$. Following the proof of Theorem \ref{thm:sparse}
in \cite{GRT18} very closely, we can prove the following lemma:

\begin{lemma}\label{lem:product}
Any branching program that learns $x\in\{0,1\}^n$ from random linear equations over $\F_2$ of type 
$(a,b)=(a^1,a^2,...,a^n,b)\in\{0,1\}^{n+1}$, where $\forall i\in[n],\; a^i=1$ with probability 
$\frac{k}{n}$ and $b=\inner{a,x}$, with success probability $2^{-ck}$, requires either 
width of size $2^{\Omega(n\cdot k)}$ or length of $2^{\Omega(k)}$
(where $c$ is a small enough constant, $k\le \frac{n}{4}$).
\end{lemma}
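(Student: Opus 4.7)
The plan is to mirror the proof of Theorem \ref{thm:sparse} from \cite{GRT18} (for the exactly-$\ell$-sparse case), substituting the product measure $D_0 = \text{Ber}(k/n)^{\otimes n}$ on rows for the uniform measure on $T_\ell$ throughout. The framework of \cite{GRT18} reduces the branching-program lower bound to an $L_2$-extractor-style property for the matrix $M : \{0,1\}^n \times \{0,1\}^n \to \{-1,1\}$ with $M(a,x) = (-1)^{\inner{a,x}}$, where rows are now weighted by $D_0$ rather than by the uniform distribution on $T_\ell$. Concretely, I need to show that for every non-negative $f : \{0,1\}^n \to \R$ with $\norm{f}_2/\norm{f}_1 \le 2^{\ell'}$,
\[
\Pr_{a \sim D_0}\!\left[\,\frac{|\inner{M_a, f}|}{\norm{f}_1} > 2^{-r'}\,\right] \le 2^{-k'},
\]
with parameters $k' = \Omega(nk)$, $\ell' = \Theta(k)$, $r' = \Theta(k)$, matching those of Theorem \ref{thm:sparse} specialized to $\ell = k$.

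To establish this weighted extractor bound, I will decompose by Hamming weight. Since $a \sim D_0$ conditioned on $|a| = \ell$ is uniform on $T_\ell$,
\[
\Pr_{a \sim D_0}\!\left[\frac{|\inner{M_a, f}|}{\norm{f}_1} > 2^{-r'}\right] = \sum_{\ell=0}^n \Pr[|a| = \ell] \cdot \Pr_{a \in T_\ell}\!\left[\frac{|\inner{M_a, f}|}{\norm{f}_1} > 2^{-r'}\right].
\]
By Chernoff's inequality, the contribution from $\ell \notin [k/2, 2k]$ is at most $e^{-\Omega(k)}$ in total and can be absorbed. For each $\ell$ in the typical range $[k/2, 2k]$, the conditional probability is bounded by exactly the Fourier-analytic arguments that \cite{GRT18} and \cite{KRT17} use in the exactly-$\ell$-sparse case, in particular Lemma \ref{lem:spafour} together with the symmetric/Krawtchouk structure of Fourier coefficients on $T_\ell$. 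Summing the weighted contributions over $\ell$ yields the required extractor inequality, losing only constant factors in the exponent compared with the exactly-$k$ case.

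Once the weighted extractor property is in hand, the proof is completed by invoking the general memory-sample lower bound theorem of \cite{GRT18}. Its martingale/potential-function argument carries over from a uniform row measure to the weighted row measure $D_0$ as long as the extractor inequality is stated in the matching measure. The main obstacle is precisely this bookkeeping: faithfully re-tracing each step of the general theorem with $D_0$ replacing the uniform measure, and checking that every expectation or counting argument over rows has a $D_0$-weighted analogue with the same parameters. Since $D_0$ concentrates sharply around weight $k$, the per-layer reduction to the $T_\ell$ setting costs only constant factors, yielding the claimed width lower bound $2^{\Omega(nk)}$ and length lower bound $2^{\Omega(k)}$ of Lemma \ref{lem:product}.
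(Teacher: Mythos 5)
Your high-level approach is essentially the same as the paper's: truncate the product measure $D_0 = \mathrm{Ber}(k/n)^{\otimes n}$ to the typical weight range $[k/2,2k]$ (costing $2m\cdot 2^{-\Omega(k)}$ in success probability by a Chernoff / total-variation argument), observe that the conditioned row distribution is a convex combination of uniform measures on the weight slices $T_\ell$, use the known fact that each $M_\ell$ on $T_\ell$ is a $(\Theta(k),\Theta(n))$-$L_2$-extractor with error $2^{-\Theta(k)}$, and then invoke the general memory-sample lower bound of \cite{GRT18}. Two points deserve a flag, however.

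First, your stated extractor parameters are wrong: you write $k' = \Omega(nk)$ and $\ell' = \Theta(k)$, but the correct parameters are $k' = \Theta(k)$ and $\ell' = \Theta(n)$ (error $2^{-r'}$ with $r' = \Theta(k)$). Since there are only $|A| \le 2^n$ rows to begin with, a bound of the form ``at most $2^{-k'}|A|$ bad rows'' with $k' = \Omega(nk)$ is vacuous or impossible; the width bound $2^{\Omega(nk)}$ comes from the product $k'\cdot\ell' = \Theta(k)\cdot\Theta(n)$ inside the GRT18 machinery, not from $k'$ alone.

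Second, and more substantively, you treat ``re-tracing each step of the general theorem with $D_0$ replacing the uniform measure'' as the main remaining obstacle. This is not actually necessary. \cite{GRT18} already provides a version of the extractor definition (their Definition 6.1) that incorporates a non-uniform row measure directly into the matrix $M'((a,b),x) = P_1^x(a)\cdot\mathbf 1_{\langle a,x\rangle = b}$, and their Theorem 9 is stated at that level of generality. The paper's proof simply verifies that the truncated product measure makes $M'$ a $(\Theta(k),\Theta(n),1)$-$L_2$-extractor with error $2^{-\Theta(k)}$ in that sense and then cites Theorem 9 as a black box. So the ``bookkeeping'' you worry about is already done for you in \cite{GRT18}; you only need to check the extractor inequality for the convex combination, which reduces (as you say) to the per-slice bounds from \cite{KRT17,GRT18}. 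In short: the plan is sound and mirrors the paper, but you should correct the parameter roles and note that the weighted-row-measure version of the GRT18 theorem already exists, so no re-derivation is required.
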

The proof is given in Section \ref{sec:supersparse}.
Therefore, through reduction as stated before, we are able to lift the time-space tradeoffs of Lemma \ref{lem:product} to get Theorem \ref{thm:superlinearparity}.

\subsection{Proof of Theorem \ref{thmmain:sub}}\label{sec:subspace}
% \subsubsection*{Proof of Theorem \ref{thmmain:sub}}
\begin{proof} 
We will prove through reduction to Theorem \ref{thm:parity}. Let $B$ be the 
branching program that distinguishes subspaces of dimension $k$, with width $d$, length $m$ 
and success probability $\frac{1}{2}+s$. 
Using $B$, we show that there exists a branching program for parity learning over $\{0,1\}^{k'}$ (where $k<k'\le n$ and 
would be defined concretely below), with 
width $d2^{k'}(\frac{8n^2\log n}{s^2})^2$, length $mk'(\frac{8n^2\log n}{s^2})$ and success probability $1-\frac{1}{n}$. 
Hence, Theorem \ref{thm:parity} implies that either $d2^{k'}(\frac{8n^2\log n}{s^2})^2=2^{\Omega(k'^2)}$
or $mk'(\frac{8n^2\log n}{s^2})=2^{\Omega(k')}$. Assuming $s\ge 2^{-c_1k'}$ (where $c_1$ is a small enough constant),
$k\ge c_2\log n, c_3$ where $c_2, c_3$ are large enough constants, we get that $d=2^{\Omega(k'^2)}$ or $m=2^{\Omega(k')}$.
As $k'>k$: we have shown that if $B$ has success probability at least $\frac{1}{2}+2^{-c_1k}$ (for small enough constant $c_1$) at distinguishing
$k$-dimensional subspaces, $B$ has width at least $2^{\Omega(k^2)}$ or length $2^{\Omega(k)}$.\\

Firstly, using a simple argument, we show that $B$ can distinguish between 
subspaces of dimension $k'-1$ and $k'$ for some $k+1\le k'\le n$ with success probability
$\ge \frac{1}{2}+\frac{s}{n}$.  Writing the expression for success probability from Equation \ref{eq:s1},
\begin{align*}
&\frac{\Pr_{u_1,...,u_m\sim D_0}[B(u_1,...,u_m)=0]+\Pr_{S\in_R L(k,n);u_1,...,u_m\sim D_1(S)}[B(u_1,...,u_m)=1]}{2}&&=\frac{1}{2}+s \\
\implies&\Pr_{u_1,...,u_m\sim D_0}[B(u_1,...,u_m)=0]+1-\Pr_{S\in_R L(k,n);u_1,...,u_m\sim D_1(S)}[B(u_1,...,u_m)=0]&&=1+2s \\
\implies&\Pr_{u_1,...,u_m\sim D_0}[B(u_1,...,u_m)=0]-\Pr_{S\in_R L(k,n);u_1,...,u_m\sim D_1(S)}[B(u_1,...,u_m)=0]&&=2s
\end{align*}\\

The last expression on the left hand side can be written as
\[\sum_{k'=n}^{k+1}{\left(\Pr_{S\in_R L(k',n);u_1,...,u_m\sim D_1(S)}[B(u_1,...,u_m)=0]-\Pr_{S\in_R L(k'-1,n);u_1,...,u_m\sim D_1(S)}[B(u_1,...,u_m)=0]\right)}=2s\]
($D_1(S)=D_0$ for $S\in L(n,n)$ as $L(n,n)=\{\{0,1\}^n\}$)\\

Therefore, there exists $k+1\le k'\le n$ such that 
\begin{align}\label{eq:succpr}
\left(\Pr_{S\in_R L(k',n);u_1,...,u_m\sim D_1(S)}[B(u_1,...,u_m)=0]-\Pr_{S\in_R L(k'-1,n);u_1,...,u_m\sim D_1(S)}[B(u_1,...,u_m)=0]\right)\ge \frac{2s}{n}
\end{align}\\

We have shown that $B$  can solve the following distinguishing problem, that is, 
learn $\b$ with success probability at least $\frac{1}{2}+\frac{s}{n}$: If $\b=0$, then the distinguisher is given samples from a 
$k'$-dimensional subspace of $\{0,1\}^n$, otherwise (when $\b=1$), 
the distinguisher is given samples from a $(k'-1)$-dimensional subspace of $\{0,1\}^n$. 
Here, the probability is over $\b$, the $k'$ and $(k'-1)$-dimensional subspaces and the samples seen by $B$.\\

Next, using $B$, we construct a randomized learning algorithm $P$ for parity learning. 
The parity learning problem is as follows: a secret $x\in\{0,1\}^{k'}$ is chosen uniformly at random, 
the learner wants to learn $x$ from random linear equations over $\F_2$, that is, 
$(a,b)$ where $a\in_R\{0,1\}^{k'}$ and $b=\inner{a,x}$ ($\inner{a,x}$ is the inner product of $a$ and $x$ modulo 2). 
$P$ uses $B$ to guess each bit of $x$ one by one as follows:
\begin{enumerate}
\item For $i\in \{1,2,...,k'\}$, do Steps \ref{itm:start} to \ref{itm:end}.
\item\label{itm:start} Initiate $count_0=0, count_1=0$. These keep counts for the number of times the following algorithm outputs 
$0, 1$ respectively as the guess for $x^i$. 
\item \label{itm:2} Pick $g$ to be 0 with probability $\frac{1}{2}$ and 1 with probability $\frac{1}{2}$. This is a guess for $x^i$.

\item \label{itm:3} Let $\M$ be the set of all rank-$n$ linear maps $M:\{0,1\}^n\rightarrow\{0,1\}^n$ over $\F_2$, that is, the rows 
$\{M_r\}_{r\in[n]}$ are linearly independent. Pick $M\in\M$ uniformly at random. 

Let $f_M: \{0,1\}^{k'}\times\{0,1\}\rightarrow \{0,1\}^n$ be defined 
as $f_M(a,b)=M\cdot (a^{-i},b+ga^i,0,0,...0)$ (where $\cdot$ represents matrix-vector product, 
and $(a^{-i},b+ga^i)$ is appended with $n-k'$ zeroes).

For the next $m$ samples $(a_1,b_1),(a_2,b_2),...,(a_m,b_m)$, $P$ runs $B$ with $f_M(a_j,b_j)$ as $B$'s $j^{th}$ sample. 
%Note that $P$ can calculate $f_M(a_j)$ for all $j\in[m]$ knowing $b_j$ and the guess $g$.

\item  \label{itm:4} If $B(f_M(a_1,b_1),...,f_M(a_m,b_m))$ outputs $0$, then increase $count_{1-g}$ by 1, otherwise, increase $count_{g}$ by 1.
In the discussions below, we will see that we increase the count for $x^i$ with probability at least $(\frac{1}{2}+\frac{s}{n})$. 

\item\label{itm:end} Repeat steps \ref{itm:2} to \ref{itm:4} for $t=\frac{8n^2\log n}{s^2}$ times. If $count_0> count_1$, set $x'^i=0$ and 
store, else set $x'^i=1$ and store. As we will see below, $x'^i=x^i$ with probability at least $(1-\frac{1}{n^2})$.
\item Output $x'$ as the guess for $x$. 
\end{enumerate}

%Before, we calculate the memory and number of samples needed by $P$, let's calculate its success probability.
\begin{claim}\label{claim:succpr} For each $x\in\{0,1\}^{k'}$, if $x$ is the chosen secret, $P$ outputs $x'=x$ with probability at least $(1-\frac{1}{n})$.
\end{claim}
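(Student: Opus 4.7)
The plan is to analyze, for a fixed secret $x$ and index $i$, the distribution of samples that a single trial of steps~\ref{itm:2}--\ref{itm:4} feeds to $B$, and then amplify the resulting bias with Chernoff--Hoeffding. Since $b_j = \langle a_j, x\rangle$, the vector that $M$ is applied to is
\[
v_j \;=\; \bigl(a_j^{-i},\; \langle a_j^{-i}, x^{-i}\rangle + (x^i + g)\,a_j^i,\; 0,\ldots, 0\bigr)\in\{0,1\}^n.
\]
If $g = x^i$ then the $(x^i+g)a_j^i$ term vanishes, so $v_j$ is uniform in a \emph{fixed} $(k'-1)$-dimensional subspace $V_1 \subseteq \{0,1\}^n$. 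If $g \ne x^i$ the fresh uniform bit $a_j^i$ enters, so $v_j$ is uniform in a fixed $k'$-dimensional subspace $V_0$.

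Next I would invoke the standard fact that $GL_n(\F_2)$ acts transitively on the set of $d$-dimensional subspaces of $\{0,1\}^n$ for each $d$. Hence for a uniformly random $M \in \M$ the image $M(V_1)$ is uniform in $L(k'-1, n)$ and $M(V_0)$ is uniform in $L(k', n)$. Because the $m$ samples within a trial come from fresh oracle calls, $B$ sees $m$ i.i.d.\ samples from $D_1(S)$ for a uniformly random $S \in L(k', n)$ when $g \ne x^i$, and from $D_1(S')$ for a uniformly random $S' \in L(k'-1, n)$ when $g = x^i$. Writing $p_d := \Pr[B \text{ outputs } 0]$ in each case, Equation~(\ref{eq:succpr}) gives $p_{k'} - p_{k'-1} \ge 2s/n$. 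Tracking the update rule ($B=0$ increments $count_{1-g}$, $B=1$ increments $count_g$) and averaging over the unbiased choice of $g$, the probability that a single trial increments $count_{x^i}$ equals
\[
\tfrac{1}{2}(1-p_{k'-1}) + \tfrac{1}{2}\,p_{k'} \;=\; \tfrac{1}{2} + \tfrac{1}{2}(p_{k'} - p_{k'-1}) \;\ge\; \tfrac{1}{2} + \tfrac{s}{n}.
\]

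Finally, the $t = 8n^2\log n/s^2$ trials for a fixed $i$ use fresh $g$, $M$, and oracle samples, so they are i.i.d.\ Bernoulli with bias at least $s/n$ above $1/2$. Chernoff--Hoeffding then gives that the majority-based guess $x'^i$ disagrees with $x^i$ with probability at most $\exp(-2t(s/n)^2) \le 1/n^2$ for $n$ large enough. A union bound over the $k' \le n$ coordinates yields overall success probability at least $1 - k'/n^2 \ge 1 - 1/n$, proving the claim. The only mildly non-routine step is the uniform-subspace observation about $M(V_0)$ and $M(V_1)$; once that is in hand, everything else is standard bias amplification.
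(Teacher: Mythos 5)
Your proposal is correct and follows essentially the same route as the paper's proof: you establish that a single trial feeds $B$ i.i.d.\ samples from a uniformly random $(k'-1)$- or $k'$-dimensional subspace according to whether $g=x^i$ or not (via the transitive $GL_n(\F_2)$ action, which the paper asserts in one line), combine with Equation~\eqref{eq:succpr} to get a per-trial bias of $s/n$, and then amplify by Chernoff--Hoeffding and union bound exactly as the paper does. The only cosmetic difference is the Chernoff exponent constant ($2$ versus $1/4$), and both satisfy the $1/n^2$ bound with $t = 8n^2\log n/s^2$.
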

Here, the probability is over the samples, all the random guesses $g$ in Step \ref{itm:2} and the linear maps $M$ in Step \ref{itm:3}.
\begin{proof}
The probability that a single iteration of Steps \ref{itm:2} to \ref{itm:4} increases the counter for $x^i$ is
the probability that $B(f_M(a_1,b_1),...,f_M(a_m,b_m))$ outputs $1$ when $x^i=g$ and 0 when $x^i=1-g$. 

Consider the subspace $V_g=\{(a^{-i},b+ga^i)\in\{0,1\}^{k'}\mid a=(a^1,...,a^{k'})\in\{0,1\}^{k'}, b=\inner{a,x}\}$.
Here, the additions are modulo 2 and $a^{-i}\in\{0,1\}^{k'-1}$ is $a$ with the $i^{th}$ coordinate deleted. 
When $x^i=g$, $V_g$ forms a $(k'-1)$-dimensional subspace as $(x^{-i},1)$ is orthogonal to all the vectors in $V_g$.
As $M$ is full rank, the range of $f_M(a,b)$ forms a $(k'-1)$-dimensional
subspace too and under $M$ being picked uniformly at random from $\M$, we get a uniform distribution on the $(k'-1)$-dimensional
subspaces ($L(k'-1,n)$). 
When $x^{i}\not =g$, it's easy to see that $V_g=\{0,1\}^{k'}$ and thus, Range$(f_M)$ under 
$M\in_R\M$ is a uniform distribution on $L(k',n)$. 
Therefore,
\begin{align*}
&\Pr_{\substack{g\in_R\{0,1\};M\in_R\M;\\a_1,a_2,...,a_{m}\in_R\{0,1\}^{k'};\forall j\in[m], b_j=\inner{a_j,x}}}{[B(f_M(a_1,b_1),...,f_M(a_m,b_m))=1\;\wedge \;x^i=g]}\\
&\;\;\;\;+\Pr_{\substack{g\in_R\{0,1\};M\in_R\M;\\a_1,a_2,...,a_{m}\in_R\{0,1\}^{k'};\forall j\in[m], b_j=\inner{a_j,x}}}{[B(f_M(a_1,b_1),...,f_M(a_m,b_m))=0\;\wedge \; x^i=1-g]}\\
=&\;\frac{1}{2}\left(\Pr_{S\in_RL(k'-1,n); u_1,u_2,...,u_m\in_RS}{[B(u_1,...,u_m)=1]}+\Pr_{S\in_RL(k',n); u_1,u_2,...,u_m\in_RS}{[B(u_1,...,u_m)=0]}\right)\\
=&\;\frac{1}{2}\left(1+\Pr_{S\in_RL(k',n); u_1,u_2,...,u_m\in_RS}{[B(u_1,...,u_m)=0]}-\Pr_{S\in_RL(k'-1,n); u_1,u_2,...,u_m\in_RS}{[B(u_1,...,u_m)=0]}\right)\\
\ge& \;\frac{1}{2}+\frac{s}{n} 
\end{align*}
The last inequality follows from Equation \ref{eq:succpr}.\\

Next, we prove that $x'^i=x^i$ with probability at least $(1-\frac{1}{n^2})$ using Chernoff Bound. For $o=1$ to $t$,
let $X_o=1$ if we increase $count_{x^i}$ in the $o^{th}$ iteration of Steps \ref{itm:2} to \ref{itm:4} for calculating $x^i$, and 0 otherwise.
From the previous argument, we know that $\Ex(X_o)\ge  \frac{1}{2}+\frac{s}{n} $. As, $\{X_o\}_{o\in[t]}$ are independent 
random variables. 
\begin{align*}
\Pr[x'^i\not =x^i]=\Pr\left[\sum_{o=1}^t{X_o}\le\frac{t}{2}\right]&\le \Pr\left[\sum_{o=1}^t{X_o}-\Ex({\sum_{o=1}^t{X_o}})\le-\frac{ts}{n}\right]\\
&\le e^{-\frac{t}{4}(\frac{s}{n})^2}\le \frac{1}{n^2}
\end{align*}\\

Claim \ref{claim:succpr} just follows from union bound, that is, 
\begin{align*}
\Pr[x'\not= x]\le \sum_{i=1}^{k'}\Pr[x'^i\not= x^i]\le k'\left(\frac{1}{n^2}\right)\le \frac{1}{n}
\end{align*}

\end{proof}
Using $P$, we construct a set of branching programs one for each possible set of guesses $g$ 
and linear maps $M$. Let $P\left[\{g^i_o\}_{i\in[k'],o\in[t]},\{M^i_o\}_{i\in[k'],o\in[t]}\right]$ represent a branching program
that executes $P$ with $g^i_o$ as the guess for $x^i$ and $M^i_o$ as the linear map in the $o^{th}$ iteration of Step \ref{itm:2} to \ref{itm:4}
for calculating $x^i$.\\

$P\left[\{g^i_o\}_{i\in[k'],o\in[t]},\{M^i_o\}_{i\in[k'],o\in[t]}\right]$ can run $B$ on modified samples in Step \ref{itm:3} using the same width as $B$,
as after fixing the linear map, each modified sample depends only on a single sample seen by $P$. And because a branching program is 
a non-uniform model of computation, $P\left[\{g^i_o\}_{i\in[k'],o\in[t]},\{M^i_o\}_{i\in[k'],o\in[t]}\right]$ doesn't need to store the guesses
and maps. 
It does need to store $x'$, $count_0$, $count_1$ in addition to the width of $B$,
where the space for counts is reused for each $i$. Therefore, the width ($d'$) of the branching programs, based on $P$, 
is $\le d2^{k'}(2^{\log t})^2=d2^{k'}(\frac{8n^2\log n}{s^2})^2$.\\

It is easy to see that the length ($m'$) of $P\left[\{g^i_o\}_{i\in[k'],o\in[t]},\{M^i_o\}_{i\in[k'],o\in[t]}\right]$ is $mk't=mk'(\frac{8n^2\log n}{s^2})$.\\

Through Claim \ref{claim:succpr}, we know that for each $x$,
\[\Pr_{\substack{g^i_o\in_R\{0,1\};M^i_o\in_R\M;\\a_1,a_2,...,a_{m'}\in_R\{0,1\}^{k'};\forall j\in[m'], b_j=\inner{a_j,x}}}[P((a_1,b_1),...,(a_{m'},b_{m'}))=x]\ge1-\frac{1}{n} \]

Therefore, 
\[\Pr_{\substack{x\in\{0,1\}^{k'};g^i_o\in_R\{0,1\};M^i_o\in_R\M;\\a_1,a_2,...,a_{m'}\in_R\{0,1\}^{k'};\forall j\in[m'], b_j=\inner{a_j,x}}}[P((a_1,b_1),...,(a_{m'},b_{m'}))=x]\ge1-\frac{1}{n} \]

The above expression  can be rewritten as follows:
\[\Ex_{g^i_o\in_R\{0,1\};M^i_o\in_R\M}\left(\Pr_{\substack{x\in\{0,1\}^{k'};a_1,a_2,...,a_{m'}\in_R\{0,1\}^{k'};\\\forall j\in[m'], b_j=\inner{a_j,x}}}[P((a_1,b_1),...,(a_{m'},b_{m'}))=x]\right)\ge1-\frac{1}{n} \]

Therefore, there exist guesses $\{g^i_o\}_{i\in[k'],o\in[t]}$ and linear maps $\{M^i_o\}_{i\in[k'],o\in[t]}$ such that 
\[\Pr_{\substack{x\in\{0,1\}^{k'};a_1,a_2,...,a_{m'}\in_R\{0,1\}^{k'};\\\forall j\in[m'], b_j=\inner{a_j,x}}}\left[P\left[\{g^i_o\}_{i\in[k'],o\in[t]},\{M^i_o\}_{i\in[k'],o\in[t]}\right]((a_1,b_1),...,(a_{m'},b_{m'}))=x\right]\ge 1-\frac{1}{n}.\]\\

This gives us a branching program of width $d'$ and length $m'$ for parity learning with success probability at least $1-\frac{1}{n}$.\\
\end{proof}

\subsection{Proof of Theorem \ref{thm:superlinearparity}}\label{sec:supersparse}

% \subsubsection*{Proof of Theorem \ref{thm:superlinearparity}}
\begin{proof} We prove through reduction to Lemma \ref{lem:product}.
Let $B$ be the 
branching program with width $d$ and length $m$ that distinguishes random linear equations of 
sparsity $k$ on $n$ variables from satisfiable ones, with success probability $(\frac{1}{2}+s)$.
%for at least $(1-\frac{1}{n^2})$ fraction of the satisfying assignments. 

Using $B$, we show that there exists a branching program for learning $x\in \{0,1\}^{n+1}$ from random
linear equations $(a,b)=(a^1,a^2,...,a^{n+1},b)\in\{0,1\}^{n+2}$, where $\forall i\in[n+1],\; a^i=1$ with probability 
$\frac{k'}{n+1}$ and $b=\inner{a,x}$, with 
width $d2^{n+1}(\frac{16\log n}{s^2})^2$, length $n^2\left(m+16\log(\frac{n}{s})\right)\left(\frac{16\log n}{s^2}\right)$ 
and success probability $1-\frac{1}{n}$. 
Here $k'=\frac{k(n+1)}{n}$. Let $n'=n+1$.
Hence, Lemma \ref{lem:product} implies that either $d2^{n+1}(\frac{16\log n}{s^2})^2=2^{\Omega(n'k')}=2^{\Omega(nk)}$
or $n^2\left(m+16\log(\frac{n}{s})\right)\left(\frac{16\log n}{s^2}\right)=2^{\Omega(k')}=2^{\Omega(k)}$. 
Assuming $s\ge 2^{-c_4k}$ (where $c_4$ is small enough constant),
$k\ge c_5\log n, c_6$ where $c_5, c_6$ are large enough constants, we get that $d=2^{\Omega(nk)}$ or $m=2^{\Omega(k)}$.
Therefore, if $B$ has success probability of at least $\frac{1}{2}+2^{-ck}$ (for small enough constant $c$), 
$B$ has width $2^{\Omega(nk)}$ or length $2^{\Omega(k)}$.\\

%Let $A$ be the set of $y\in\{0,1\}^n$ for which $L$ can distinguish between $D_0$ and $D_1(y)$ with probability at least 
%$\frac{1}{2}+s$. Let $A'=\{x\in\{0,1\}^{n+1}\mid \forall i\in[n+1] x^{-i}\in A\}$. Recall that $x^{-i}$ is $x$ with $i^{th}$
%coordinate deleted. Each $y\notin A$ can lead to at most $2n$ different values of $x\in\{0,1\}^{n+1}$ that do not lie in $A'$.
%Therefore, 
%\[\frac{|A'|}{2^{n+1}}\ge 1-\frac{2n\cdot (2^n-|A|)}{2^{n+1}}\ge 1-\frac{1}{n}\] 

Next, we construct a randomized learning algorithm $P$ that learns $x$ from random sparse linear equations. 
Reiterating, the problem is as follows: a secret $x\in\{0,1\}^{n'}$ is chosen uniformly at random, 
the learner wants to learn $x$ from random linear equations of sparsity $k'$ over $\F_2$, that is, 
$(a,b)=(a^1,a^2,...,a^{n'},b)$ where $\forall i\in[n'],\; a^i=1$ with probability $\frac{k'}{n'}=\frac{k}{n}$ and 
$b=\inner{a,x}$ ($\inner{a,x}$ is the inner product of $a$ and $x$ modulo 2). 
$P$ uses $B$ to guess each bit of $x$ one by one as follows:
\begin{enumerate}
\item For $i\in \{1,2,...,n'\}$, do Steps \ref{itm2:start} to \ref{itm2:end}.
\item\label{itm2:start} Initiate $count_0=0, count_1=0$. These keep counts for the number of times the following algorithm outputs 
$0, 1$ respectively as the guess for $x^i$. 
\item \label{itm2:2} Pick $g$ to be 0 with probability $\frac{1}{2}$ and 1 with probability $\frac{1}{2}$. $g$ is a guess for $x^i$. 
Pick a random $y\in\{0,1\}^n$.
Consider the following map $f_y:\{0,1\}^{n+2}\rightarrow\{0,1\}^{n+1}$ defined as $f_y(a,b)=(a^{-i},b+ga^i+\inner{a^{-i},y})$ 
($a\in\{0,1\}^{n+1},b\in\{0,1\}$).

\item \label{itm2:3} $P$ uses the branching program $B$ to check if it's guess $g$ is correct. 
For the next $\mm=\frac{n}{k}\left(m+16\log(\frac{n}{s})\right)$ samples $(a_1,b_1),(a_2,b_2),...,(a_{\mm},b_{\mm})$, $\forall j\in [\mm]$, if $a_j^i=1$, 
$P$ feeds $f_y(a_j,b_j)$ as $B$'s next sample. If $a_j^i=0$, with probability $\frac{k}{n-k}$, 
$P$ feeds $f_y(a_j,b_j)$ as $B$'s next sample, otherwise, with probability $1-\frac{k}{n-k}$,
$P$ throws away the sample. We will show that with high probability (at least $1-(e^{-\frac{m}{4}})\frac{s^4}{n^4}$),
$B$ is run over at least $m$ samples (the probability is over the randomness of the samples). If $P$ feeds less than 
$m$ samples to $B$, halt the procedure and output $0^{n'}$ as the guess for $x$. 

\item  \label{itm2:4} If $B$ outputs $0$, then increase $count_{1-g}$ by 1, otherwise, increase $count_{g}$ by 1.
In the discussions below, we will see that we increase the count for $x^i$ with probability at least $(\frac{1}{2}+s)$
when the procedure didn't halt. 

\item\label{itm2:end} Repeat steps \ref{itm2:2} to \ref{itm2:4} for $t=\frac{16\log n}{s^2}$ times. If $count_0> count_1$, set $x'^i=0$ and 
store, else set $x'^i=1$ and store. As we will see below, $x'^i=x^i$ with probability at least $(1-\frac{1}{n^4})$.
\item Output $x'$ as the guess for $x$. 
\end{enumerate}

%Before, we calculate the memory and number of samples needed by $P$, let's calculate its success probability.
\begin{claim}\label{claim:succprsparse} For all $x\in\{0,1\}^{n'}$, if the secret is $x$, $P$ outputs $x'=x$ with probability at least $(1-\frac{1}{n})$.
\end{claim}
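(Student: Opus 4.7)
The plan is to follow the template of the proof of \cref{claim:succpr}, with one new wrinkle: the rejection-sampling step that converts the biased indicator $a^i\sim\mathrm{Ber}(k/n)$ into an unbiased bit conditional on being kept. I would proceed in four steps: (i) verify the distribution of the samples fed to $B$ after the transformation $f_y$ and the rejection; (ii) bound the halting probability per iteration; (iii) lower-bound the per-iteration success probability; (iv) apply Chernoff and a union bound.

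For step (i), I would first observe that a sample is kept with total probability $\frac{k}{n}+(1-\frac{k}{n})\cdot\frac{k}{n-k}=\frac{2k}{n}$ and that, conditional on being kept, $a^i$ is uniform on $\{0,1\}$ (both conditional probabilities equal $(k/n)/(2k/n)=1/2$) and independent of $a^{-i}$, while $a^{-i}$ retains its product $\mathrm{Ber}(k/n)^{\otimes n}$ distribution since the keeping rule depends only on $a^i$. A direct computation shows $b+ga^i+\inner{a^{-i},y}=\inner{a^{-i},\,x^{-i}+y}+(x^i+g)a^i$. Thus, when $g=x^i$ the fed sample is $(a^{-i},\inner{a^{-i},x^{-i}+y})$, which (since $y$ is independent and uniform, so $x^{-i}+y$ is uniform in $\{0,1\}^n$) is distributed exactly as an i.i.d.\ sample from $D_1(x')$ with $x'\in\{0,1\}^n$ uniform; when $g\neq x^i$ the last coordinate picks up the extra independent uniform bit $a^i$, making the fed sample distributed as $D_0$.

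For step (ii), the number of kept samples out of $\mm=(n/k)(m+16\log(n/s))$ is $\mathrm{Binomial}(\mm,2k/n)$ with mean $2(m+16\log(n/s))$; a standard Chernoff bound should yield $\Pr[\text{fewer than }m\text{ kept}]\le e^{-m/4}\cdot s^4/n^4$ after absorbing constants. For step (iii), conditional on the iteration not halting, $B$ sees $\ge m$ i.i.d.\ samples from the correct distribution, so by the hypothesis \eqref{eq:succprsparse} the probability of incrementing $count_{x^i}$ in a single iteration is
\[
\tfrac{1}{2}\Pr_{u_1,\ldots,u_m\sim D_1(\text{uniform secret})}[B(u_1,\ldots,u_m)=1]+\tfrac{1}{2}\Pr_{u_1,\ldots,u_m\sim D_0}[B(u_1,\ldots,u_m)=0]\ \geq\ \tfrac{1}{2}+s.
\]

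For step (iv), I would let $X_o=1$ if $count_{x^i}$ is incremented in round $o$ and apply Hoeffding to $\sum_{o=1}^t X_o$: conditional on no halt for bit $i$, the $X_o$ are i.i.d.\ with $\Ex[X_o]\ge 1/2+s$, so $\Pr[\sum_{o=1}^t X_o\le t/2]\le\exp(-\Omega(ts^2))\le 1/n^8$ for $t=16\log n/s^2$. A union bound over $i\in[n']$ and over all $n't$ iterations yields $\Pr[x'\neq x]\le n't\cdot e^{-m/4}s^4/n^4+n'/n^8\le 1/n$. The main obstacle is step (i): one must carefully track the product structure of the transformed samples after rejection to simultaneously guarantee that $a^i$ becomes uniform, that $a^{-i}$ retains its product $\mathrm{Ber}(k/n)^{\otimes n}$ law, and that the two are independent — this is precisely what makes the $g\neq x^i$ case genuinely produce $D_0$ samples rather than a biased variant on which $B$ has no guaranteed advantage.
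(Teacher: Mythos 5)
Your proposal is correct and follows essentially the same route as the paper's proof: rejection-sampling to make $a^i$ uniform while preserving the product $\mathrm{Ber}(k/n)$ law on $a^{-i}$, the identity $b+ga^i+\inner{a^{-i},y}=\inner{a^{-i},x^{-i}+y}+(x^i+g)a^i$ to map the $g=x^i$ / $g\neq x^i$ cases to $D_1(\text{uniform secret})$ / $D_0$, then Chernoff bounds for halting and for the majority vote, followed by a union bound over bits and iterations. The only deviation is a slightly aggressive $1/n^8$ tail constant where the paper uses $1/n^4$; either suffices for the final $1-1/n$ bound.
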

Here, the probability is over the samples, all the random guesses $g$ in Step \ref{itm2:2} and the random $n$-bit vectors $y$ in Step \ref{itm2:2}.
\begin{proof}
The procedure halts when $P$ does not feed at least $m$ samples to $B$ in Step \ref{itm2:3}. 
\begin{align*}
\Pr_{(a,b)\sim D_1(x)}[(a,b) \text{ is used to generate $B$'s next sample} ]\\
=\Pr_{(a,b)\sim D_1(x)}[a^i=1]+\frac{k}{n-k}\left(\Pr_{(a,b)\sim D_1(x)}[a^i=0]\right)=\frac{2k}{n}
\end{align*}
After seeing $\mm$ samples, $P$ feeds $2\left(m+16\log (\frac{n}{s})\right)$ samples to $B$ in expectation. 
Therefore, using Chernoff bound, the probability that the procedure halts in Step \ref{itm2:3} is bounded by
$e^{-\frac{m+16\log (\frac{n}{s})}{4}}=e^{-\frac{m}{4}}(\frac{s}{n})^4$.

The probability that a single iteration of Steps \ref{itm2:2} to \ref{itm2:4} increases the counter for $x^i$ is
the probability that $B$ outputs $1$ when $x^i=g$ and 0 when $x^i=1-g$. The samples seen by $P$ ($(a,b)=(a^1,...,a^{n'},b)$)
are drawn such that $\forall h\in[n'],\; a^h=1$ with probability $\frac{k}{n}$ and 0 otherwise and $b=\inner{a,x}$.

It's easy to see that the distribution over the samples conditioned on them being fed to $B$ is as follows:
$\forall h\in[n'] : h\not =i,\; a^h=1$ with probability $\frac{k}{n}$ and 0 otherwise, 
$a^i=1$ or $0$ with probability $\frac{1}{2}$ each and $b=\inner{a,x}$.
We show that when $x^i=g$, $\forall y$, the distribution over the samples $f_y(a,b)\in\{0,1\}^{n+1}$, 
that $B$ sees, is equivalent to
$D_1(x^{-i}+y)$ whereas when $x^i=1-g$,  the distribution over the samples $f_y(a,b)$, that $B$ sees, is equivalent to
$D_0$.

When $x^i=g$, $\forall h\in[n]$, $f_y(a,b)^h=(a^{-i})^h$ which is 1 with probability $\frac{k}{n}$ and 0 otherwise. 
$f_y(a,b)^{n+1}=\inner{a,x}+ga^i+\inner{a^{-i},y}=\inner{a^{-i},x^{-i}}+(x^i+g)a^i+\inner{a^{-i},y}=\inner{a^{-i},x^{-i}+y}$. 
This is equivalent to the distribution 
$D_1(x^{-i}+y)$.

When $x^i=1-g$, $\forall h\in[n]$, $f_y(a,b)^h=(a^{-i})^h$ which is 1 with probability $\frac{k}{n}$ and 0 otherwise. 
$f_y(a,b)^{n+1}=\inner{a,x}+ga^i+\inner{a^{-i},y}=\inner{a^{-i},x^{-i}}+(x^i+g)a^i+\inner{a^{-i},y}=\inner{a^{-i},x^{-i}+y}+a^i$. 
As $a^i$ is 0 and 1 with $\frac{1}{2}$
and is independent of $a^{-i}$, this distribution is equivalent to the distribution $D_0$.

Therefore, 
\begin{align*}
&\Pr_{\substack{g\in_R\{0,1\}; y\in_R\{0,1\}^n; \\(a_1,b_1),(a_2,b_2),...,(a_{\mm},b_{\mm})\sim D_1(x)}}{[B \text{ outputs } 1\;\wedge \;x^i=g]}\\
&\;\;\;\;\;\;\;\;\;
+\Pr_{\substack{g\in_R\{0,1\}; y\in_R\{0,1\}^n; \\(a_1,b_1),(a_2,b_2),...,(a_{\mm},b_{\mm})\sim D_1(x)}}{[B \text{ outputs } 0\;\wedge \; x^i=1-g]}\\
=&\;\frac{1}{2}\left(\Pr_{ y\in_R\{0,1\}^n; u_1,u_2,...,u_m\sim D_1(x^{-i}+y)}{[B(u_1,...,u_m)=1]}+\Pr_{y\in_R\{0,1\}^n;u_1,u_2,...,u_m\sim D_0}{[B(u_1,...,u_m)=0]}\right)\\
=&\;\frac{1}{2}\left(\Pr_{ y'\in_R\{0,1\}^n; u_1,u_2,...,u_m\sim D_1(y')}{[B(u_1,...,u_m)=1]}+\Pr_{u_1,u_2,...,u_m\sim D_0}{[B(u_1,...,u_m)=0]}\right)\\
\ge& \;\frac{1}{2}+s
\end{align*}
The last inequality follows from Equation \ref{eq:succprsparse}.

Next, we prove that $x'^i=x^i$ with probability at least $(1-\frac{1}{n^4})$ using Chernoff Bound 
when the procedure does not halt. For $o=1$ to $t$,
let $X_o=1$ if we increase $count_{x^i}$ in the $o^{th}$ iteration of Steps \ref{itm2:2} to \ref{itm2:4} 
for calculating $x^i$, and 0 otherwise.
From the previous argument, we know that $\Ex(X_o)\ge  \frac{1}{2}+s $. As, $\{X_o\}_{o\in[t]}$ are independent 
random variables. 
\begin{align*}
\Pr[x'^i\not =x^i]=\Pr\left[\sum_{o=1}^t{X_o}\le\frac{t}{2}\right]&\le \Pr\left[\sum_{o=1}^t{X_o}-\Ex({\sum_{o=1}^t{X_o}})\le-ts\right]\\
&\le e^{-\frac{t}{4}s^2}\le \frac{1}{n^4}
\end{align*}

Claim \ref{claim:succprsparse} just follows from union bound (for $n$ greater than a large enough constant), that is, 
\begin{align*}
\Pr[x'\not= x]\le \sum_{i=1}^{n'}\Pr[x'^i\not= x^i]+\Pr[\text{procedure halts}]\le n'\left(\frac{1}{n^4}+te^{-\frac{m}{4}}\frac{s^4}{n^4}\right)\le \frac{1}{n}
\end{align*}
\end{proof}

Using $P$, we construct a set of branching programs one for each possible set of guesses $g$, $n$-bit vectors $y$ and choice
of throwing away the samples. 
Let $P\left[\{g^i_o\}_{i\in[n'],o\in[t]},\{y^i_o\}_{i\in[n'],o\in[t]},\{c^i_{o,q}\}_{i\in[n'],o\in[t],q\in[\mm]}\}\right]$ represent a branching program
that executes $P$ with $g^i_o$ as the guess for $x^i$ and $y^i_o$ as the seed for the map $f$ 
in the $o^{th}$ iteration of Step \ref{itm2:2} to \ref{itm2:4}
for calculating $x^i$. And $\forall\; i,o,q$, if $c^i_{o,q}=1$, then $P$ throws away whenever $P$ needs to decide whether to throw 
away the $q^{th}$ sample in the the $o^{th}$ iteration of Step \ref{itm2:2} to \ref{itm2:4}
for calculating $x^i$ and does not throw away if $c^i_{o,q}=0$. (Note that, in the learning algorithm $P$, $c^i_{o,q}$ is drawn from 
$Ber(1-\frac{k}{n-k})$)\\

$P\left[\{g^i_o\}_{i\in[n'],o\in[t]},\{y^i_o\}_{i\in[n'],o\in[t]},\{c^i_{o,q}\}_{i\in[n'],o\in[t],q\in[\mm]}\}\right]$ can run $B$ on 
modified samples in Step \ref{itm2:3} using the same width as $B$
as each modified sample depends only on a single sample seen by $P$. 
And because a branching program is 
a non-uniform model of computation, $P\left[\{g^i_o\}_{i\in[n'],o\in[t]},\{y^i_o\}_{i\in[n'],o\in[t]},\{c^i_{o,q}\}_{i\in[n'],o\in[t],q\in[\mm]}\}\right]$ 
doesn't need to store the guesses, random vectors $y$ and choices. 
It does need to store $x'$, $count_0$, $count_1$ in addition to the width of $B$,
where the space for counts is reused for each $i$. Therefore, the width ($d'$) of the branching programs, based on $P$, 
is $\le d2^{n'}(2^{\log t})^2=d2^{n'}(\frac{16\log n}{s^2})^2$.\\

It is easy to see that the length ($m'$) of $P\left[\{g^i_o\}_{i\in[n'],o\in[t]},\{y^i_o\}_{i\in[n'],o\in[t]},\{c^i_{o,q}\}_{i\in[n'],o\in[t],q\in[\mm]}\}\right]$ 
is \[\mm n't\le n^2\left(m+16\log(\frac{n}{s})\right)\left(\frac{16\log n}{s^2}\right).\]

Through Claim \ref{claim:succprsparse}, we know that for all $x\in \{0,1\}^{n'}$,
\[\Pr_{\substack{g^i_o;y^i_o;c^i_{o,q};\\(a_1,b_1),(a_2,b_2),...,(a_{m'},b_{m'})\sim D_1(x)}}[P((a_1,b_1),...,(a_{m'},b_{m'}))=x]\ge1-\frac{1}{n} \]

Therefore, 
\begin{align*}
\Pr_{\substack{g^i_o;y^i_o;c^i_{o,q};x\in\{0,1\}^{n'};\\(a_1,b_1),(a_2,b_2),...,(a_{m'},b_{m'})\sim D_1(x)}}
[P((a_1,b_1),...,(a_{m'},b_{m'}))=x]&\ge 1-\frac{1}{n}
\end{align*}

The above expression  can be rewritten as follows:
\[\Ex_{g^i_o;y^i_o;c^i_{o,q}}\left(\Pr_{x\in\{0,1\}^{n'};(a_1,b_1),(a_2,b_2),...,(a_{m'},b_{m'})\sim D_1(x)}
[P((a_1,b_1),...,(a_{m'},b_{m'}))=x]\right)\ge1-\frac{1}{n} \]

Therefore, there exists guesses $\{g^i_o\}$, $n$-bit vectors $\{y^i_o\}$, and choices $\{c^i_{o,q}\}$ such that 
\begin{align*}
\Pr_{x\in\{0,1\}^{n'};(a_1,b_1),...,(a_{m'},b_{m'})\sim D_1(x)}
\left[P\left[\{g^i_o\},\{y^i_o\},\{c^i_{o,q}\}\right]((a_1,b_1),...,(a_{m'},b_{m'}))=x\right]
\ge 1-\frac{1}{n}
\end{align*}

This gives us a branching program of width $d'$ and length $m'$ for learning from random linear equations of sparsity $k$ with success probability at least $1-\frac{1}{n}$.\\

\end{proof}

We now complete the proof of Lemma \ref{lem:product}.

\begin{proof}
Let $L$ be the branching program that learns $x\in\{0,1\}^n$ from $m$ random linear equations over $\F_2$ of type 
$(a,b)=(a^1,a^2,...,a^n,b)\in\{0,1\}^{n+1}$, where $\forall i\in[n],\; a^i=1$ with probability $\frac{k}{n}$ and $b=\inner{a,x}$, 
with success probability $s$. The success probability is over $x$ being uniformly picked at random from $\{0,1\}^n$ and over 
the $m$ equations.\\

First, we show that $L$ learns $x\in\{0,1\}^n$, with success 
probability at least $s-2m\cdot2^{-\frac{k}{8}}$, when the samples $(a,b)=(a^1,a^2,...,a^n,b)\in\{0,1\}^{n+1}$ are 
drawn from the same distribution but 
conditioned on the number of 1s in $a$ lying in the interval $[\frac{k}{2},2k]$. 

Let $p$ be the probability that number of 1s in $a=(a^1,a^2,...,a^n)$ are less that $\frac{k}{2}$ or 
greater than $2k$, when each $a^{i}$ is drawn independently from Ber$(\frac{k}{n})$. 
Using Chernoff Bound, it's easy to see that $p\le e^{-\frac{k}{8}}+e^{-\frac{k}{3}}\le 2^{-\frac{k}{8}}$ 
($k$ is greater than a large enough constant).
 
Let $P_0^x$ represent the distribution over $(n+1)$-length vectors $(a,b)=(a^1,a^2,...,a^n,b)$ 
where $\forall i\in[n],\; a^i=1$ with probability $\frac{k}{n}$ and $b=\inner{a,x}$. Let $P_1^x$ represent
 the distribution over $(n+1)$-length vectors $(a,b)=(a^1,a^2,...,a^n,b)$ 
where $\forall i\in[n],\; a^i=1$ with probability $\frac{k}{n}$ but conditioned on the number of 1s in $a$ being at least $\frac{k}{2}$
and at most $2k$, and $b=\inner{a,x}$.

\begin{align*}
&\left|\Pr_{x\in_R\{0,1\}^n,u_1,...,u_m\sim P_0^x}[L(u_1,...,u_m)=x]-\Pr_{x\in_R\{0,1\}^n,u_1,...,u_m\sim P_1^x}[L(u_1,...,u_m)=x]\right|\\
&\le m\cdot \max_x\left\{\sum_{(a,b)\in\{0,1\}^{n+1}}{|P_0^x(a,b)-P_1^x(a,b)|}\right\} \le m\cdot 2p \le 2m\cdot 2^{-\frac{k}{8}}
\end{align*}
Therefore, $L$ learns $x\in\{0,1\}^n$ from $m$ independent samples drawn from $P_1^x$ with success probability 
at least $s-2m\cdot 2^{-\frac{k}{8}}$.\\

Next, using the techniques from \cite{KRT17,GRT18}, we show a time-space tradeoff for such a branching program $L$. 
Let $T_l=\{a\in\{0,1\}^n\; :\; \sum_{i=1}^n{a^i}=l\}$. Let $M_l:T_l\times\{0,1\}^n\rightarrow \{-1,1\}$ be the matrix 
such that $M_l(a,x)=(-1)^{\inner{a,x}}$. Lemmas from \cite{GRT18} (Lemma 5.8 and 5.10) show that $M_l$ is 
a $(c_1l,c_1n)-L_2-$Extractor with error $2^{-c_1l}$ ($c_1>0$ is a sufficiently small constant and $l\le \frac{n}{2})$. 

Let $M:\bigcup_{l\in[\frac{k}{2},k]}{T_l}\times \{0,1\}^n\rightarrow\{-1,1\}$ be the matrix such that 
$M(a,x)=(-1)^{\inner{a,x}}$. Given $x$, the learning algorithm $L$ gets samples from the distribution $P_1^x$. Let $P_1^{(x,l)}$ 
represent the distribution over $(n+1)$-length vectors $(a,b)$ 
where $a$ is drawn uniformly at random from $T_l$ and $b=\inner{a,x}$. It's easy to see that
$P_1^x$ is a convex combination of the distributions $P_1^{(x,l)}, \; l\in[\frac{k}{2},k]$. 
As $M_l$ is a $(\frac{c_1k}{2},c_1n)-L_2-$Extractor with error $2^{-\frac{c_1k}{2}}$ for all $l\in[\frac{k}{2},k]$, 
it easily follows that for every non-negative $f:\{0,1\}^n\rightarrow \R$ with $\frac{\norm{f}_2}{\norm{f}_1} \le 2^{c_1n}$ 
the set of rows $a$ in $\bigcup_{l\in[\frac{k}{2},k]}{T_l}$ with
\[
\frac{|\inner{M_a,f}|}{\norm{f}_1}
\ge 2^{-\frac{c_1k}{2}}
\]
 has probability mass of at most $2^{-\frac{c_1k}{2}}$ under the following distribution: $\forall i\in[n],\; a^i=1$ with probability $\frac{k}{n}$ but 
conditioned on the number of 1s in $a$ lying in the interval $[\frac{k}{2},2k]$ (where $a=(a^1,a^2,...,a^n)$). 
Let $M':\left(\bigcup_{l\in[\frac{k}{2},k]}{T_l}\times\{0,1\}\right)\times \{0,1\}^n\rightarrow[0,1]$ be defined as follows: $M'((a,b),x)=P_1^x(a)\cdot \one_{\inner{a,x}=b}$. (Here, $\one_{\inner{a,x}=b}=1$ if $\inner{a,x}=b$ and 0 otherwise).

The above mentioned property of $M$ implies that $M'$ is a $(\frac{c_1k}{2},c_1n,1)-L_2-$Extractor with error $2^{-\frac{c_1k}{2}}$ according to the definition in \cite{GRT18} (Definition 6.1). Thus, a theorem from \cite{GRT18} (Theorem 9) allows us to prove that any branching program that learns
$x$ through independent samples drawn from $P_1^x$, with success probability $2^{-c_2 k}$ requires either memory of size 
$c_2nk$ or $2^{c_2 k}$ samples. Here, $c_2$ is a small enough constant and $k\le \frac{n}{4}$.\\

This proves the lemma as we already showed that $L$ learns $x$ through independent samples drawn from $P_1^x$, 
with success probability $s-2m\cdot 2^{-\frac{ k}{8}}$. Therefore, for $s\ge 2^{-c_3k}$, $m\le 2^{c_3k}$, $L$ should have memory of
size at least $c_2n k$ (where $c_3$ is a small enough constant compared to $c_2$).

\end{proof}

\section{Sample-Memory Tradeoffs for Resilient Local PRGs}
\label{sec:sublinear}

In this section, we prove our lower bound against memory bounded algorithms for distinguishing between streaming outputs of Goldreich's pseudorandom generator and perfectly random bits. 

Before stating and proving our result in detail, we set up some notation and definitions that will be convenient for us in this section. 

\subsection{Formal Setup}

A $k$-ary \emph{predicate} $P$ is a Boolean function $P:\zo^k \rightarrow \zo$. 
Let $ \sum_{\alpha \subseteq [k]} \hat{P}(\alpha) \chi_{\alpha}$ be the Fourier polynomial for $(-1)^P$ ($(-1)^P(x)=(-1)^{P(x)}$). 
$P$ is said to be $t$\emph{-resilient} if $t$ is the maximum positive integer such that $\hat{P}(\alpha) = 0$ whenever $|\alpha|< t$. 
In particular, the parity function $\inner{\alpha,x}$ is $|\alpha|$-resilient. Here,  $\chi_{\alpha}:\zo^k\rightarrow \{-1,1\}$ is such that 
$\chi_\alpha(x)=(-1)^{\inner{\alpha,x}}$.
% for any $\alpha \subseteq [k]$. 

Let $[n]^{(k)}$ denote the set of all ordered $k$-tuples of exactly $k$ elements of $[n]$. 
That is, no element of $[n]$ occurs more than once in any tuple of $[n]^{(k)}$. 
For any $a \in [n]^{(k)}$, let $a^i\in [n]$ denote the element of $[n]$ appearing in the $i$th position in $a$.
Given $x \in \zo^n$ and $a \in [n]^{(k)}$, let $x^a \in \zo^k$ be defined so that $(x^a)^i = x^{a^i}$ for every $1 \leq i \leq k$.

For any $k$-ary predicate $P$, consider the problem of distinguishing between the following two distributions on $(a,b) \in [n]^{(k)}\times\{0,1\}$ where $(a,b)$ are sampled as follows: 

\begin{enumerate}
\item $D_{null}$: 1) Choose $a$ uniformly at random from $[n]^{(k)}$, and 2) choose $b$ uniformly at random and independently from $\zo$.  
\item $D_{planted}(x)$, $x\in\{0,1\}^n$: 1) Choose $a$ uniformly at random from $[n]^{(k)}$, and 2) set $b = P(x^a)$. 
\end{enumerate}

Note that  $a$ is chosen uniformly at random from $[n]^{(k)}$ in both distributions. However, while the bit $b$ is independent of $a$ in $D_{null}$, it may be correlated with $a$ in $D_{planted}$. 

A distinguisher for the above problem gets access to $m$ i.i.d. samples $u_t=(a_t,b_t), t\in[m]$ from one of $D_{null}$ and $D_{planted}(x)$ for a uniformly randomly chosen $x \in \zo^n$ and outputs either ``planted'' or ``null''.  We say that the distinguisher succeeds with probability $p$ if:

\begin{align*}
\frac{\Pr_{u_1,...,u_m\sim D_{null}}[L(u_1,...,u_m)= \text{ ``null''}]+\Pr_{\substack{x \in_R \zo^n; \\u_1,...,u_m\sim D_{planted}(x)}}[L(u_1,...,u_m)= \text{``planted''}]}{2}\ge p
\end{align*}

Note: In the language used in the previous sections, think of ``null" as being equivalent to 0 and ``planted" being equivalent to 1, that is, $D_{null}\equiv D_0$ and $D_{planted}(x)\equiv D_1(x)$. Therefore, the success probability of the distinguisher $L$ can be written as 
\begin{equation}\label{eq:succprexactpred}
\frac{\Pr_{u_1,...,u_m\sim D_{0}}[L(u_1,...,u_m)=0]+\Pr_{x\in_R  \zo^n; u_1,...,u_m\sim D_{1}(x)}[L(u_1,...,u_m)= 1]}{2}\ge p
\end{equation}

In particular, if $x \in \zo^n$ is ``revealed'' to a distinguishing algorithm, then it is easy to use $\Theta(\log(1/\epsilon))$ samples and constant width branching program to distinguish correctly with probability at least $1-\epsilon$ between $D_{null}$ and $D_{planted}$. 

\subsection{Main Result}
The main result of this section is the following sample-memory trade-off for any distinguisher:
\begin{theorem}\label{thm:pred}
Let $P$ be a $t$-resilient $k$-ary predicate. Let $0<\epsilon<1-3\frac{\log 24}{\log n}$ and $k< n/c$. Suppose there's an algorithm that distinguishes between $D_{null}$ and $D_{planted}$ with probability at least $1/2 +s$ and uses $<n^{\epsilon}$ memory. Then, whenever $0<t \le k< n^{\frac{(1-\epsilon)}{6}}/3$ and $s >c_1 ( \frac{n}{t})^{-(\frac{1-\eps}{36})t}$, the algorithm requires $ (\frac{n}{t})^{(\frac{1-\eps}{36})t}$ samples. Here, $c$ and $c_1$ are large enough constants.
\end{theorem}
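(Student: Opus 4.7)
My plan is to implement the two-step argument sketched in Section~\ref{sec:overview}. First, a hybrid argument reduces distinguishing across the full $m$-sample stream to distinguishing a single ``critical'' sample. Define intermediate distributions $H_0, \ldots, H_m$, where $H_j$ draws samples $1,\ldots,j$ from $D_{planted}(x)$ for a common random $x$ and samples $j+1,\ldots,m$ from $D_{null}$. Since $A$'s success probabilities on $H_0$ and $H_m$ differ by $2s$, a telescoping argument produces an index $j^\star$ at which $A$ distinguishes the $j^\star$-th sample being planted (with the same $x$ as earlier samples) from it being null, with advantage at least $s/m$. Conditioning on the memory state $v$ reached after the first $j^\star-1$ samples reduces the question to the posterior $\P_{x\mid v}$: one checks that the distinguisher's advantage from $v$ is bounded by $B_v := \Ex_\alpha \bigl|\sum_{x'} \P_{x\mid v}(x')(-1)^{P(x'^\alpha)}\bigr|$, because given $\alpha$, the gap between $\Pr[b = 0\mid \text{planted}]$ and $\tfrac12$ is exactly half of this inner absolute value.

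Next, one shows that typical reachable memory states $v$ have $\P_{x\mid v}$ of high min-entropy. Here I would invoke the $L_2$-extractor framework of \cite{Raz16,GRT18} applied to the matrix $M : [n]^{(k)} \times \zo^n \to \{-1,1\}$ with $M(\alpha, x) = (-1)^{P(x^\alpha)}$. Using the Fourier expansion $(-1)^{P} = \sum_{|S|\ge t}\hat P(S)\chi_S$, one decomposes $M$ into a signed sum of ``sparse-parity'' matrices of sparsity $\ge t$, each of which is an $L_2$-extractor via Lemma~\ref{lem:spafour}. The overall extraction parameters degrade only by the Fourier weight $\sum_{|S|\ge t}|\hat P(S)|\le 2^{k/2}$, and feeding the result into the black-box theorem of \cite{GRT18} yields: whenever the width is $<2^{n^\epsilon}$ and length is $<(n/t)^{\Omega((1-\epsilon)t)}$, every state $v$ reached by $A$ with non-negligible probability mass satisfies $\Norm{\P_{x\mid v}}_2/\Norm{\P_{x\mid v}}_1 \le 2^{-\Omega(n)}$, i.e.\ $\P_{x\mid v}$ has high min-entropy.

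With this in hand, the ``predicate-level'' Lemma~\ref{cl:spafourpred} takes over. For any $q$ on $\zo^n$ with low $\Norm{q}_2/\Norm{q}_1$, write
\[
\sum_{x'} q(x')(-1)^{P(x'^\alpha)} \;=\; \sum_{|S|\ge t} \hat P(S)\, \Ex_{x'\sim q}(-1)^{\inner{\beta_S(\alpha),\, x'}},
\]
where $\beta_S(\alpha)\in\zo^n$ is the indicator of $\{\alpha^i : i\in S\}$, a uniformly random sparse vector of weight $|S|$ when $\alpha$ is uniform on $[n]^{(k)}$. The standard sparse-Fourier count (Lemma~\ref{lem:spafour} applied per sparsity $\ell=|S|$) bounds the fraction of such $\beta$'s at which the inner expectation exceeds a chosen threshold; union-bounding over $|S|=t,\ldots,k$ and using $\sum_{|S|\ge t}|\hat P(S)|\le 2^{k/2}$ then gives $B_v \le (t/n)^{\Omega((1-\epsilon)t)}$ at every high-min-entropy $v$.

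Combining the two steps, the hybrid argument forces $s/m\le B_v + o(1)$ at a typical $v$, so $s \le m\cdot (t/n)^{\Omega((1-\epsilon)t)}$, which rearranges into the sample/advantage tradeoff claimed in the theorem. The hardest step, I expect, is the extractor bound in the second paragraph: the samples $(-1)^{P(x^\alpha)}$ are nonlinear in $x$, so the $L_2$-extractor analysis of \cite{GRT18} (designed for sparse parities) must be ported carefully, and the parameter restrictions $k< n^{(1-\epsilon)/6}/3$ and $\epsilon<1-3\log 24/\log n$ should emerge from balancing the Fourier weight $2^{k/2}$ and the number of small sets $|S|\ge t$ against the extractor error allowed by that analysis. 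Everything downstream---the hybrid argument and the predicate-level lemma---is then essentially routine.
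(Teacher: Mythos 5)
Your first step (the hybrid argument producing a critical index $j^\star$ with advantage $\ge s/m$, and reducing the question to the posterior $\P_{x\mid v}$ at the memory state $v$) is correct and matches the paper exactly. But your second step misidentifies where the difficulty lies and proposes heavy machinery where a trivial argument suffices, and your third step takes a lossy route that the paper deliberately avoids.

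On the second step: to control $\P_{x\mid v}$ you propose porting the $L_2$-extractor progress lemma of \cite{Raz16,GRT18} to the nonlinear matrix $M(\alpha,x)=(-1)^{P(x^\alpha)}$, and you flag this as the hardest part. That machinery is designed to track how the posterior evolves across \emph{every} layer of a long branching program; it is overkill here, and it is not clear it applies as stated (the extractor framework is built around edge labels of the form $(a,M(a,x))$, whereas in the hybrid $H_{j^\star}$ the samples after the critical one are drawn from $D_{null}$ and carry no information about $x$). The paper sidesteps all of this: it only needs a statement about the single layer $j^\star-1$. Partition the vertices in that layer into those reached with probability $\ge d^{-1}d_t^{-1}$ (call the set $L$, where $d$ is the width and $d_t=(n/t)^t$) and the rest. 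The rest contribute at most $1/d_t$ to the distinguishing advantage, simply because there are at most $d$ vertices, each with reaching probability $<d^{-1}d_t^{-1}$. For every $v\in L$, Bayes' rule immediately gives $\P_{x\mid v}(x')=\Pr[x=x'\wedge v]/\Pr[v]\le 2^{-n}/(d^{-1}d_t^{-1})=d\,d_t\,2^{-n}$, i.e.\ min-entropy at least $n-\log d-\log d_t$. No extractor argument is needed, and no bound on the length is needed to obtain this; the length only enters via the factor $m$ in the hybrid step.

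On the third step: you write the Fourier expansion correctly, but then propose a union bound over sets $S$ weighted by $\sum_{|S|\ge t}|\hat P(S)|\le 2^{k/2}$. This $L_1$ Fourier weight can be exponential in $k$, and when $t\ll k$ (which the theorem permits, since only $0<t\le k$ and $k<n^{(1-\epsilon)/6}/3$ are assumed) the target bound $(n/t)^{-\Theta((1-\epsilon)t)}$ is far too small to absorb it, so the union-bound route does not close. The paper instead bounds the \emph{second moment} $\Ex_{a}\bigl(\sum_{x'}\P_{x\mid v}(x')(-1)^{P(x'^a)}\bigr)^2$, applies Cauchy--Schwarz to pull out $\sum_{|S|\ge t}\hat P(S)^2$, and then Parseval reduces that factor to $1$, leaving only the $\binom{k}{\ell}$ counting terms (Eq.~\eqref{eq:51}--\eqref{eq:57n}), which are controlled precisely by the restriction $k<n^{(1-\epsilon)/6}/3$. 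That is where the parameter restriction actually comes from, not from balancing against $2^{k/2}$ as you speculated.

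In short: your skeleton (hybrid, then posterior min-entropy, then resilience/Fourier) is the right shape, but the posterior-min-entropy step should be the trivial one-layer Bayes-and-counting argument rather than an $L_2$-extractor port, and the Fourier step must go through the second moment with Cauchy--Schwarz/Parseval, not through the $L_1$ Fourier mass.
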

Note that when $k$ is a constant, this theorem gives a sample-memory tradeoff even for $\Omega(n)$ memory.

Our argument yields a slightly better quantitative lower bound for the special case when $P$ is the parity function, that is, $P(x)=(\sum_{i=1}^k x^i )\mod 2$. We will represent this function by $\pari$.

\begin{theorem}\label{thm:sublinearparity}
Let $0<\epsilon<1-3\frac{\log 24}{\log n}$ and $P$ be the parity predicate $\pari$ on $k=t$ bits. Suppose there's an algorithm that distinguishes between $D_{null}$ and $D_{planted}$ with probability at least $1/2 +s$ and uses $<n^{\epsilon}$ memory. Then for $k\le  n/c$\footnote{$c$ is a large enough constant}, if $s >  3 (\frac{n}{k})^{-(\frac{1-\eps}{18})k}$, the algorithm requires $ (\frac{n}{k})^{(\frac{1-\eps}{18})k}$ samples.
\end{theorem}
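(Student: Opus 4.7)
I follow the outline in Section~\ref{sec:overview}: use a hybrid argument to reduce the distinguishing advantage to the expected Fourier mass of the ROBP's posterior on the seed at some intermediate layer, then combine the memory bound with a sparse-parity $L_2$-extractor bound to show this Fourier mass is tiny.

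\textbf{Step 1 (Hybrid reduction).} Define hybrids $H_0, H_1, \ldots, H_m$ where under $H_j$ the first $j$ samples are drawn from $D_{planted}(x)$ for a single uniform $x\in\{0,1\}^n$ and the remaining $m-j$ samples are drawn from $D_{null}$. If the ROBP separates $H_0$ from $H_m$ with advantage $s$, then by the triangle inequality there exists $j \in [m]$ such that the ROBP separates $H_{j-1}$ from $H_j$ with advantage at least $s/m$. Because $H_{j-1}$ and $H_j$ agree on every coordinate except the $j$-th (and the tail samples are i.i.d.\ uniform), this per-hybrid advantage is at most
\[
\Ex_v\,\mathrm{TV}\!\left(D_{null},\,(a,\inner{a,x}\bmod 2)_{a\in_R [n]^{(k)},\,x\sim \P_{x|v}}\right) = \tfrac{1}{2}\Ex_v\Ex_{a\in_R [n]^{(k)}}\left|\Ex_{x\sim \P_{x|v}}(-1)^{\inner{a,x}}\right|,
\]
where $\P_{x|v}$ is the posterior on the seed conditioned on the ROBP reaching state $v$ at layer $j-1$ under the planted process. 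So the task reduces to bounding the inner Fourier expectation for ``most'' reached states.

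\textbf{Step 2 (Posterior regularity and Fourier extraction).} The memory bound gives at most $2^{n^\eps}$ states per layer, so a counting/union-bound argument shows that, except on a collection of states reached with total probability $\le 2^{-\Omega(n^\eps)}$, the posterior density $f(x) := 2^n\,\P_{x|v}(x)$ on the uniform measure satisfies $\|f\|_2/\|f\|_1 \le 2^{O(n^\eps)}$. For such $f$, a ``transposed'' version of Lemma~\ref{lem:spafour} (equivalently, the statement that the sparse-parity matrix $M : [n]^{(k)} \times \{0,1\}^n \to \{-1,1\}$, $M(a,x) = (-1)^{\inner{a,x}}$, is an $L_2$-extractor with the appropriate parameters) yields that for all but a $2^{-\Omega((1-\eps)k \log(n/k))}$ fraction of $a \in [n]^{(k)}$, $|\langle M_a, f\rangle|/\|f\|_1 \le (k/n)^{(1-\eps)k/18}$. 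Averaging over $a$ and $v$ bounds the ``good'' contribution to the expectation in Step~1 by $(k/n)^{(1-\eps)k/18}$, up to a negligible additive error from the bad states.

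\textbf{Step 3 (Conclusion) and main obstacle.} Summing the per-hybrid TV over $m$ hybrids, the total advantage is at most $m \cdot (k/n)^{(1-\eps)k/18}$ plus lower-order terms, strictly smaller than $s$ whenever $m \le (n/k)^{(1-\eps)k/18}$ and $s > 3(n/k)^{-(1-\eps)k/18}$, contradicting the hypothesized advantage. The main technical obstacle lies in Step~2: a naive counting argument only controls worst-case posteriors pointwise, but the $L_2$-extractor bound must hold for typical posteriors in the planted process. Adapting the ``significant-vertex'' machinery of \cite{Raz16, KRT17, GRT18}, which was originally tuned for learning, to the distinguishing setting requires recalibrating the potential function: the observed bit $b$ is not determined by the seed in the null case, so the martingale argument bounding the probability of reaching a posterior with a large sparse Fourier coefficient must be re-developed to track only the planted contribution. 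A secondary technical wrinkle is the ordered-tuple sampling over $[n]^{(k)}$ (distinct entries) rather than i.i.d.\ Bernoulli-$k/n$ indicators; this will require either proving a direct analogue of Lemma~\ref{lem:spafour} for the $[n]^{(k)}$ measure or reducing from the latter via a standard coupling.
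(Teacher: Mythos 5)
Your plan's skeleton --- hybrid argument, then a regularity claim about the posterior at the relevant layer, then a sparse-Fourier bound --- matches the paper's proof.  But there are two concrete problems, and the thing you flag as the ``main technical obstacle'' is a red herring.

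First, the regularity step does not need any of the significant-vertex / martingale machinery from \cite{Raz16,KRT17,GRT18}, and the fact that $b$ is unbiased in the null case is not an issue.  The paper conditions on reaching a vertex $v_1$ at layer $j'-1$, and at that point all $j'-1$ samples in the relevant hybrid came from $D_{planted}(x)$, so the posterior $\P_{x\mid v_1}$ is the usual Bayesian posterior.  The only tool used is Bayes' rule plus the width bound: if $v_1$ is reached with probability at least $\tau$, then for every $x'$,
$\P_{x\mid v_1}(x') = \Pr[x=x'\wedge v_1]/\Pr[v_1]\le 2^{-n}/\tau$, i.e.\ the posterior has min-entropy at least $n-\log(1/\tau)$.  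There is no potential function or martingale to re-develop; what makes the paper's distinguishing argument go through is precisely that it does \emph{not} need to track progress across layers --- the posterior at a single layer plus a Fourier second-moment estimate suffice.  Similarly, your worry about ordered tuples in $[n]^{(k)}$ versus Bernoulli indicators is moot for the parity predicate: parity is symmetric, so $\Ex_{a\in[n]^{(k)}}(\cdot)^2 = \Ex_{\bar a\in T_k}(\cdot)^2$, which is exactly what Lemma~\ref{lem:spafour} controls.

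Second, and this is where your proposal as written actually breaks, your choice of threshold (so that the ``bad'' states have total mass $2^{-\Omega(n^\eps)}$) is quantitatively insufficient.  The per-hybrid advantage is bounded by (Fourier term) $+$ (mass of bad states), and in the regime $s>3(n/k)^{-(1-\eps)k/18}$ and $m\le(n/k)^{(1-\eps)k/18}$ you must show this sum is $O((n/k)^{-(1-\eps)k/9})$.  But $2^{-\Omega(n^\eps)}$ can dwarf $(n/k)^{-(1-\eps)k/9}$ (e.g.\ when $k\sim n^{\eps}$, so $k\log(n/k)\gg n^{\eps}$), so the bad-state term alone kills the bound.  The paper fixes this by taking the threshold to be $(d\cdot d_t)^{-1}$ with $d_t=(n/k)^k$; then the bad set has total probability at most $1/d_t=(k/n)^k$, which is much smaller than the target, while the resulting min-entropy bound $n-n^\eps-k\log(n/k)$ is still strong enough to feed into the Fourier second-moment estimate in Claim~\ref{cl:spafour}.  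So the threshold must be co-tuned with the Fourier bound; a generic $2^{-\Omega(n^\eps)}$ does not suffice.

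Finally, a smaller issue: you invoke a generic $L_2$-extractor, but an off-the-shelf $(c_1k,c_1n)$-$L_2$-extractor with error $2^{-c_1k}$ only yields a bound that decays like $2^{-\Theta(k)}$, whereas the theorem needs $(k/n)^{\Theta((1-\eps)k)}$, which is much smaller for small $k$.  The paper gets this sharper rate directly from the tail bound on $|\B_{T_l}(\delta)|$ in Lemma~\ref{lem:spafour}, plugged into a one-page computation (Claim~\ref{cl:spafour}) rather than an extractor black box.
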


We prove both Theorem \ref{thm:pred} and \ref{thm:sublinearparity} (in Section \ref{section:proofssublinear}) via the same sequence of steps except for a certain quantitative bound presented in Lemma \ref{cl:spafourpred}. In the next subsection, we give an algorithm that takes $\widetilde{O}(n^{\eps}+k)k$ memory and $\widetilde{O}(n^{(1-\eps)k})$ samples, and distinguishes between $D_{null}$ and $D_{planted}$ for any predicate $P$, with probability $99/100$. Thus, the lower bounds are almost tight up to constant factors in the exponent for the parity predicate. The question of whether there exists an algorithm that runs in $O(n^{(1-\eps)t})$ samples and $O(n^\eps)$ memory, and distinguishes between $D_{null}$ and $D_{planted}$ with high probability, for $t$-resilient predicates $P$, remains open.

\subsection{Tightness of the Lower Bound}\label{section:tightness3}
In this section, we observe that there exists an algorithm $A$ that takes $O((n^{\eps}+k)\cdot k\log n)$ memory and $O(n^{(1-\eps)k}\cdot (n^{\eps}+k))$ samples, and distinguishes between $D_{null}$ and $D_{planted}$ for any predicate $P$, with probability $99/100$ (for $n^\eps> 10$).  

$A$ runs over $4n^{(1-\eps)k}\cdot (n^\eps+k)$ samples and stores the first $2(n^\eps +k)$ samples $(a,b)\in [n]^{(k)}\times\{0,1\}$ such that $a^i\le n^\eps+k, \forall i\in[k]$, that is, the bit $b$ depends only on the first $n^\eps+k$ bits of $x$ under the distribution $D_{planted}(x)$. If $A$ encounters less than $2(n^\eps +k)$ samples of the above mentioned form, $A$ outputs 1 (``planted"). Otherwise, $A$ goes over all the possibilities of the first $n^\eps+k$ bits of $x$ ($2^{n^\eps+k}$ possibilities in total) and checks if it could have generated the stored samples. If there exists a $y\in\{0,1\}^{n^\eps+k}$ that generated the stored samples, $A$ outputs 1 (``planted"), otherwise $A$ outputs 0. 

It's easy to see that $A$ uses $m=4n^{(1-\eps)k}\cdot (n^\eps+k)$ samples and at most $2(n^{\eps}+k)\cdot k\log n$ memory (as it takes only $k\log n$ memory to store a sample). Next, we calculate the probability of success. Let $Z_j$ be a random variable as follows: $Z_j=1$ if the $j^{th}$ sample $(a_j,b_j)$ is such that $a_j^i\le n^\eps+k, \forall i\in[k]$ and 0 otherwise. 
\[\Pr[Z_j=1]=\frac{|[n^\eps+k]^{(k)}|}{|[n]^{(k)}|}\ge n^{-(1-\eps)k}\]
And $\Ex[\sum_{j=1}^{m}Z_j]=4(n^\eps+k)$. By Chernoff bound, $\Pr[\sum_j Z_j< 2(n^\eps +k)]\le e^{-\frac{4(n^\eps+k)}{8}}\le\frac{1}{100}$. Therefore, the probability that $A$ stores $2(n^\eps +k)$ samples is at least $99/100$. It's easy to see that $A$ always outputs 1 when the samples are generated from $D_{planted}(x)$ for any $x$. 

The probability that $A$ outputs 1, given that it stored $2(n^\eps +k)$ samples, when the samples are generate from $D_{null}$ is equal to the probability that there exists a $y\in\{0,1\}^{n^\eps+k}$ that could have generated the stored samples. Let $(a'_1,b'_1),...,(a'_{2(n^\eps +k)},b'_{2(n^\eps +k)})$ be the stored samples. There are at most $2^{n^\eps +k}$ sequences of $b'_1,...,b'_{2(n^\eps +k)}$ generated by some $y$ given $\{a'_j\}_{j\in[2(n^\eps +k)]}$. As, under $D_{null}$, $b$ is chosen uniformly at random from $\{0,1\}$, the probability that there exists $y\in\{0,1\}^{n^\eps+k}$ that could have generated the stored samples is at most $\frac{2^{n^\eps +k}}{2^{2(n^\eps +k)}}=2^{-(n^\eps +k)}\le 1/100$. Hence, the probability of success is at least $99/100$.

\subsection{Proof of Theorems \ref{thm:pred} and \ref{thm:sublinearparity}}\label{section:proofssublinear}
Fix a $t$-resilient $k$-ary predicate $P$. Let $B$ be a branching program of width $d$ and length $m$ that has a success probability of $p = 1/2 +s$ for distinguishing between $D_{null}$ and $D_{planted}(x)$ ($x$ is uniformly distributed over $\zo^n$) for the predicate $P$. 

We first use hybrid argument to obtain that the branching program must have a non-trivial probability of distinguishing with a single sample. Towards this, define $H_j(x)$ to be the distribution over $m$ samples where the first $j$ samples are drawn from $D_{planted}(x)$ and the remaining $m-j$ samples are from $D_{null}$.

From Equation~\eqref{eq:succprexactpred} for $B$, we obtain: 
\begin{align*}
&\frac{1}{2}\left(\Pr_{u_1,...,u_m\sim D_{null}}[B(u_1,...,u_m)=0]+1-\Pr_{\substack{x\in_R\{0,1\}^n;\\u_1,...,u_m\sim D_{planted}(x)}}[B(u_1,...,u_m)=0]\right)\ge \frac{1}{2}+s\\
\implies& \Pr_{\substack{x\in_R\{0,1\}^n;\\u_1,...,u_m\sim D_{null}}}[B(u_1,...,u_m)=0]-\Pr_{\substack{x\in_R\{0,1\}^n;\\u_1,...,u_m\sim D_{planted}(x)}}[B(u_1,...,u_m)=0]\ge 2s
\end{align*}

The above expression can be written as a telescopic sum of the distinguishing probabilities over the $m+1$ hybrids, $H_{j}(x), j\in\{0,...,m\}$.
% We can decompose the expression above into $m+1$ hybrids where each consecutive hybrid changes the distribution of exactly 1 sample as follows: for all $j\in \{0,1,...,m\}$, let $H_j(x)$ be the distribution over $m$ samples where the first $j$ samples are drawn from $D_1(x)$ and the next $m-j$ samples are from $D_0$. Then, the above expression can be rewritten as 

\begin{align*}
&\Pr_{\substack{x\in_R\{0,1\}^n;\\(u_1,...,u_m)\sim H_0(x)}}[B(u_1,...,u_m)=0]-\Pr_{\substack{x\in_R\{0,1\}^n;\\(u_1,...,u_m)\sim H_m(x)}}[B(u_1,...,u_m)=0]\\
=& \sum_{j=1}^m{\left[\Pr_{\substack{x\in_R\{0,1\}^n;\\(u_1,...,u_m)\sim H_{j-1}(x)}}[B(u_1,...,u_m)=0]-\Pr_{\substack{x\in_R\{0,1\}^n;\\(u_1,...,u_m)\sim H_j(x)}}[B(u_1,...,u_m)=0]\right]}
\end{align*}
Thus, there is a $j'\in \{1,...,m\}$ such that 
\begin{equation}\label{eq:hypred}
\Pr_{\substack{x\in_R\{0,1\}^n;\\(u_1,...,u_m)\sim H_{j'-1}(x)}}[B(u_1,...,u_m)=0]-\Pr_{\substack{x\in_R\{0,1\}^n;\\(u_1,...,u_m)\sim H_{j'}(x)}}[B(u_1,...,u_m)=0]\ge \frac{2s}{m}
\end{equation}

Next, we will show that for $0<\eps<1-3\frac{\log(24)}{\log n}$, $d\le 2^{n^\eps}$ and $0<t\le k<n^{(1-\eps)/6}/3, n/c$\footnote{$c$ is a large enough constant.}, $B$ distinguishes between the hybrids $H_{j'-1}(x)$ and $H_{j'}(x)$ with probability of at most $p_t \le c_1(\frac{n}{t})^{-(\frac{1-\eps}{18})t}$ (where $c_1$ is a large enough constant). Therefore, $\frac{2s}{m}\le c_1(\frac{n}{t})^{-(\frac{1-\eps}{18})t}$.

When $P = \pari$ ($t=k$), we will achieve better bounds. We show that for $0<k<n/c$,  $B$ distinguishes between the hybrids $H_{j'-1}(x)$ and $H_{j'}(x)$ with probability of at most $p'_k \le 5(\frac{n}{k})^{-(\frac{1-\eps}{9})k}$. Therefore, $\frac{2s}{m}\le 5(\frac{n}{k})^{-(\frac{1-\eps}{9})k}$.

Theorems \ref{thm:pred} and \ref{thm:sublinearparity} follows through following observations:

\begin{enumerate}
\item For $0<\eps<1-3\frac{\log(24)}{\log n}$ and $0<t\le k<\{n^{(1-\eps)/6}/3, n/c$\}, as $\frac{2s}{m}\le c_1(\frac{n}{t})^{-(\frac{1-\eps}{18})t}$, if $m\le (\frac{n}{t})^{(\frac{1-\eps}{36})t}$, then $s\le \frac{c_1}{2}\cdot (\frac{n}{t})^{-(\frac{1-\eps}{36})t}$.
\item When $P = \pari$, $t=k$, for $0<\eps<1-3\frac{\log(24)}{\log n}$ and $0< k< n/c$, as $\frac{2s}{m}\le 5(\frac{n}{k})^{-(\frac{1-\eps}{9})k}$, if $m\le (\frac{n}{k})^{(\frac{1-\eps}{18})k}$, then $s\le \frac{5}{2}\cdot (\frac{n}{k})^{-(\frac{1-\eps}{18})k}$.
\end{enumerate}

Now, we are ready to prove the upper bound on the capabilities of $B$ in distinguishing between the hybrids $H_{j'-1}(x)$ and $H_{j'}(x)$.

For $0<t\le k<n/c$, let $d_t=(\frac{n}{t})^t$. \\

Let $\L_i$ be the set of vertices in the layer-$i$ of the branching program $B$. Let $E_j(v)$ represent the event and $P_j(v)$ be the probability of reaching the vertex $v$ of the branching program $B$ when $x$ is picked uniformly at random from $\{0,1\}^n$ and the $m$ samples are drawn from $H_j(x)$. Let $E_j(0)$ represents the event of $B$ outputting $0$ when $x$ is picked uniformly at random from $\{0,1\}^n$ and the $m$ samples are drawn from $H_j(x)$. Let $v_1$ and $v_2$ be two vertices in the branching program such that $v_1$ is in an earlier layer than $v_2$. 
Let 
%$E_j(v_2\mid v_1)$ represents the event and 
$P_j(v_2\mid v_1)$ be the probability of reaching the vertex $v_2$ of the branching program $B$ given that the computational path also reached the vertex $v_1$, when $x$ is picked uniformly at random from $\{0,1\}^n$ and the $m$ samples are drawn from $H_j(x)$.
Let 
%$E_j(0\mid v)$ represents the event and 
$Q_j(v)$ be the probability of the branching program outputting 0 given that it reached vertex $v$, when $x$ is picked uniformly at random from $\{0,1\}^n$ and the $m$ samples are drawn from $H_j(x)$. Note that if $v$ is a vertex in layer-$i$ of the branching program such that $i\ge j$, $Q_j(v)$ does not change with the choice of $x$ as all the samples after the $j^{th}$ layer are independently drawn from $D_0$ ($D_{null}$).

Then, we can rewrite the expression on left hand side of Equation \ref{eq:hypred} as
%\begin{align*}
%\sum_{v_1\in \L_{j'-1},v_2\in\L_{j'}}\Pr[E_{j'-1}(0\mid v_2)\wedge E_{j'-1}(v_2\mid v_1)\wedge E_{j'-1}(v_1)]-\\
%\sum_{v_1\in \L_{j'-1},v_2\in\L_{j'}}\Pr[E_{j'}(0\mid v_2)\wedge E_{j'}(v_2\mid v_1)\wedge E_{j'}(v_1)]
%\end{align*}
\begin{align*}
\sum_{v_1\in \L_{j'-1},v_2\in\L_{j'}}\Pr[E_{j'-1}(0)\wedge E_{j'-1}(v_2)\wedge E_{j'-1}(v_1)]-\\
\sum_{v_1\in \L_{j'-1},v_2\in\L_{j'}}\Pr[E_{j'}(0)\wedge E_{j'}(v_2)\wedge E_{j'}(v_1)]
\end{align*}
 
For a vertex $v_2$ in the $j'^{th}$ layer, conditioned on the event $E_{j'}(v_2)$, event $E_{j'}(0)$ is independent of the event $E_{j'}(v_1)$. Similarly, conditioned on $E_{j'-1}(v_2)$, event $E_{j'-1}(0)$ is independent of the event $E_{j'-1}(v_1)$. And as the last $m-j'$ samples are drawn from the same distribution $D_0$, $Q_{j'}(v_2)=Q_{j'-1}(v_2)$.

For a vertex $v_1$ in the $(j'-1)^{th}$ layer, both $P_{j'}(v_1)$ and $P_{j'-1}(v_1)$ are equal to the probability of reaching the vertex $v_1$, when $x$ is picked uniformly at random from $\{0,1\}^n$ and the first $j'-1$ samples are drawn from $D_1(x)$. 

Hence, the expression can be rewritten as

\begin{align*}
&\sum_{v_1\in \L_{j'-1},v_2\in\L_{j'}}P_{j'-1}(v_1)\cdot P_{j'-1}(v_2\mid v_1)\cdot Q_{j'-1}(v_2)-\\
&~~~~~~~~~\sum_{v_1\in \L_{j'-1},v_2\in\L_{j'}}P_{j'}(v_1)\cdot P_{j'}(v_2\mid v_1)\cdot Q_{j'}(v_2)\\
=&\sum_{v_1\in \L_{j'-1},v_2\in\L_{j'}}P_{j'-1}(v_1)\cdot Q_{j'}(v_2)\cdot \left(P_{j'-1}(v_2\mid v_1)-P_{j'}(v_2\mid v_1)\right)\\
=&\sum_{v_2\in\L_{j'}}{Q_{j'}(v_2)\left(\sum_{v_1\in \L_{j'-1}}P_{j'-1}(v_1)\cdot \left(P_{j'-1}(v_2\mid v_1)-P_{j'}(v_2\mid v_1)\right)\right)}
\end{align*}

Let $L$ be the set of vertices in the layer-$(j'-1)$ of the branching program $B$ such that $\forall v_1\in L, \;P_{j'-1}(v_1)\ge d^{-1}d_t^{-1}$. Then, the above expression, can be rewritten as 

\begin{align}
\nonumber
=&\sum_{v_2\in\L_{j'}}{Q_{j'}(v_2)\left(\sum_{v_1\in L}P_{j'-1}(v_1)\cdot \left(P_{j'-1}(v_2\mid v_1)-P_{j'}(v_2\mid v_1)\right)\right)}\\
\nonumber
&+~~~~~~~~~\sum_{v_2\in\L_{j'}}{Q_{j'}(v_2)\left(\sum_{v_1\not \in L}P_{j'-1}(v_1)\cdot \left(P_{j'-1}(v_2\mid v_1)-P_{j'}(v_2\mid v_1)\right)\right)}\\
\nonumber
\le&\sum_{v_2\in\L_{j'}}{Q_{j'}(v_2)\left(\sum_{v_1\in L}P_{j'-1}(v_1)\cdot \left(P_{j'-1}(v_2\mid v_1)-P_{j'}(v_2\mid v_1)\right)\right)}\\
\nonumber
&+~~~~~~~~~\sum_{v_1\not \in L}P_{j'-1}(v_1)\left(\sum_{v_2\in\L_{j'}}{Q_{j'}(v_2)\cdot P_{j'-1}(v_2\mid v_1)}\right)\\
\label{eq:expred}
\le &\sum_{v_2\in\L_{j'}}{Q_{j'}(v_2)\left(\sum_{v_1\in L}P_{j'-1}(v_1)\cdot \left(P_{j'-1}(v_2\mid v_1)-P_{j'}(v_2\mid v_1)\right)\right)}+\frac{1}{d_t}
\end{align}

The last inequality follows from the fact that the width of the branching program is $d$, for a vertex $v_1$ not in $L$, $P_{j'-1}(v_1)$ is at most $\frac{1}{d\cdot d_t}$ and that the summation of the expression over $v_2$ can be at most $1$. 

Let $\P_{x|E_{j'}(v)}$ be the probability distribution of the random variable $x$ conditioned on the event $E_{j'}(v)$. For notational easiness, we will also denote this distribution by $\P_{x|v}$. We claim that for all $v_1\in L$, the distribution $\P_{x|v_1}$ has min-entropy of at least $(n-\log (d)-\log(d_t))$, that is, $\forall x'\in \{0,1\}^n, \; \P_{x|v_1}(x')\le d\cdot d_t\cdot 2^{-n}$.

The proof is as follows: as $x$ is chosen uniformly at random from $\{0,1\}^n$, for all $x'$, 
\[\sum_{v_1\in L}\Pr[x=x'\wedge E_{j'}(v_1)]\le \Pr[x=x']\le 2^{-n}\]

This implies,
\begin{align}
\nonumber
&\sum_{v_1\in L}\Pr[x=x'\mid E_{j'}(v_1)]\cdot P_{j'}(v_1)\le 2^{-n}\\
\nonumber
\implies & \Pr[x=x'\mid E_{j'}(v_1)]\le 2^{-n}\cdot \frac{1}{P_{j'}(v_1)}=2^{-n}\cdot \frac{1}{P_{j'-1}(v_1)}\le  2^{-n}\cdot \frac{1}{d^{-1}d_t^{-1}}\\
\label{eq:minentpred}
\implies &\P_{x|v_1}(x')\le d\cdot d_t\cdot 2^{-n}
\end{align}

Let $S_{(v_1,v_2)}$ be the set of all the labels $(a,b)\in [n]^{(k)}\times\{0,1\}$ such that the edge labeled $(a,b)$ at vertex $v_1$, goes into vertex $v_2$ in the next layer. Let $\P_{(a_{j'},b_{j'})|v_1}$ represent the distribution of the $j'$ sample conditioned on the event $E_{j'}(v_1)$, when $x$ is chosen uniformly at random from $\{0,1\}^n$ and the $m$ samples are chosen from $H_{j'}(x)$. As the $j'^{th}$ sample is drawn from the distribution $D_1(x)$, for every $a\in [n]^{(k)}$,
\begin{align}
\nonumber
\P_{(a_{j'},b_{j'})|v_1}(a,b)=&\sum_{x'\in\{0,1\}^n}\Pr[x=x'\mid E_{j'}(v_1)]\cdot \Pr[(a_{j'},b_{j'})=(a,b)\mid x=x']\\
\label{eq:samplepred}
=&\frac{1}{|\Sk|}\cdot \left(\sum_{x':P(x'^a)=b }\P_{x|v_1}(x')\right)
\end{align}
This is because $a$ is chosen uniformly from $\Sk$ and conditioned on $x$, the $j'$th sample is independent of $v_1$.

When the samples are drawn from $H_{j'-1}(x)$, the $j'^{th}$ sample is drawn from $D_0$ and is independent of the event $E_{j'-1}(v_1)$. Therefore, the probability of $j'^{th}$ sample being $(a,b)$ in this hybrid is $\frac{1}{2|\Sk|}$ for all $a\in \Sk, b\in\{0,1\}$. 

For every $v_1\in L$, we can rewrite the expression $\left(P_{j'-1}(v_2\mid v_1)-P_{j'}(v_2\mid v_1)\right)$ as follows:
\begin{align}
\nonumber
P_{j'-1}(v_2\mid v_1)-P_{j'}(v_2\mid v_1)&=\sum_{(a,b)\in S_{(v_1,v_2)}}\left(\P_{(a_{j'},b_{j'})|E_{j'-1}(v_1)}(a,b)-\P_{(a_{j'},b_{j'})|v_1}(a,b)\right)\\
\nonumber
&=\sum_{(a,b)\in S_{(v_1,v_2)}}\left(\frac{1}{2|\Sk|}-\P_{(a_{j'},b_{j'})|v_1}(a,b)\right)\\
\nonumber
&=\sum_{(a,b)\in S_{(v_1,v_2)}}\left(\frac{1}{2|\Sk|}-\frac{1}{|\Sk|}\cdot \left(\sum_{x': P(x'^a)=b }\P_{x|v_1}(x')\right)\right)\\
\label{eq:biaspred}
&=\frac{1}{|\Sk|}\cdot\left(\sum_{(a,b)\in S_{(v_1,v_2)}}\left(\frac{1}{2}-\sum_{x': P(x'^a)=b }\P_{x|v_1}(x')\right)\right)
\end{align}

Next, we will show that $\forall v_1\in L$, the above expression $\left|\frac{1}{2}-\sum_{x': P(x'^a)=b }\P_{x|v_1}(x')\right|$ is small for most samples $(a,b)$ (Lemma \ref{cl:spafourpred}). Define $T_l=\{\a\in\{0,1\}^n : \sum_{i=1}^n\a^i=l\}$ for $l\in \N$.\\

\begin{lemma}\label{cl:spafourpred}
For all $v_1\in L$, $0<\eps<1-3\frac{\log 24}{\log n}$, $0< t\le k<\{\frac{n}{c},n^{\frac{(1-\eps)}{6}}/3\}$, 
$$\left|\sum_{x'}\P_{x|v_1}(x')\cdot (-1)^{P(x'^a)}\right|\le c_2 n^{-(\frac{1-\eps}{18})t}$$
for all but at most $c_2 n^{-(\frac{1-\eps}{18})t}$ fraction of $a\in\Sk$ (recall that P is a $t$-resilient $k$-ary predicate).

For all $v_1\in L$, $0<\eps<1-3\frac{\log 24}{\log n}$, $0< k<\frac{n}{c}$, 
$$\left|\sum_{x'}\P_{x|v_1}(x')\cdot (-1)^{\pari(x'^a)}\right|\le 2\left(\frac{n}{k}\right)^{-(\frac{1-\eps}{9})k}$$
for all but at most $2(\frac{n}{k})^{-(\frac{1-\eps}{9})k}$ fraction of $a\in \Sk$.

 Here, $c$ and $c_2$ are large enough constants.
\end{lemma}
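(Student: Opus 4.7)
The plan is to Fourier-expand $(-1)^P$ over $\{0,1\}^k$ and use $t$-resilience. Substituting $y=x'^a$ into $(-1)^{P(y)}=\sum_{|\alpha|\ge t}\hat P(\alpha)\chi_\alpha(y)$ and using $\chi_\alpha(x'^a)=(-1)^{\inner{\tilde\alpha(a),x'}}$ (where $\tilde\alpha(a)\in\{0,1\}^n$ lifts $\alpha\in\{0,1\}^k$ to coordinates $a^1,\dots,a^k$), I obtain
\[
\sum_{x'}\P_{x|v_1}(x')\,(-1)^{P(x'^a)} \;=\; \sum_{|\alpha|\ge t}\hat P(\alpha)\,\psi(\tilde\alpha(a)),
\]
where $\psi(\beta):=\Ex_{x\sim \P_{x|v_1}}(-1)^{\inner{\beta,x}}$. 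Cauchy--Schwarz together with $\sum_\alpha \hat P(\alpha)^2=1$ reduces the problem to showing that $W(a):=\sum_{|\alpha|\ge t,\,\alpha\in\{0,1\}^k}\psi(\tilde\alpha(a))^2$ is at most $\Delta^2:=c_2^2\,n^{-(1-\epsilon)t/9}$ for all but a $\Delta$-fraction of $a$'s.

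To bound $W(a)$ on average, note that for fixed $\alpha$ of weight $s$ the lift $\tilde\alpha(a)$ is uniform on $T_s$, so $\Ex_a W(a) = \sum_{s=t}^k \binom{k}{s}\,\Ex_{\beta\in T_s}\psi(\beta)^2$. Expanding the square and introducing $\hat U_s(y):=\Ex_{\beta\in T_s}(-1)^{\inner{\beta,y}}$, I would split the sum based on whether $x\oplus x'$ lies in the bad set $\B_{T_s}(\delta_s)$ of Lemma~\ref{lem:spafour}:
\[
\Ex_{\beta\in T_s}\psi(\beta)^2 \;\le\; \delta_s \;+\; |\B_{T_s}(\delta_s)|\cdot\sum_x \P_{x|v_1}(x)^2 \;\le\; \delta_s + 2e^{-\delta_s^{2/s}n/8}\cdot d\,d_t,
\]
where the last inequality uses the min-entropy bound \eqref{eq:minentpred}, $\sum_x\P_{x|v_1}(x)^2\le d\,d_t/2^n$, together with Lemma~\ref{lem:spafour}. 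Choosing $\delta_s$ so that $\delta_s^{2/s}n/8 \approx 2\log(d\,d_t)$ makes the exponential term negligible and leaves $\delta_s \le (C\log(d\,d_t)/n)^{s/2}$ for an absolute constant $C$.

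Plugging back and using $\binom{k}{s}\le k^s$, the sum over $s\ge t$ is geometric and dominated by $s=t$; in the regime $k<n^{(1-\epsilon)/6}/3$ and $d\le 2^{n^\epsilon}$ this yields $\Ex_a W(a) \le O((Ck/n^{(1-\epsilon)/2})^t) \le O(n^{-(1-\epsilon)t/3})$. Markov's inequality then shows that $W(a)>\Delta^2$ for at most an $O(n^{-(1-\epsilon)t/6})$ fraction of $a$'s, which is $\le c_2\,\Delta$ once the condition $\epsilon<1-3\log 24/\log n$ pins down the implied constants. For the parity case $P=\pari$ only $\alpha=\mathbf 1^k$ contributes, so the identical second-moment argument applied directly to $|\psi(e_a)|^2$ over uniform $e_a\in T_k$ avoids the $\binom{k}{s}$ union-bound losses and yields the sharper exponent $(1-\epsilon)/9$ claimed there.

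The main obstacle is that the naive Parseval--Markov estimate $\Ex_{\beta\in T_s}\psi(\beta)^2\le d\,d_t/\binom{n}{s}$ is useless here, since the memory bound $d\le 2^{n^\epsilon}$ can dwarf even $\binom{n}{t}$. What saves the proof is the strong sub-Gaussian tail of Lemma~\ref{lem:spafour} (exponential in $\delta^{2/s}n$ rather than polynomial in $1/\delta^2$), exploited through the collision identity $\Ex_{\beta\in T_s}\psi(\beta)^2 = \sum_y \P_{x|v_1}*\P_{x|v_1}(y)\,\hat U_s(y)$ and the bound $\norm{\P_{x|v_1}*\P_{x|v_1}}_\infty\le\sum_x \P_{x|v_1}(x)^2$. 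The restrictions $k<n^{(1-\epsilon)/6}/3$ and $t\le k$ in the statement are precisely what is needed so that the combinatorial factor $k^s(k/n)^s$ in the moment bound does not overwhelm the tail decay at the dominant weight $s=t$.
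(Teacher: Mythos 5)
Your proposal mirrors the paper's proof almost step for step: Fourier-expand $(-1)^P$ and use $t$-resilience to start the sum at weight $t$; apply Cauchy--Schwarz with Parseval to reduce to the quantity $W(a)=\sum_{|\alpha|\ge t}\psi(\tilde\alpha(a))^2$; pass to $\Ex_a W(a)=\sum_{s\ge t}\binom{k}{s}\Ex_{\beta\in T_s}\psi(\beta)^2$; bound each weight-$s$ term by the collision identity and a split on the bad set $\B_{T_s}(\delta_s)$ of Lemma~\ref{lem:spafour} plus the min-entropy bound~\eqref{eq:minentpred}; and finish with a Markov-type argument (and the symmetry observation in the parity case). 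The paper's Claim~\ref{cl:spafour} is exactly your per-weight second-moment bound, and your $\sum_x\P_{x|v_1}(x)^2$ versus the paper's $\max_x\P_{x|v_1}(x)$ is an inessential difference.

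The one substantive gap is in your choice of threshold. You set $\delta_s$ so that $\delta_s^{2/s}n/8\approx 2\log(d\,d_t)$, i.e., $\delta_s\approx(C\log(d\,d_t)/n)^{s/2}$, without checking the hypothesis of Lemma~\ref{lem:spafour} that $\delta_s\ge(8s/n)^{s/2}$. That hypothesis requires $C\log(d\,d_t)\gtrsim s$, and since $\log(d\,d_t)$ can be as small as $\log d_t=t\log(n/t)$ (the branching program can have $d$ close to $1$), the requirement fails whenever $s$ is much larger than $t\log(n/t)$ --- a situation that is fully within the allowed parameter range (e.g., $t=1$, $k\sim n^{(1-\eps)/6}$). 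With $\delta_s$ below the cutoff the extractor estimate does not apply, and simply raising $\delta_s$ to $(8s/n)^{s/2}$ does not immediately rescue the argument because then the exponential term $2e^{-s}d\,d_t$ need not be small when $s\ll n^\eps$. The paper sidesteps this by instead fixing $\delta_s=(n/s)^{-(1-\eps)s/3}$ and verifying both that this exceeds $(8s/n)^{s/2}$ for all $s<n/c$ with $c$ large and that the resulting exponential term is negligible, which produces the (weaker but still sufficient) intermediate bound $\Ex_a W(a)\le O(n^{-(1-\eps)t/6})$ rather than your claimed $O(n^{-(1-\eps)t/3})$; the cube-root Markov step then gives $c_2 n^{-(1-\eps)t/18}$ exactly as in the statement. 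Adopting that choice of $\delta_s$ (or otherwise verifying the extractor precondition for every $s\in[t,k]$) repairs your argument, and the rest of what you wrote goes through unchanged.
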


Before, we prove the Lemma, we prove the following claim:

\begin{claim}\label{cl:spafour}
For all $v_1\in L$, $0<\eps<1-3\frac{\log 24}{\log n}$, $0< t\le l<\frac{n}{c}$, 
$$\Ex_{\a\in T_l}\left(\sum_{x'}\P_{x|v_1}(x')\cdot (-1)^{\inner{\a,x'}}\right)^2\le 2 \left(\frac{n}{l}\right)^{-(\frac{1-\eps}{3})l}$$
\end{claim}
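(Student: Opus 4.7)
The plan is to expand the squared expectation as a double sum over the seed $x$ and reduce the problem to a Fourier-bias question on $\mathbb{F}_2^n$. Write $f(\alpha) = \sum_{x'} \P_{x|v_1}(x')(-1)^{\inner{\alpha,x'}}$, so that by linearity
\[
\Ex_{\alpha \in T_l} f(\alpha)^2 = \sum_{x',x''} \P_{x|v_1}(x')\,\P_{x|v_1}(x'')\,\Ex_{\alpha \in T_l}(-1)^{\inner{\alpha,x'+x''}}.
\]
Substituting $y = x' + x''$ and defining $c(y) = \sum_{x'}\P_{x|v_1}(x')\P_{x|v_1}(x'+y)$ and $\beta(y) = \Ex_{\alpha \in T_l}(-1)^{\inner{\alpha,y}}$, the sum becomes $\Ex_{\alpha \in T_l} f(\alpha)^2 = \sum_y c(y)\beta(y)$. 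By construction $c(y) \geq 0$ and $\sum_y c(y) = 1$; moreover, the min-entropy bound from Equation~\eqref{eq:minentpred} gives $c(y) \leq (\max_{x''}\P_{x|v_1}(x''))\cdot \sum_{x'} \P_{x|v_1}(x') \leq d\, d_t\, 2^{-n}$ pointwise.

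For a threshold $\delta$ to be fixed below, split $\{0,1\}^n$ into the ``good'' set $\{y : |\beta(y)| \leq \delta\}$ and the ``bad'' set $B$. Since the inner product over $\mathbb{F}_2$ is symmetric, $\beta(y) = \Ex_{x \in T_l}(-1)^{\inner{x,y}}$, so $B$ coincides with $\B_{T_l}(\delta)$ from Lemma~\ref{lem:spafour} after swapping dummy variables. Provided $\delta \geq (8l/n)^{l/2}$, that lemma yields $|B| \leq 2\, e^{-\delta^{2/l} n/8} \cdot 2^n$. Combining this with $|\beta(y)| \leq 1$ on $B$ and the uniform bound on $c$,
\[
\Ex_{\alpha \in T_l} f(\alpha)^2 \;\leq\; \delta \sum_{y \notin B} c(y) \;+\; \sum_{y \in B} c(y) \;\leq\; \delta \;+\; d\, d_t\, 2^{-n}\cdot |B| \;\leq\; \delta \;+\; 2\, d\, d_t\, e^{-\delta^{2/l} n/8}.
\]

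Set $\delta = (n/l)^{-(1-\eps)l/3}$. Condition (i) $\delta \geq (8l/n)^{l/2}$ reduces after taking logarithms to $(1+2\eps)\log(n/l) \geq 3\log 8 = 9$, which holds for $l \leq n/c$ with $c$ a sufficiently large absolute constant. Condition (ii) $2\, d\, d_t\, e^{-\delta^{2/l} n/8} \leq \delta$ is equivalent, after using $\log d \leq n^\eps$ and $\log d_t = t\log(n/t)$, to
\[
\tfrac{1}{8}\, n^{(1+2\eps)/3}\, l^{2(1-\eps)/3} \;\geq\; n^\eps + t\log(n/t) + \tfrac{(1-\eps)l}{3}\log(n/l) + O(1).
\]
The monotonicity of $x\mapsto x\log(n/x)$ on $[0,n/e]$ gives $t\log(n/t) \leq l\log(n/l)$ when $t \leq l \leq n/e$, so the $d_t$ term merges with the $l\log(n/l)$ contribution, and the desired inequality then follows throughout the range $t\leq l \leq n/c$ for a large enough absolute $c$.

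The only step requiring genuine care is this final parameter-balancing: the LHS grows only polynomially in $l$ with exponent $2(1-\eps)/3 < 1$, whereas the RHS contains both a floor $n^\eps$ and a term growing like $l\log n$. The crucial savings are the exponent gap $(1+2\eps)/3 - \eps = (1-\eps)/3 > 0$ (which dominates $n^\eps$ in the small-$l$ regime) and the fact that $c^{(1+2\eps)/3} \gg \log c$ for constant $c$ large enough (which dominates $l\log n$ when $l$ approaches $n/c$). Once both are verified, adding the two summands gives $\Ex_{\alpha \in T_l} f(\alpha)^2 \leq 2\delta = 2(n/l)^{-(1-\eps)l/3}$, which is exactly the bound claimed.
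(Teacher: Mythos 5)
Your proof is correct and follows essentially the same route as the paper: expand the squared bias over pairs $(x',x'')$, substitute $y=x'+x''$ to reduce to a Fourier-bias sum $\sum_y c(y)\beta(y)$, split $\zo^n$ at a threshold $\delta$ using the bad set of Lemma~\ref{lem:spafour}, bound $c(y)$ pointwise by $d\,d_t\,2^{-n}$ from the min-entropy bound, and then set $\delta=(n/l)^{-(1-\eps)l/3}$ and verify both the admissibility threshold for the lemma and the final exponent-balancing condition. The paper's own verification of condition (ii) splits $1/8=1/24+1/12$ and handles the $n^\eps$ floor via $n^{(1-\eps)/3}>24$ and the $l\log(n/l)$ term via $(n/l)^{1/3}>36\log(n/l)$, which is precisely the two-regime reasoning you sketch at the end (exponent gap $(1-\eps)/3$ for the small-$l$ regime, $c^{\Theta(1)}\gg\log c$ near $l\approx n/c$).
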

\begin{proof}
As $v_1\in L$, using Equation \ref{eq:minentpred}, we know that for all values of $x'$, $\P_{x|v_1}(x')\le 2^{n^\eps-n}\cdot d_t$. 
\begin{align*}
\Ex_{\a\in T_l}\left(\sum_{x'}\P_{x|v_1}(x')\cdot (-1)^{\inner{\a,x'}}\right)^2
=&\Ex_{\a\in T_l}{\left(\sum_{x',x''}\P_{x|v_1}(x')\cdot \P_{x|v_1}(x'')\cdot (-1)^{\inner{\a,x'+x''}}\right)}\\
=&\sum_{x',x''}\P_{x|v_1}(x')\cdot \P_{x|v_1}(x'')\cdot \Ex_{\a\in T_l} {(-1)^{\inner{\a,x'+x''}}}\\
=&\sum_{x',z}\P_{x|v_1}(x')\cdot \P_{x|v_1}(z+x')\cdot \Ex_{\a\in T_l} {(-1)^{\inner{\a,z}}}
\end{align*}
Let $\B_{T_l}(\delta)=\{\gamma\in\{0,1\}^n\mid  |\Ex_{\a\in T_l} {(-1)^{\inner{\a,\gamma}}}|>\delta\}$. Using Lemma \ref{lem:spafour} \cite{KRT17}, we know that for $1\ge \delta\ge (\frac{8l}{n})^{\frac{l}{2}}$,
\[|\B_{T_l}(\delta)|\le 2e^{-\delta^{2/l}\cdot n/8}\cdot 2^n\]

We can rewrite the expression as follows:
\begin{align*}
&\Ex_{\a\in T_l}\left(\sum_{x'}\P_{x|v_1}(x')\cdot (-1)^{\inner{\a,x'}}\right)^2
=\sum_{z\in \B_{T_l}(\delta)}\sum_{x'}\P_{x|v_1}(x')\cdot \P_{x|v_1}(z+x')\cdot \Ex_{\a\in T_l} {(-1)^{\inner{\a,z}}}\\
&~~~~~~~~~~~~~~~~~~~~~~~~~~~~~~~~~~~~~~~~~~~~~~~~~~~~~~~~~~~~~~~~~~
+\sum_{z\not\in \B_{T_l}(\delta)}\sum_{x'}\P_{x|v_1}(x')\cdot \P_{x|v_1}(z+x')\cdot \Ex_{\a\in T_l} {(-1)^{\inner{\a,z}}}\\
\le &\sum_{z\in \B_{T_l}(\delta)}\sum_{x'}\P_{x|v_1}(x')\cdot \P_{x|v_1}(z+x')\cdot |\Ex_{\a\in T_l} {(-1)^{\inner{\a,z}}}|\\
&~~~~~~~~~~~~~~~~~~~~~~~~~~~~~~~~~~~~~~~~~~~~~~~~~~~~~~~~~~~~~~~~~~
+\sum_{z\not\in \B_{T_l}(\delta)}\sum_{x'}\P_{x|v_1}(x')\cdot \P_{x|v_1}(z+x')\cdot |\Ex_{\a\in T_l} {(-1)^{\inner{\a,z}}}|\\
\le &\sum_{z\in \B_{T_l}(\delta)}\sum_{x'}\P_{x|v_1}(x')\cdot \P_{x|v_1}(z+x')
+\sum_{z\not\in \B_{T_l}(\delta)}\sum_{x'}\P_{x|v_1}(x')\cdot \P_{x|v_1}(z+x')\cdot \delta\\
\le&\sum_{z\in \B_{T_l}(\delta)}\sum_{x'}\P_{x|v_1}(x')\cdot \P_{x|v_1}(z+x')+\delta\\
\le&~~~~2^{n^\eps-n}\cdot d_t\cdot |\B_{T_l}(\delta)|+\delta\\
\le&~~~~2e^{-\delta^{2/l}\cdot n/8}\cdot 2^{n^\eps}\cdot d_t+\delta\\
\end{align*}
The second inequality follows from the definition of $\B_{T_l}(\delta)$ and the fact that $ |\Ex_{\a\in T_l} {(-1)^{\inner{\a,z}}}|\le 1$ always. The fourth inequality follows from Equation \ref{eq:minentpred} as $ \P_{x|v_1}(z+x')\le 2^{n^\eps-n}\cdot d_t, \forall x',z$. And, the last inequality follows from the bound on $|\B_{T_l}(\delta)|$.

Therefore, $\forall \delta, (\frac{8l}{n})^{\frac{l}{2}}\le \delta\le 1$, $\Ex_{\a\in T_l}\left(\sum_{x'}\P_{x|v_1}(x')\cdot (-1)^{\inner{\a,x'}}\right)^2\le 2e^{-\delta^{2/l}\cdot n/8}\cdot 2^{n^\eps}\cdot d_t+\delta$.\\

For $0<t\le l<n/c$, $d_t= (\frac{n}{t})^t\le (\frac{n}{l})^l$. Let $\delta =(\frac{n}{l})^{-(\frac{1-\eps}{3})l}$. As $l<n/c$ where $c$ is large enough constant, $(\frac{8l}{n})^{\frac{l}{2}}\le \delta$. Therefore, 
\begin{align*}
\Ex_{\a\in T_l}\left(\sum_{x'}\P_{x|v_1}(x')\cdot (-1)^{\inner{\a,x'}}\right)^2
&\le 2e^{-(\frac{n}{l})^{-(1-\eps)\frac{2}{3}}\cdot n/8}\cdot 2^{n^\eps}\cdot \left(\frac{n}{l}\right)^l+\left(\frac{n}{l}\right)^{-(\frac{1-\eps}{3})l}\\
&=  2e^{-(\frac{n}{l})^{\frac{1}{3}(1-\eps)+\eps}\cdot l/8}\cdot2^{n^\eps}\cdot \left(\frac{n}{l}\right)^l+\left(\frac{n}{l}\right)^{-(\frac{1-\eps}{3})l}\\
&\le 2^{-(\frac{n}{l})^{\frac{1}{3}(1-\eps)+\eps}\cdot l/8+n^\eps+l\log(n/l)+1}+\left(\frac{n}{l}\right)^{-(\frac{1-\eps}{3})l}\\
&\le 2^{-l\log(n/l)}+\left(\frac{n}{l}\right)^{-(\frac{1-\eps}{3})l}
\le 2 \left(\frac{n}{l}\right)^{-(\frac{1-\eps}{3})l}
\end{align*}
Here, the inequalities follow by assuming that $l< n/c$, for large enough $c$ such that $(n/l)^{1/3}> 36\log (n/l)$, and $0<\eps<1-3\frac{\log 24}{\log n}$ such that $n^{\frac{1}{3}(1-\eps)}\ge 24$. The second last inequality follows from the following calculations.
\begin{align*}
\left(\frac{n}{l}\right)^{\frac{1}{3}(1-\eps)+\eps}\cdot l/8= \frac{1}{24}n^{\frac{1}{3}(1-\eps)}n^\eps l^{\frac{2}{3}(1-\eps)} +\frac{3}{36} l\left(\frac{n}{l}\right)^{\frac{1}{3}} \left(\frac{n}{l}\right)^{\frac{2}{3}\eps}> n^\eps +3l\log(n/l)
\end{align*}

This proves the claim.
\end{proof}

\begin{proof} \textbf{of Lemma \ref{cl:spafourpred}}.
We first prove the statement for the general $t$-resilient $k$-ary predicate $P$ for $0<\eps<1-3\frac{\log 24}{\log n}$ such that $0<t\le k<\{\frac{n}{c},n^{\frac{(1-\eps)}{6}}/3\}$. Using Claim \ref{cl:spafour}, we know that for all $0<t\le l<\frac{n}{c}$, 
$$\Ex_{\a\in T_l}\left(\sum_{x'}\P_{x|v_1}(x')\cdot (-1)^{\inner{\a,x'}}\right)^2\le 2  \left(\frac{n}{l}\right)^{-(\frac{1-\eps}{3})l}$$
We consider the following expression, 
$$\Ex_{a\in \Sk}\left(\sum_{x'}\P_{x|v_1}(x')\cdot (-1)^{P(x'^a)}\right)^2$$

Substituting $(-1)^P$ with its Fourier expansion, we get 
\begin{align}
\nonumber
\Ex_{a\in \Sk}\left(\sum_{x'}\P_{x|v_1}(x')\cdot (-1)^{P(x'^a)}\right)^2
&=\Ex_{a\in \Sk}\left(\sum_{x'}\P_{x|v_1}(x')\cdot \sum_{\alpha\subseteq [k]} \hat{P}(\alpha)\chi_\alpha(x'^a)\right)^2\\
\label{eq:53}
&=\Ex_{a\in \Sk}\left(\sum_{\alpha\subseteq [k], |\alpha|\ge t}  \hat{P}(\alpha)\sum_{x'}\P_{x|v_1}(x')\cdot\chi_\alpha(x'^a)\right)^2\\
\label{eq:51}
&\le  \Ex_{a\in \Sk}\left(\sum_{\alpha\subseteq [k], |\alpha|\ge t} \hat{P}(\alpha)^2\right)\left(\sum_{\alpha\subseteq [k],|\alpha|\ge t} \left(\sum_{x'}\P_{x|v_1}(x')\cdot\chi_\alpha(x'^a)\right)^2 \right)\\
\label{eq:52}
&=  \Ex_{a\in \Sk}\sum_{\alpha\subseteq [k],|\alpha|\ge t} \left(\sum_{x'}\P_{x|v_1}(x')\cdot\chi_\alpha(x'^a)\right)^2 \\
\nonumber
&=  \sum_{j=t}^{k}\left(\sum_{\alpha\subseteq [k],|\alpha|=j}\Ex_{a\in \Sk} \left(\sum_{x'}\P_{x|v_1}(x')\cdot\chi_\alpha(x'^a)\right)^2\right) 
\end{align}

Equality \ref{eq:53} follows from the fact that $P$ is $t-$resilient. Inequality \ref{eq:51} follows from Cauchy-Schwarz. Equality \ref{eq:52} follows from Parseval's identity. Let $v(a,\alpha)$ be a $n$-bit vector defined as follows: $\forall i\in[k]$, set $v(a,\alpha)^{a^i}=1$ if only if $\alpha_i=1$ (and 0 otherwise). It's easy to see that $\chi_\alpha(x'^a)=(-1)^{\inner{x',v(a,\alpha)}}$ and $|v(a,\alpha)|=|\alpha|$. And, for a fixed $\alpha$, when $a$ is uniformly distributed over $\Sk$, $v(a,\alpha)$ is uniformly distributed over $T_{|\alpha|}$. Therefore, the above expression can be rewritten as,
\begin{align}
\nonumber
\Ex_{a\in \Sk}\left(\sum_{x'}\P_{x|v_1}(x')\cdot (-1)^{P(x'^a)}\right)^2&=  \sum_{j=t}^{k}\left(\sum_{\alpha\subseteq [k],|\alpha|=j}\Ex_{v\in T_j} \left(\sum_{x'}\P_{x|v_1}(x')\cdot(-1)^{\inner{v,x'}}\right)^2\right) \\
\label{eq:55}
&\le   2\cdot \sum_{j=t}^{k}\left(\sum_{\alpha\subseteq [k],|\alpha|=j} \left(\frac{n}{j}\right)^{-(\frac{1-\eps}{3})j} \right)\\
\nonumber
&\le2\cdot \sum_{j=t}^{k}\left({k \choose j}\cdot  \left(\frac{n}{j}\right)^{-(\frac{1-\eps}{3})j} \right)\\
\label{eq:56n}
&\le c'\cdot \sum_{j=t}^{k}\left(\left(e\frac{k}{j}\right)^j\cdot  \frac{j^{(\frac{1-\eps}{3})j}}{(n^{\frac{1-\eps}{3}})^j} \right)\\
\label{eq:57n}
&\le c'\cdot \sum_{j=t}^{k} n^{-\frac{(1-\eps)}{6}j} \le 2c' \cdot  (n^{-\frac{(1-\eps)}{6}t})
\end{align}
Inequality \ref{eq:55} follows from Claim \ref{cl:spafour}. For large enough $c'$, Inequality \ref{eq:56n} follows from Sterling's bound on factorials. As $k\le n^{\frac{1-\eps}{6}}/3$ and assuming $n^{\frac{1-\eps}{6}}\ge 2$, Inequality \ref{eq:57n} follows. Therefore,
$$\Ex_{a\in \Sk}\left(\sum_{x'}\P_{x|v_1}(x')\cdot (-1)^{P(x'^a)}\right)^2\le c'' \cdot n^{-\frac{(1-\eps)}{6}t}$$ 
for large enough constant $c''$.
From this expression, it's easy to see that 
$$\left|\sum_{x'}\P_{x|v_1}(x')\cdot (-1)^{P(x'^a)}\right|\le c_2 \cdot n^{-\frac{(1-\eps)}{18}t}$$
for all but at most $c_2 \cdot n^{-\frac{(1-\eps)}{18}t}$ fraction of $a\in\Sk$ ($c_2=c''^{1/3}$).\\

Now, we prove the lemma for the special case of $P=\pari$. As parity function is symmetric, 
\begin{align*}
\Ex_{a\in \Sk}\left(\sum_{x'}\P_{x|v_1}(x')\cdot (-1)^{\pari(x'^a)}\right)^2=\Ex_{\a\in T_k}\left(\sum_{x'}\P_{x|v_1}(x')\cdot (-1)^{\inner{\a,x'}}\right)^2
\end{align*}
Therefore, using Claim \ref{cl:spafour}, we have shown that 
$$\Ex_{a\in \Sk}\left(\sum_{x'}\P_{x|v_1}(x')\cdot (-1)^{\pari(x'^a)}\right)^2\le \delta_k=2 \left(\frac{n}{k}\right)^{-(\frac{1-\eps}{3})k}$$
for $0<k<n/c$.

From these expressions, it's easy to see that $|\sum_{x'}\P_{x|v_1}(x')\cdot (-1)^{\pari(x'^a)}|\ge (\delta_k)^{1/3}$ for at most $(\delta_k)^{1/3}$ fraction of the values of $a\in \Sk$ which completes the proof of the lemma. 

\end{proof}

Now, we come back to bounding the capabilities of $B$ to distinguishing between the $j'-1$ and $j'$ hybrids ($p_t$, $p'_t$). Substituting the expression for
$P_{j'-1}(v_2\mid v_1)-P_{j'}(v_2\mid v_1)$ obtained from Equation \ref{eq:biaspred} in Equation \ref{eq:expred}, we get that 
\begin{align}
\nonumber
&\Pr_{\substack{x\in_R\{0,1\}^n;\\(u_1,...,u_m)\sim H_{j'-1}(x)}}[B(u_1,...,u_m)=0]-\Pr_{\substack{x\in_R\{0,1\}^n;\\(u_1,...,u_m)\sim H_{j'}(x)}}[B(u_1,...,u_m)=0]\\
\nonumber
\le&  \sum_{v_2\in\L_{j'}}{Q_{j'}(v_2)\left(\sum_{v_1\in L}P_{j'-1}(v_1)\cdot \left(\frac{1}{|\Sk|}\cdot\left(\sum_{(a,b)\in S_{(v_1,v_2)}}\left(\frac{1}{2}-\sum_{x': P(x'^a)=b }\P_{x|v_1}(x')\right)\right)\right)\right)}+\frac{1}{d_t}\\
\nonumber
=&  \sum_{v_1\in L}{P_{j'-1}(v_1)\left(\sum_{v_2\in \L_{j'}}Q_{j'}(v_2)\cdot \left(\frac{1}{|\Sk|}\cdot\left(\sum_{(a,b)\in S_{(v_1,v_2)}}\left(\frac{1}{2}-\sum_{x': P(x'^a)=b }\P_{x|v_1}(x')\right)\right)\right)\right)}+\frac{1}{d_t}\\
\label{eq:1pred}
\le &\sum_{v_1\in L}{P_{j'-1}(v_1)\left(\sum_{v_2\in \L_{j'}}Q_{j'}(v_2)\cdot \left(\frac{1}{|\Sk|}\cdot\left(\sum_{(a,b)\in S_{(v_1,v_2)}}\left|\frac{1}{2}-\sum_{x': P(x'^a)=b }\P_{x|v_1}(x')\right|\right)\right)\right)}+\frac{1}{d_t}\\
\label{eq:2pred}
\le &\sum_{v_1\in L}{P_{j'-1}(v_1)\left(\frac{1}{|\Sk|}\cdot\sum_{v_2\in \L_{j'}}\sum_{(a,b)\in S_{(v_1,v_2)}}\left|\frac{1}{2}-\sum_{x': P(x'^a)=b }\P_{x|v_1}(x')\right|\right)}+\frac{1}{d_t}\\
\label{eq:3pred}
= &\sum_{v_1\in L}{P_{j'-1}(v_1)\left(\Ex_{a\in_R \Sk}\sum_{b\in\{0,1\}}\left|\frac{1}{2}-\sum_{x': P(x'^a)=b}\P_{x|v_1}(x')\right|\right)}+\frac{1}{d_t}\end{align}
Inequality \ref{eq:1pred} follows just from taking the absolute values. Inequality \ref{eq:2pred} follows from the fact that $Q_{j'}(v_2)\le 1$ for all $v_2\in \L_{j'}$. Equality \ref{eq:3pred} follows from the fact that each edge labelled by $(a,b)\in \Sk\times\{0,1\}$ goes into some vertex in the next layer $\L_{j'}$.

For the general $t-$ resilient $k$-ary predicate $P$, Lemma \ref{cl:spafourpred} showed that, for all but
 $c_2 n^{-(\frac{1-\eps}{18})t}$ fraction of $a\in\Sk$, 
$$\sum_{b\in\{0,1\}}\left|\frac{1}{2}-\sum_{x': P(x'^a)=b}\P_{x|v_1}(x')\right|\le c_2\cdot n^{-(\frac{1-\eps}{18})t}$$
As the maximum value of this expression is 1, we have shown that 
$$p_t\le \sum_{v_1\in L}{P_{j'-1}(v_1)\cdot 2c_2\cdot n^{-(\frac{1-\eps}{18})t}+\left(\frac{n}{t}\right)^{-t}\le c_1\left(\frac{n}{t}\right)^{-(\frac{1-\eps}{18})t}}$$
for large enough constant $c_1=2c_2 +1$.\\

For the special case of $P=\pari$, using a similar argument, we can show that for $0<k<n/c$, as $d_k=\left(\frac{n}{k}\right)^k$,

$$p'_k\le \sum_{v_1\in L}{P_{j'-1}(v_1)\cdot 2\cdot 2 \left(\frac{n}{k}\right)^{-(\frac{1-\eps}{9})k}+\left(\frac{n}{k}\right)^{-k}\le5\cdot \left(\frac{n}{k}\right)^{-(\frac{1-\eps}{9})k}}.$$

This completes the proofs of Theorem \ref{thm:pred} and Theorem \ref{thm:sublinearparity}.

\clearpage
\section{Acknowledgements}
We would like to thank Avishay Tal and David Woodruff for the discussions about the problem of distinguishing subspaces.
\bibliographystyle{alpha}
\bibliography{biblo,dblp,scholar,mathreview,custom,crypto}

\end{document}